\DeclareMathAlphabet\mathcal{OMS}{cmsy}{m}{n}
\SetMathAlphabet\mathcal{bold}{OMS}{cmsy}{b}{n}
\def\ve#1{\mathchoice{\mbox{\boldmath$\displaystyle\bf#1$}}
	{\mbox{\boldmath$\textstyle\bf#1$}}
	{\mbox{\boldmath$\scriptstyle\bf#1$}}
	{\mbox{\boldmath$\scriptscriptstyle\bf#1$}}}
\newcommand{\N}{\ensuremath{\mathbb{N}}}
\newcommand{\Z}{\ensuremath{\mathbb{Z}}}
\newcommand{\G}{\ensuremath{\mathcal{G}}}
\newcommand{\A}{\ensuremath{\mathcal{A}}}
\newcommand{\apx}{$(*)$}
\let\bfseries=\undefined
\DeclareRobustCommand\bfseries
\def\IP{({\rm IP})_{n,\veb,\vel,\veu,\vew}}
\def\Orthant_j{{\mathcal O}_{j}}
\newcommand\veb{{\ve b}}
\newcommand\ved{{\ve d}}
\newcommand\vece{{\ve e}}
\newcommand\veg{{\ve g}}
\newcommand\veh{{\ve h}}
\newcommand\vel{{\ve l}}
\newcommand\vep{{\ve p}}
\newcommand\ver{{\ve r}}
\newcommand\veu{{\ve u}}
\newcommand\vev{{\ve v}}
\newcommand\vew{{\ve w}}
\newcommand\vex{{\ve x}}
\newcommand\vey{{\ve y}}
\newcommand\vez{{\ve z}}
\newcommand\vezero{{\ve 0}}
\newcommand\vemd{{\ve {md}}}
\newcommand\veod{{\ve {od}}}
\newcommand\vealpha{{\boldsymbol{\alpha}}}
\newcommand\vebeta{{\boldsymbol{\beta}}}
\newcommand{\td}{\textrm{td}}
\newcommand{\OO}{{\mathcal{O}}}
\newcommand{\OFPT}{{\mathcal{O}}_{FPT}}
\newenvironment{psmallmatrix}{\left(\smallmatrix}{\endsmallmatrix\right)}
\newcommand\FourBlockBig[5][\relax]{\begin{pmatrix}#2& #3\\#4&#5 \end{pmatrix}\ifx#1\relax\else^{(#1)}\fi}
\newcommand\FourBlock[5][\relax]{\begin{psmallmatrix}#2& #3\\#4&#5 \end{psmallmatrix}\ifx#1\relax\else{^{(#1)}}\fi}
\newcommand\TwoBlock[3][\relax]{\begin{psmallmatrix}#2\\#3 \end{psmallmatrix}\ifx#1\relax\else{^{(#1)}}\fi}
\newtheorem{theorem}{Theorem}
\newtheorem{claim}{Claim}
\newtheorem{corollary}{Corollary}
\newtheorem{lemma}{Lemma}
\newtheorem{definition}{Definition}
\newtheorem{observation}{Observation}
\newtheorem*{rep@theorem}{\rep@title}
\newcommand{\newreptheorem}[2]{%
	\newenvironment{rep#1}[1]{%
		\def\rep@title{#2 \ref{##1}}%
		\begin{rep@theorem}}%
		{\end{rep@theorem}}}
\title{New Bounds on Augmenting Steps \\of Block-structured Integer Programs}
\author{Lin Chen\thanks{Department of Computer Science, Texas Tech University.
		\texttt{chenlin198662@gmail.com}.}
	\and Martin Kouteck{\'y}\thanks{Computer Science Institute, Charles University, and Technion - Israel Institute of Technology \texttt{koutecky@kam.mff.cuni.cz}. Partially supported by a postdoctoral fellowship at the Technion funded by the Israel Science Foundation grant 308/18, by Charles University project UNCE/SCI/004, and by the project 17-09142S of GA CR.}
	\and Lei Xu\thanks{Department of Computer Science, University of Texas Rio Grande Valley} \and Weidong Shi\thanks{Department of Computer Science, University of Houston}
}
\date{\today}
\begin{document}
%	\linenumbers
	
	\maketitle
	
	\thispagestyle{empty}

\begin{abstract}
Iterative augmentation has recently emerged as an overarching method for solving Integer Programs (IP) in variable dimension, in stark contrast with the volume and flatness techniques of IP in fixed dimension.
Here we consider \emph{4-block $n$-fold integer programs}, which are the most general class considered so far.
A 4-block $n$-fold IP has a constraint matrix which consists of $n$ copies of small matrices $A$, $B$, and $D$, and one copy of $C$, in a specific block structure.
Iterative augmentation methods rely on the so-called \emph{Graver basis} of the constraint matrix, which constitutes a set of fundamental augmenting steps.

All existing algorithms rely on bounding the $\ell_1$- or $\ell_\infty$-norm of elements of the Graver basis.
Hemmecke et al.~[Math. Prog. 2014] showed that 4-block $n$-fold IP has Graver elements of $\ell_\infty$-norm at most $\OFPT(n^{2^{s_{\scalebox{.4}{D}}}})$, leading to an algorithm with a similar runtime; here, $s_{\scalebox{.6}{D}}$ is the number of rows of matrix $D$ and $\OFPT$ hides a multiplicative factor that is only dependent on the small matrices $A,B,C,D$, 
However, it remained open whether their bounds are tight, in particular, whether they could be improved to $\OFPT(1)$, perhaps at least in some restricted cases.

We prove that the $\ell_{\infty}$-norm of the Graver elements of 4-block $n$-fold IP is upper bounded by $\OFPT(n^{s_{\scalebox{.4}{D}}})$, improving significantly over the previous bound $\OFPT(n^{2^{s_{\scalebox{.4}{D}}}})$.
We also provide a matching lower bound of $\Omega(n^{s_{\scalebox{.4}{D}}})$ which even holds for arbitrary non-zero lattice elements, ruling out augmenting algorithm relying on even more restricted notions of augmentation than the Graver basis.
We then consider a special case of 4-block $n$-fold in which $C$ is a zero matrix, called 3-block $n$-fold IP.
We show that while even there the $\ell_{\infty}$-norm of its Graver elements is $\Omega(n^{s_{\scalebox{.4}{D}}})$, there exists a different decomposition into lattice elements whose $\ell_{\infty}$-norm is bounded by $\OFPT(1)$, which allows us to provide improved upper bounds on the $\ell_{\infty}$-norm of Graver elements for 3-block $n$-fold IP. 
The key difference between the respective decompositions is that a Graver basis guarantees a \emph{sign-compatible} decomposition; this property is critical in applications because it guarantees each step of the decomposition to be feasible.

Consequently, our improved upper bounds let us establish faster algorithms for 3-block $n$-fold IP and $4$-block IP, and our lower bounds strongly hint at parameterized hardness of $4$-block and even $3$-block $n$-fold IP.
%Furthermore, we show that 3-block $n$-fold IP is without loss of generality in the sense that 4-block $n$-fold IP can be solved in FPT oracle time by taking an algorithm for 3-block $n$-fold IP as an oracle. 
\end{abstract}
\vspace{2mm}
\hspace{3.5mm}\textbf{Keywords:} Integer Programming; Graver basis; Fixed parameter tractable

\clearpage
\setcounter{page}{1}

%% ------------------------

%% ------------------------
%% new section
%% ------------------------
\section{Introduction}
A powerful mathematical tool for modeling of various optimization problems is \textsc{Integer Programming}:
\begin{equation}
	\min \{\vew \cdot \vex \, : \, \A \vex = \veb, \, \vel \leq \vex \leq \veu, \, \vex \in \Z^{N}\}, \tag{IP} \label{IP}
\end{equation}
where $\vew, \veb, \vel, \veu$ are integer vectors of the \emph{objective function}, \emph{right hand side}, and \emph{lower and upper bounds}, respectively, $\A$ is an integer \emph{constraint matrix}, and $\vex$ is a vector of \emph{variables}.
It plays a key role in theory as a component in the design of approximation and parameterized algorithms, as well as in practice, with current solvers being routinely utilized in industry and capable of handling models with thousands of variables.

In general, \textsc{Integer Programming} is NP-hard, as was shown already by Karp~\cite{karp1972reducibility}, which motivates the search for tractable special cases.
Famous polynomially solvable cases are IPs with few rows and small coefficients as shown by Papadimitriou in 1981~\cite{papadimitriou1981complexity}, and IPs with few variables as shown by Lenstra in 1983.
Arguably the most significant development in the last 20 years has been the introduction of \emph{iterative augmentation} methods which led to the development of fast algorithms for wide classes of IPs whose constraint matrix has a special block structure, and to subsequent breakthrough applications in parameterized and approximation algorithms~\cite{chen2018covering,jansen2018empowering,knop2017combinatorial}.
In fact, essentially \emph{all} tractable classes of IP in variable dimension are of this kind, except for the theory of total unimodularity from the '60s.

An iterative augmentation algorithm starts with an initial feasible solution $\vex$ and iteratively finds \emph{augmenting steps} $\veg \in \Z^N$, i.e., $\vex + \veg$ is feasible and $\vew (\vex + \veg) < \vew \vex$.
A major question is \emph{where} to obtain ``good'' augmenting steps.
The \emph{Graver basis of $\A$}, $\G(\A)$, has emerged as an excellent choice, with good guarantees on convergence to optimal solutions while still being algorithmically ``tame''.
Specifically, at the heart of iterative augmentation techniques are bounds on the $\ell_1$- and $\ell_{\infty}$-norm of elements of the Graver basis, which enable dynamic programming to be used to find Graver elements.

We stress the role of bounds on the elements of $\G(\A)$.
Historically, all tractable classes of IP were discovered by proving new norm bounds  and subsequently designing a dynamic program around them, with the former typically being much harder than the latter.
Moreover, recent runtime improvements have followed from improving existing bounds~\cite{eisenbrand2018faster,martin2018parameterized}, and the most challenging questions in the field are tightly connected to norm bounds.
Our focus here is the currently least understood class of IPs, \emph{4-block $n$-fold IP}:
\begin{equation}\label{ILP:2}
	\IP: \quad \min\{\vew\cdot \vex: H \vex=\veb, \, \vel\le \vex\le \veu,\, \vex\in \Z^{t_B + nt_A} \}, 
\end{equation}
where $H$ (called a \emph{$4$-block $n$-fold matrix}) is build from smaller blocks $A$, $B$, $C$ and $D$:
\[
H=\FourBlockBig[n]CDBA :=
\begin{pmatrix}
C & D & D & \cdots & D \\
B & A & 0  &   & 0  \\
B & 0  & A &   & 0  \\
\vdots &   &   & \ddots &   \\
B & 0  & 0  &   & A
\end{pmatrix} \enspace .
\]
Here, $A,B,C,D$ are $s_i\times t_i$ matrices, $i=A,B,C,D$, respectively, and $H$ consists of $n$ copies of $A,B,D$ and one copy of $C$. Notice that by plugging $A,B,C,D$ into the above block structure we require that $s_C=s_D$, $s_A=s_B$, $t_B=t_C$ and $t_A=t_D$. Let $\Delta$ be the largest absolute value among all the entries of $A,B,C,D$.
Let $H_0$ be a matrix obtained from $H$ by setting $C=\vezero$.
We also study \emph{3-block $n$-fold IP}, obtained by replacing $H$ with $H_0$.

$4$-block $n$-fold IP remains the simplest case of block-structured IPs for which an algorithm of runtime $f(s_A, t_A, \dots, s_D, t_D, \Delta) n^{O(1)}$ (i.e., an FPT algorithm; see below) remains unknown.
From another perspective, Koutecký et al.~\cite{martin2018parameterized} has recently resolved the complexity of IP with respect to the structural parameters \emph{primal} and \emph{dual treedepth} $\td_P$ and $\td_D$, respectively, by showing that IPs with small $\td_P$ and $\td_D$ are efficiently solveable.
IPs with small \emph{incidence treedepth} $\td_I$ subsume both of the aforementioned classes as well as $4$-block $n$-fold IP, and $4$-block $n$-fold IP remains the simplest open case with respect to $\td_I$.

\subsection{Our Contribution}
Because we are interested in efficient algorithms, we wish to confine the exponential depenendence on the input into the small numbers $s_i, t_i$, $i=A,B,C,D$, and $\Delta$.
Thus we take the perspective of \emph{parameterized complexity}: for a problem instance $I$ with a \emph{parameter} $k$, we call an algorithm with runtime $f(k) |I|^{O(1)}$ a \emph{fixed-parameter tractable (FPT)} algorithm, and an algorithm with runtime $|I|^{f(k)}$ an \emph{XP algorithm} (for \emph{slice-wise polynomial}).
If such algorithms exist, we say that the problem is \emph{FPT} or \emph{XP parameterized by $k$}, respectively.
We denote by $\OFPT(|I|^{f(k)}$ a runtime of form $f'(k_1, \dots, k_\ell) |I|^{f(k)}$, i.e., with an XP dependence on $k$ but an FPT dependence on $k_1, \dots, k_\ell$, in order to focus on the (possibly dominant) dependence on $k$.

\medskip

In this paper, we provide new and improved upper bounds and resulting algorithms for 4-block and 3-block $n$-fold IP, as well as the very first lower bounds for these classes which we believe to hint at the parameterized hardness of these problems.
We denote by $\ker_{\Z}(H) = \{\vex \in \Z^{t_B + nt_A} \mid H \vex = \vezero\}$ the \emph{integer kernel of $H$}, also called the \emph{lattice of $H$}, and by $g_{\infty}(H) = \max_{\veg \in \G(H)} \|\veg\|_\infty$ the largest $\ell_\infty$-norm of an element of the Graver basis $\G(H)$; analogously for $H_0$.
First, we show an upper bound on $g_\infty(H)$.
\begin{theorem}\label{thm:3-block-graver-4}
	For any 4-block $n$-fold matrix $H$, $g_\infty(H) \leq \OFPT(n^{s_{\scalebox{.5}{D}}})$.
\end{theorem}
This improves on the previous bound of $\OFPT(n^{2^{s_{\scalebox{.4}{D}}}})$~\cite{hemmecke2014graver}.
We also establish the first explicit lower bound matching our upper bound, making it tight up to an FPT factor.
Importantly, our lower bound even applies to the first $t_B$ coordinates (denoted $\vex^0$ for a vector $\vex\in\Z^{t_B+nt_A}$) which play a special role in algorithms for $4$-block $n$-fold IP.
What is more, our lower bound even applies to \emph{any} non-zero element of $\ker_{\Z}(H)$:
\begin{theorem}\label{thm:4-block-lower}
	There exists a $4$-block $n$-fold matrix $H$ and an integer $t \in \N$ such that $s_i, t_i \in O(t)$ for $i=A,B,C,D$, and for \emph{any} $\veg \in \ker_{\Z}(H)$ we have $\|\veg^0\|_\infty = \Omega(n^t)$.
\end{theorem}
Therefore, even augmenting via a different set of steps may have to deal with steps that are unbounded by $\OFPT(1)$.
Combining Theorem~\ref{thm:3-block-graver-4} with the original idea of Hemmecke et al.~\cite{hemmecke2014graver} and a strongly polynomial framework of Koutecký et al.~\cite{martin2018parameterized}, we obtain the currently fastest algorithm for $4$-block $n$-fold IP:
\begin{theorem}\label{thm:alg-4-block}
	$4$-block $n$-fold IP can be solved in time $\OFPT(n^{O(s_{\scalebox{.5}{D}}t_B)})$.
\end{theorem}

Second, we restrict our attention to 3-block $n$-fold IP.
Unlike before, we show that its lattice elements (i.e., augmenting step candidates) admit a decomposition with $\ell_{\infty}$-norm bounded by $\OFPT(1)$:
\begin{theorem}\label{lemma:3-infty-bound}
	Any $\veg \in \ker_{\Z}(H_0)$ decomposes to $\sum_{i=1}^{N} \vece_i$ with $\vece_i \in \ker_{\Z}(H_0)$ and $\|\vece_i\|_\infty \leq \OFPT(1)$ for each $i$.
\end{theorem}
However, this decomposition is not ``sign-compatible'', meaning possibly none of its elements is a feasible step on its own, which makes its immediate algorithmic use complicated.
Nevertheless, we are able to use it to establish an upper bound of $\min\{\OFPT(n^{s_{\scalebox{.5}{D}}}),\OFPT(n^{t_A^2}+1)\}$ (below, and Theorem~\ref{thm:3-block-graver-4}):
\begin{theorem}\label{thm:3-block-graver}
	For any $3$-block $n$-fold matrix $H_0$, $g_{\infty}(H_0) \leq \OFPT(n^{t_A^2 + 1})$.
\end{theorem}
This upper bound of $\OFPT(n^{t_A^2 + 1})$, which is singly exponential in $t_A$, is much more involved compared with the upper bound of Theorem~\ref{thm:3-block-graver-4}.
This coincides with the existing results for $4$-block $n$-fold IP~\cite{hemmecke2014graver}, where an upper bound depending on $A,B$ (instead of $C,D$) is much more complicated.
Our proof relies on a completely new approach, which first establishes the decomposition of Theorem~\ref{lemma:3-infty-bound} and then modifies it into a sign-compatible decomposition through merging summands.
This may be of separate interest for deriving upper bounds on $g_\infty(\A)$ for other classes of matrices $\A$, particularly for deriving an upper bound on $g_\infty(H)$ which has an explicit dependency on $s_A,s_B,t_A,t_B$ in the exponent of $n$. 
Moreover, we show that any $4$-block $n$-fold IP can be embedded in a $3$-block $n$-fold IP (Theorem~\ref{thm:4block-3block}) in a particular way, which allows us to transfer the $4$-block $n$-fold lower bound (now restricted to \emph{feasible} lattice elements):
\begin{theorem}\label{thm:3-block-better-lower}
	There exists a $3$-block $n$-fold IP with a matrix $H$ and an integer $t \in \N$ such that $s_i, t_i \in O(t)$ for $i=A,B,C,D$, and for any \emph{feasible} nonzero $\veg \in \ker_{\Z}(H_0)$ we have $\|\veg^0\|_\infty = \Omega(n^t)$.
\end{theorem}
Finally, using our new upper bound of Theorem~\ref{thm:3-block-graver}, we get that:
\begin{theorem}\label{thm:alg-3-block}
	3-block $n$-fold IP can be solved in time $\min\{\OFPT(n^{O(s_{\scalebox{.5}{D}} t_B)}, \OFPT(n^{O(t_A^2 t_B)})\}$.
\end{theorem}

%\lin{Shall we include the reduction? That shows a nice equivalence and also gives a motivation for the research of 3-block IP.}

\subsection{Related Work}
$4$-block $n$-fold IP originated as a generalization of two previously studied classes of IP, the \emph{$n$-fold} and \emph{2-stage stochastic} IP, which are obtained by substituting $C=B=\vezero$ and $C=D=\vezero$, respectively, resulting in
\[
E:=
\begin{pmatrix}
D & D & \cdots & D \\
A & 0  &   & 0  \\
0  & A &   & 0  \\
\vdots &    & \ddots &   \\
0  & 0  &   & A
\end{pmatrix}\hspace{20mm}
F:=
\begin{pmatrix}
B & A & 0  &   & 0  \\
B & 0  & A &   & 0  \\
\vdots &   &   & \ddots &   \\
B & 0  & 0  &   & A
\end{pmatrix},
\]
with $E$ being the $n$-fold matrix and $F$ being the $2$-stage stochastic matrix, respectively.
The origins of iterative augmentation methods for $2$-stage stochastic IP reach the work of Hemmecke and Schultz in 2001~\cite{hemmecke2001decomposition}.
De Loera et al.~\cite{deloera2008nfold} first studied $n$-fold IP in 2008.
Later, Hemmecke et al.~\cite{hemmecke2013n} showed an FPT algorithm for $n$-fold IP based on dynamic programming, which led to a breakthrough in computational social choice~\cite{knop2017voting} and was also applied in the context of scheduling by Knop and Koutecký~\cite{knop2016scheduling}.
Later, this FPT algorithm inspired a better algorithm for a special case of \emph{combinatorial $n$-fold IP} developed by Knop et al.~\cite{knop2017combinatorial}, who also apply it to problems in stringology and graph algorithms.
Finally, this algorithm was lifted to the general $n$-fold IP by Koutecký et al.~\cite{martin2018parameterized} and Eisenbrand et al.~\cite{eisenbrand2018faster}.

An extension of $n$-fold IP to tree-structured matrices called \emph{tree-fold IP} was developed by Chen and Marx~\cite{chen2018covering} and applied to scheduling problems.
Jansen et al.~\cite{jansen2018empowering} have used $n$-fold IP to obtain efficient PTASes for scheduling problems.
An extension of $2$-stage stochastic IP analogous to tree-folds is called multi-stage stochastic and was studied by Aschenbrenner and Hemmecke~\cite{aschenbrenner2007finiteness}.
Ganian and Ordyniak~\cite{ganian2018landscape} studied the structural parameters primal treedepth and treewidth, and later Ganian et al.~\cite{ganian2017beyond} studied dual and incidence treedepth and treewidth.
Koutecký et al.~\cite{martin2018parameterized} discovered that tree-fold and multi-stage stochastic IPs are essentially equivalent to IPs with small dual and primal treedepth, settling the parameterized complexity with respect to these parameters.
The work of Koutecký et al.~\cite{martin2018parameterized} subsumes essentially all current knowledge about the solvability of IP in variable dimension with the exception of totally unimodular constraint matrices and two related classes~\cite{AppaKP:2007,ArtmannWZ:2017}, with the main remaining open problem being the complexity of $4$-block $n$-fold IP and, more generally, IP with respect to incidence treedepth.

Bounds on $g_\infty(\A)$ and $g_1(\A) = \max_{\veg \in \G(\A)} \|\veg\|_1$ play a central role in the recent developments.
For example, Chen and Marx~\cite{chen2018covering} showed that tree-fold IP is FPT, but a naive analysis yields a tower-of-exponentials dependence on the parameters.
Eisenbrand et al.~\cite{eisenbrand2018faster} lower this to double-exponential by improving the bounds on $g_1(\A)$, and, at least with the current approach, the only way to obtain a single-exponential algorithm is by obtaining single-exponential bounds on $g_1(\A)$.
Similarly, while it is known that $2$-stage stochastic IP is FPT~\cite{hemmecke2001decomposition}, there are \emph{no} known bounds at all for this algorithm except for the computability of the parameter dependence $f$ due to no bounds being available for $g_\infty(F)$.
Lower bounds on $g_\infty(\A)$ have been rare so far.
Finhold and Hemmecke~\cite{finhold2016lower} study them in the context of $n$-fold IP.
Koutecký et al.~\cite{martin2018parameterized} show lower bounds (only using elementary techniques) for IPs in terms of their primal and dual treewidth.

We use the Steinitz Lemma, which has recently gained renewed attention~\cite{eisenbrand2018proximity,eisenbrand2018faster,jansen2018integer}.

\noindent\textbf{Organization.} Statements whose proofs are deferred to Sections A--E (appendices) are marked \apx.

\section{Preliminaries}\label{sec:pre}
\noindent\textbf{Notation.}
We write vectors in boldface, e.g. $\vex, \vey$, and their entries in normal font, e.g. $x_i, y_i$.
Any $(t_B+nt_A)$-dimensional vector $\vex$ can be divided into $n+1$ \emph{bricks}, such that $\vex=(\vex^0,\vex^1,\cdots,\vex^n)$ where $\vex^0 \in \Z^{t_B}$ and each $\vex^i \in \Z^{t_A}$, $1\le i\le n$. We call $\vex^i$ the \emph{$i$-th brick} for $0\le i\le n$. We write $0_{s\times t}$ for an $s\times t$ matrix consisting of $0$, and $I_{t}$ for an $t\times t$ identity matrix. For a vector or a matrix, we write $\|\cdot\|_{\infty}$ to denote the maximal absolute value of its elements. For two vectors $\vex,\vey$ of the same dimension, $\vex\cdot\vey$ denotes their inner product.

Throughout this paper, we write $\OFPT(1)$ to represent a parameter that is only dependent on $\Delta,s_A,s_B,s_C,s_D,t_A,t_B,t_C,t_D$ where $\Delta$ is the maximal absolute value among all the entries of $A,B,C,D$, that is, $\OFPT(1)$ is only dependent on the small matrices $A,B,C,D$ and is independent of $n$. For any computable function $f(x)$, we write $\OFPT(f)$ to represent a computable function $f'(x)$ such that $|f'(x)|\le \OFPT(1)\cdot |f(x)|$, and $\Omega_{FPT}(f)$ to represent a function $f''$ such that $|f''(x)|\ge \Omega(1)\cdot |f(x)|$.

Two vectors $\vex$ and $\vey$ are called \emph{sign-compatible} if $x_i\cdot y_i\ge 0$ holds for every pair of coordinates $(x_i,y_i)$. Furthermore, we call a summation $\sum_{i}\vex_i$ \emph{sign-compatible} if the summands are pair-wise sign-compatible.

\smallskip
\noindent\textbf{Graver basis.}
Consider the general integer linear programming in the standard form~\eqref{IP}.
%The {\em Graver basis} was first introduced by Graver~\cite{graver1975foundations}.
Let $\sqsubseteq$ be the \emph{conformal order} in $\mathbb{R}^m$ defined such that $\vex \sqsubseteq \vey$ if $\vex$ and $\vey$ lie in the same orthant, i.e., $x_i \cdot y_i \geq 0$ for each $i=1, \dots, m$, and $|x_i| \leq |y_i|$ for each $i=1, \dots, m$.
Given any subset $X\subseteq \mathbb{R}^n$, we say $\vex$ is an \emph{$\sqsubseteq$-minimal} element of $X$ if $\vex\in X$ and there does not exist $\vey\in X$, $\vey\neq \vex$ such that $\vey\sqsubseteq \vex$. It is known that every subset of $\Z^m$ has finitely many $\sqsubseteq$-minimal elements.
We study the Graver basis:
\begin{definition}[{Graver basis~\cite{graver1975foundations}}]
	The \emph{Graver basis} of an integer matrix $E$ is the
	finite set $\G(E)\subseteq \ker_{Z}(E)$ of all $\sqsubseteq$-minimal elements of $\ker_{\Z}(E) \setminus \{\vezero\}$.
\end{definition}
For clarity, we sometimes emphasize that $\veg$ comes from $\G(H)$ by writing it as $\veg(H)$, and similarly for other vectors.
We use the fact that any $\vex\in \ker_{\Z}(H)$, $\vex\neq 0$ can be written as $\vex=\sum_i\alpha_i\veg_i(H)$, where $\alpha_i\in\Z_+$, $\veg_i(H)\in \G(H)$ and $\veg_i(H)\sqsubseteq \vex$~\cite[Lemma 3.4]{onn2010nonlinear}.

%The Graver basis $\mathcal{G}(H)$ is only dependent on $H$. Let $\|B\|_{\infty}$ be the largest absolute value over all entries. For any $\veg\in\mathcal{G}(A)$, we have the following rough estimation for some constant $c_1,c_2$~\cite{onn2010nonlinear}:
%$$|\mathcal{G}(H)|\le (c_1\|H\|_{\infty})^{mn}\quad\text{ and } \|g\|_{\infty}\le (c_2\|A\|_{\infty})^{mn}.$$

\smallskip
\noindent\textbf{Augmentation algorithms for IP and Graver-best oracle.}
There is a general framework for solving~\eqref{IP} by utilizing $\G(\A)$, which was developed in a series of papers~\cite{chen2018covering,hemmecke2013n,jansen2018empowering,knop2017combinatorial}.
A recent paper by Koutecký et al.~\cite{martin2018parameterized} formalizes this framework and extends it to also obtaining \emph{strongly polynomial} algorithms (algorithms whose number of arithmetic operations does not depend on the length of the numbers on input).

We say that $\vex$ is \emph{feasible} for~\eqref{IP} if $\A\vex = \veb$ and $\vel \leq \vex \leq \veu$.
Let $\vex$ be a feasible solution for~\eqref{IP}.
We call $\veg$ a \emph{feasible step} if $\vex + \veg$ is feasible for~\eqref{IP}.
Further, call a feasible step $\veg$ \emph{augmenting} if $\vew(\vex+\veg) < \vew(\vex)$.
An augmenting step $\veg$ and a \emph{step length} $\rho \in \Z$ form an \emph{$\vex$-feasible step pair} with respect to a feasible solution $\vex$ if $\vel \le \vex + \rho\veg \le \veu$.
An augmenting step $\veh$ is a \emph{Graver-best step} for $\vex$ if $\vew(\vex + \veh) \leq \vew(\vex + \rho \veg)$ for all $\vex$-feasible step pairs $(\veg,\rho) \in \G(\A)\times \Z$.
The next definition and theorem show that it is sufficient to focus all our attention on finding Graver-best steps.
This takes care of matters such as finding an initial feasible solution, using a proximity theorem to shrink $\vew, \veb, \vel, \veu$ and so on.
\begin{definition}[Graver-best oracle] \label{def:gb_oracle}
	A {\em Graver-best oracle} for an integer matrix $\A$ is one that, queried on $\vew,\veb,\vel,\veu$
	and $\vex$ feasible to~\eqref{IP}, returns a Graver-best step $\veh$ for $\vex$.
\end{definition}
\begin{theorem}\label{thm:Koutecky}\cite{martin2018parameterized}
	Given a Graver-best oracle for $E$, (\ref{IP}) can be solved in strongly polynomial oracle time.
\end{theorem}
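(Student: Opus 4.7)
The plan is to show that (a) the Graver-best augmentation procedure terminates in a number of iterations that is polynomial in the bit-encoding of the input, and (b) a Frank--Tardos style preprocessing reduces the encoding length of $\vew$ to something polynomial in the dimension and $\|H\|_\infty$, so that the total iteration count depends only on $m$, $\|H\|_\infty$ and $\log\|\veu-\vel\|_\infty$; combining these with the given oracle yields a strongly polynomial oracle algorithm. In both steps the geometric tool is the fact (from the definition of $\G(H)$) that every kernel element admits a sign-compatible decomposition into Graver basis elements.

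First I would bound the number of oracle calls. Let $\vex$ be the current feasible iterate and $\vex^*$ an optimum, and set $\ved=\vex^*-\vex\in\ker_{\Z^m}(H)$. By the Graver decomposition property there exist $\veg_1,\dots,\veg_N\in\G(H)$, all sign-compatible with $\ved$ and with each other, and positive integers $\lambda_1,\dots,\lambda_N$ such that
\begin{equation*}
\ved=\sum_{j=1}^N \lambda_j\veg_j,\qquad N\le 2m-2
\end{equation*}
(the classical bound on the support of a conformal decomposition, which can alternatively be replaced by any poly$(m)$ bound). Sign-compatibility guarantees that $\vel\le \vex+\rho\veg_j\le \veu$ for every $\rho\in\{0,1,\dots,\lambda_j\}$, so each $(\veg_j,\lambda_j)$ is a legitimate Graver augmentation pair. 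Because $\vew\cdot\ved=\sum_j\lambda_j(\vew\cdot\veg_j)$ and every term has the same sign as $\vew\cdot\ved<0$, at least one pair satisfies $\lambda_j(\vew\cdot\veg_j)\le (\vew\cdot\ved)/N=-(\vew\vex-\vew\vex^*)/N$. The Graver-best step $\veh$ is no worse, so
\begin{equation*}
\vew\vex-\vew(\vex+\veh)\;\ge\;\frac{\vew\vex-\vew\vex^*}{N}.
\end{equation*}
Thus the optimality gap contracts by a factor of $1-1/N$ per oracle call, and after $O(N\log(\vew\vex_0-\vew\vex^*))$ calls we have $\vew\vex-\vew\vex^*<1$, which forces equality since everything is integral.

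The iteration count above is polynomial but not strongly polynomial because of the $\log(\vew\vex_0-\vew\vex^*)$ factor. To remove this dependency I would invoke Frank--Tardos: replace $\vew$ by an integer vector $\vew'$ of encoding length polynomial in $m$ (independent of $\vew$) that induces the same preference order on the feasible integer points inside $[\vel,\veu]$ whose coordinates are bounded by $\|\veu-\vel\|_\infty$. Running the augmentation procedure with $\vew'$ produces the same optima, and the initial gap is now bounded by $\mathrm{poly}(m,\log\|\veu-\vel\|_\infty,\log\|H\|_\infty)$, while each iteration is a single call to the Graver-best oracle; the total number of oracle calls and arithmetic operations is therefore polynomial in $m$ alone, giving the strongly polynomial oracle algorithm claimed.

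The main obstacle I expect is the first step, specifically verifying that the one-shot Graver-best improvement dominates the $1/N$-fraction of the current gap \emph{as a single augmentation}, not merely as a cumulative effect of the $N$ conformal summands. The sign-compatibility of the decomposition is what makes this work: without it, adding $\lambda_j\veg_j$ to $\vex$ might violate $\vel\le\cdot\le\veu$, so $(\veg_j,\lambda_j)$ would not qualify as a Graver augmentation pair and the best-of-$N$ averaging argument would collapse. Consequently the proof really hinges on invoking the sign-compatible Graver decomposition of $\vex^*-\vex$, and on handling the (standard but slightly technical) Frank--Tardos reduction carefully so that feasibility, not just optimality, is preserved under the weight substitution.
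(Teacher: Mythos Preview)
The paper does not prove this theorem; it is quoted from~\cite{martin2018parameterized} as a black box in the preliminaries, so there is no in-paper argument against which to compare your proposal. The geometric-convergence half of your sketch is the standard one and is correct: the sign-compatible Graver decomposition of $\vex^*-\vex$ into at most $N\le 2m-2$ conformal summands, each giving a feasible augmentation pair precisely because of sign-compatibility, yields a $(1-1/N)$ contraction of the optimality gap per Graver-best call and hence $O\bigl(m\log(\vew\vex_0-\vew\vex^*)\bigr)$ oracle calls.

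The strong-polynomiality half, however, has a genuine gap. After Frank--Tardos the reduced objective $\vew'$ still satisfies only $\|\vew'\|_\infty\le 2^{O(m^3)}N^{O(m^2)}$ with $N$ on the order of $m\|\veu-\vel\|_\infty$, so the initial gap $\vew'\cdot(\vex_0-\vex^*)$ is bounded by $m\,\|\vew'\|_\infty\,\|\veu-\vel\|_\infty$, which is \emph{exponential} in $\log\|\veu-\vel\|_\infty$, not ``$\mathrm{poly}(m,\log\|\veu-\vel\|_\infty,\log\|H\|_\infty)$'' as you write. Consequently the iteration bound $O(m\log(\text{gap}))$ retains a term of order $m\log\|\veu-\vel\|_\infty$, and the procedure you describe is only weakly polynomial; your final sentence ``polynomial in $m$ alone'' does not follow from what precedes it. Frank--Tardos on $\vew$ by itself cannot eliminate the dependence on the box width. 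The argument in~\cite{martin2018parameterized} requires an additional reduction---solving the LP relaxation in strongly polynomial time and invoking a proximity bound to replace $[\vel,\veu]$ by a box whose width no longer depends on the input magnitudes---which your sketch does not supply.
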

We note in passing that the polynomial dependence on the dimension $N$ and in particular the number of bricks $n$ when it comes to $4$-block $n$-fold IP, can be reduced using the notion of an \emph{approximate Graver-best oracle} introduced by Altmanová et al.~\cite{altmanova2018evaluating} and implicitly by Eisenbrand et al.~\cite{eisenbrand2018faster}.

\smallskip
\noindent\textbf{Finiteness theorems for $n$-fold and 2-stage stochastic matrices.}
Consider an $n$-fold matrix $E$ that consists of $A$ and $D$ (i.e., $B=C=0$ in a 4-block $n$-fold matrix).
It is shown that $g_\infty(E)$ is $\OFPT(1)$. More precisely, we have the following lemma.
\begin{lemma}\cite{hemmecke2013n,hocsten2007finiteness}\label{lemma:cite-nfold}
	Let $E$ be an $n$-fold matrix. There exists some integer $\kappa=f_{nf}(s_A,s_D,t_A,t_D,\Delta)$ for some computable function $f_{nf}$ and
	\begin{eqnarray*}
		M(E)=\{\veh\in \mathbb{Z}^{t_A} \mid \veh \text{ is the sum of at most $\kappa$ elements of } \mathcal{G}(A)\},
	\end{eqnarray*}	
	such that for any $\veg=(\veg^1,\veg^2,\cdots,\veg^n)\in\mathcal{G}(E)$
	we have $\sum_{i\in I}\veg^i\in M(A)$ for any $I\subseteq \{1,2,\cdots,n\}$.
\end{lemma}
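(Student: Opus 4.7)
The plan is to bound the total number of $\mathcal{G}(A)$-elements appearing in a sign-compatible decomposition of any $\veg\in \mathcal{G}(E)$, summed across all bricks, by some $\kappa$ depending only on $s_A,s_D,t_A,t_D,\Delta$; once that is done, any partial sum $\sum_{i\in I}\veg^i$ is automatically expressible as a sum of at most $\kappa$ such elements. The starting observation is that the lower $n$ block-rows of $E\veg=0$ read $A\veg^i=0$ for every $i$, so each brick $\veg^i$ lies in $\ker A$ and admits a sign-compatible decomposition $\veg^i=\sum_k \veh_{i,k}$ with each $\veh_{i,k}\in \mathcal{G}(A)$ sign-compatible with $\veg^i$. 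Because $\mathcal{G}(A)$ is finite of some cardinality $M$ bounded by a function of $s_A,t_A,\Delta$, I would index its elements as $\veh^{(1)},\dots,\veh^{(M)}$ and record each brick by a non-negative integer multiplicity vector $\vem^i\in \mathbb{Z}_{\ge 0}^M$ with $\veg^i=\sum_\ell m^i_\ell \veh^{(\ell)}$.

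Next, I would translate the remaining constraint, namely the top block-row $\sum_i D\veg^i=0$, into a condition on the aggregated multiplicity vector $\vep:=\sum_i \vem^i$. Letting $W$ be the $s_D\times M$ matrix with columns $D\veh^{(\ell)}$ yields $W\vep=0$ with $\vep\in \mathbb{Z}_{\ge 0}^M$. Here is the crux: the monoid $\{\vep'\in \mathbb{Z}_{\ge 0}^M:W\vep'=0\}$ is finitely generated by Gordan--Dickson, with Hilbert-basis elements of $\ell_\infty$-norm bounded by a computable function of $s_D$, $M$, and $\|W\|_\infty$, each of which is itself bounded in terms of $s_A,s_D,t_A,t_D,\Delta$. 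I would then argue that $\vep$ must equal a single Hilbert-basis element: otherwise write $\vep=\vep'+\vep''$ with both summands non-zero and in the monoid, and lift this to a conformal decomposition of $\veg$ by assigning, within each brick $i$, a portion $a^i_\ell\le m^i_\ell$ of each type $\veh^{(\ell)}$ to a summand $\veg'$ and the remaining $m^i_\ell-a^i_\ell$ to $\veg''$, so that $\sum_i a^i_\ell=p'_\ell$. Both $\veg'$ and $\veg''$ are then non-zero elements of $\ker E$ sign-compatible with $\veg$, contradicting the Graver-minimality of $\veg$. Hence $\|\vep\|_\infty\le \kappa_0$ and $\|\vep\|_1\le M\kappa_0$, which we take as the desired $\kappa$.

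Since the total number of $\mathcal{G}(A)$-elements across all bricks equals $\|\vep\|_1\le \kappa$, for any subset $I$ the partial sum $\sum_{i\in I}\veg^i=\sum_\ell\bigl(\sum_{i\in I}m_\ell^i\bigr)\veh^{(\ell)}$ is a sum of at most $\kappa$ Graver elements of $A$, proving the lemma. The main obstacle is the lifting step: producing the brick-by-brick distribution $a^i_\ell$ with prescribed column sums $\sum_i a^i_\ell=p'_\ell$ while ensuring both $\veg'$ and $\veg''$ are sign-compatible with $\veg$. Sign-compatibility inside a single brick is automatic, since all $\veh^{(\ell)}$ contributing to brick $i$ already share the sign pattern of $\veg^i$, but realizing the aggregate split at the brick level requires a routing/transportation argument showing that the prescribed per-type totals can always be met by suitable choices of the $a^i_\ell$, which is where the main care is needed.
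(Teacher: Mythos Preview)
The paper does not give its own proof of this lemma: it is stated with citations to \cite{hemmecke2013n,hocsten2007finiteness} and used as a black box, so there is no in-paper argument to compare against. Your proposal is in fact the standard proof from those references, and it is correct.

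One remark on the step you flag as the main obstacle: the lifting is easier than you suggest. The constraints decouple completely across the index $\ell$, so there is no genuine transportation problem. For each fixed $\ell$ you have $\sum_i m_\ell^i = p_\ell \ge p'_\ell \ge 0$, and you simply choose any $0\le a_\ell^i\le m_\ell^i$ with $\sum_i a_\ell^i = p'_\ell$ (e.g.\ greedily). Sign-compatibility of $\veg'$ and $\veg''$ with $\veg$ then follows brick by brick: within brick $i$, every $\veh^{(\ell)}$ that appears (i.e.\ with $m_\ell^i>0$) lies in the closed orthant of $\veg^i$, so both $\veg'^i=\sum_\ell a_\ell^i \veh^{(\ell)}$ and $\veg''^i=\sum_\ell (m_\ell^i-a_\ell^i)\veh^{(\ell)}$ lie in that orthant, and since they sum to $\veg^i$ each satisfies $\sqsubseteq \veg^i$. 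Non-vanishing of $\veg'$ (resp.\ $\veg''$) follows because $\vep'\neq 0$ forces some $a_\ell^{i_0}>0$, hence $m_\ell^{i_0}>0$, hence $\veg^{i_0}\neq 0$, and a nonempty sum of nonzero vectors in a common orthant is nonzero. The rest of your argument (Gordan's lemma for $\{\vep\in\Z_{\ge 0}^M:W\vep=0\}$, computable bounds on its Hilbert basis, and the final partial-sum conclusion) is exactly right.
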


\begin{lemma}\cite{aschenbrenner2007finiteness,martin2018parameterized}\label{lemma:multi-stage-bounded}
	Let $F$ be a two-stage stochastic matrix, and $\Delta=\max\{\|A\|_{\infty},\|B\|_{\infty}\}$. Then $g_{\infty}(H)\le f_{sto}(s_A,t_A,s_B,t_B,\Delta)$ for some computable function $f_{sto}$.
\end{lemma}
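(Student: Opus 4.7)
The plan is to prove this via a well-quasi-ordering argument on the bricks of any $\sqsubseteq$-minimal kernel element of $F$. Let $\veg=(\vex^0,\vex^1,\ldots,\vex^n)\in\G(F)$, so $B\vex^0+A\vex^i=\ve 0$ for every $i\in\{1,\ldots,n\}$; I aim to bound $\max_{0\le i\le n}\|\vex^i\|_\infty$ by a computable function of $A,B$ alone.

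First I would dispose of the degenerate cases in which the bound follows immediately from the single-matrix Graver bases of $A$ and $B$. If $\vex^0=\ve 0$, then each $\vex^i\in\ker_\Z(A)$, and $\sqsubseteq$-minimality forces $\veg$ to be supported in a single brick that coincides with an element of $\G(A)$. If $\vex^0\neq\ve 0$ but $B\vex^0=\ve 0$, then replacing $\vex^0$ by $\ve 0$ yields a strictly $\sqsubseteq$-smaller kernel element whenever any brick is non-zero, so minimality forces all bricks to vanish and then $\vex^0\in\G(B)$. In both cases the bound is immediate from $\max_{\veh\in\G(A)\cup\G(B)}\|\veh\|_\infty$.

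The remaining and main case is $B\vex^0\neq\ve 0$, which forces every brick to be non-zero. Here I would proceed with two interlocking bounds. For any fixed $\vex^0$, each brick $\vex^i$ must be $\sqsubseteq$-minimal among sign-compatible solutions of $A\vex=-B\vex^0$: otherwise there would be $\veh\in\G(A)$ with $\veh\sqsubseteq\vex^i$ and $A\veh=\ve 0$, and replacing $\vex^i$ by $\vex^i-\veh$ would yield a strictly smaller kernel element. Standard Graver-fiber bounds then control $\|\vex^i\|_\infty$ in terms of $\|B\vex^0\|_\infty$ and the Graver basis of $A$. To bound $\|\vex^0\|_\infty$ itself, I would work with the finite Graver basis $\mathcal F:=\G(B\,|\,A)$ of the combined matrix $(B\,|\,A)\in\Z^{s_A\times(t_B+t_A)}$: each pair $(\vex^0,\vex^i)$ lies in $\ker(B\,|\,A)$ and admits a sign-compatible decomposition into elements of $\mathcal F$. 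If $\|\vex^0\|_\infty$ were unbounded as $n$ grows, a Higman-style wqo argument applied to the sequence of brick-decompositions would produce an infinite good subsequence along which a common sub-multiset of $\mathcal F$-elements can be cancelled simultaneously from the corresponding bricks, yielding a proper $\sqsubseteq$-sub-element of $\veg$ in $\ker(F)$, contradicting minimality.

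The main obstacle is executing the Higman-style cancellation rigorously: the shared $\vex^0$ couples the brick decompositions, so the cancelled sub-multiset must have $\vex^0$-contributions that sum to zero across the selected bricks while preserving the kernel equations on the remaining bricks. This delicate combinatorial bookkeeping is the technical heart of the Aschenbrenner--Hemmecke argument and is the reason $f_{sto}$ is only stated as a computable function with no explicit estimate. I would aim to structure the proof along the cleaner lines of the Kouteck\'y--Levin--Onn treatment, reducing the wqo step to a direct Dickson-type application on carefully chosen type vectors indexed by $\mathcal F$.
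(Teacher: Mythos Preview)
The paper does not prove this lemma at all; it is quoted from \cite{aschenbrenner2007finiteness,martin2018parameterized} and used as a black box (for instance in the proof of Theorem~\ref{thm:3-block-graver-4}). So there is no in-paper proof to compare against, and your task is really to reproduce the cited argument.

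Your degenerate cases are correct, and the overall plan---reduce to $\mathcal F=\G(B\,|\,A)$ and run a well-quasi-ordering argument on the brick decompositions---is indeed the route taken in the cited references. However, the cancellation mechanism you describe for the main case is conceptually off. You write that the cancelled sub-multisets should have ``$\vex^0$-contributions that sum to zero across the selected bricks,'' but $\vex^0$ is a \emph{single shared} vector, not a sum over bricks. A proper $\sqsubseteq$-sub-element $(\vey^0,\vey^1,\ldots,\vey^n)\in\ker(F)$ requires that for \emph{every} $i$ one can extract from the sign-compatible $\mathcal F$-decomposition of $(\vex^0,\vex^i)$ a sub-sum whose $\vex^0$-part equals the \emph{same} $\vey^0$. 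Nothing sums to zero; the $\vex^0$-parts must all \emph{agree}. Finding such a common $\vey^0\sqsubset\vex^0$ that is simultaneously realizable in every brick's decomposition is exactly the hard step, and your description does not capture it.

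The Aschenbrenner--Hemmecke proof does not in fact proceed by cancelling inside a single Graver element. It embeds the building blocks of $\G(F_n)$ for all $n$ simultaneously into a Noetherian monoid and applies a Gordan-type finiteness result there; the uniform bound on $g_\infty$ falls out. The Kouteck\'y--Levin--Onn version you allude to uses an inductive tree-fold framework and still needs more than a direct Dickson application to handle the coupling through $\vex^0$. So your proposal is a reasonable pointer to the literature, but as a proof it stops precisely at the step that carries all the content, and the one concrete mechanism you do sketch for that step would not work as stated.
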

Both lemmas actually hold for even more general classes of tree-fold and multi-stage stochastic matrices.

\smallskip
\noindent\textbf{The Steinitz lemma.} The Steinitz lemma has been utilized in several recent papers~\cite{eisenbrand2018faster,eisenbrand2018proximity,jansen2018integer} to establish better algorithms for IP. We use it as well.
\begin{lemma}\cite{grinberg1980value}\label{lemma:steinitz}
	Let an arbitrary norm be given in $\mathbb{R}^{\kappa}$ and assume that $\|\vex_i\|\le \zeta$ for $1\le i\le m$ and $\sum_{i=1}^m \vex_i=\vex$. Then there exists a permutation $\pi$ such that for all positive integers $\ell\le m$, 
	$$\|\sum_{i=1}^{\ell}\vex_{\pi(i)}-\frac{\ell-\kappa}{m}\vex\|\le \kappa\zeta.$$
\end{lemma}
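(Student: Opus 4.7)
The plan is to prove Lemma~\ref{lemma:steinitz} by the classical Grinberg--Sevastyanov greedy scheme coupled with an LP basic-feasible-solution rounding step.

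First I would reduce to the centered case $\vex = 0$. Set $\vey_i = \vex_i - \vex/m$; then $\sum_i \vey_i = 0$ and $||\vey_i|| \le 2\zeta$, and the desired bound $||\sum_{i\le\ell}\vex_{\pi(i)}-\frac{\ell-\kappa}{m}\vex||\le\kappa\zeta$ becomes, after absorbing the shift $\frac{\kappa}{m}\vex$ (of norm at most $\kappa\zeta$), the statement that some permutation of the $\vey_i$ has all partial sums of norm $\OO(\kappa\zeta)$. So it suffices to prove: if $\vey_1,\ldots,\vey_m\in\R^{\kappa}$ satisfy $||\vey_i||\le\zeta'$ and $\sum\vey_i=0$, then some permutation satisfies $||\sum_{i\le\ell}\vey_{\pi(i)}||\le\kappa\zeta'$ for all $\ell$.

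Next I would construct $\pi$ iteratively, maintaining the invariant that at every step $\ell$ the polytope
\[
Q_\ell := \Bigl\{\alpha\in[0,1]^{R_\ell} : \sum_{i\in R_\ell}\alpha_i\vey_i = -S_\ell\Bigr\}
\]
(with $R_\ell$ the remaining indices and $S_\ell = \sum_{i\le\ell}\vey_{\pi(i)}$) is non-empty. This holds at $\ell=0$ with $\alpha\equiv 1$. Given a vertex $\alpha^*$ of $Q_\ell$, standard LP theory yields at most $\kappa$ strictly fractional coordinates, since $Q_\ell$ is defined by $\kappa$ equality constraints over a box. I would commit $\pi(\ell+1)$ to any index $j\in R_\ell$ with $\alpha_j^*=1$; restricting $\alpha^*$ to $R_\ell\setminus\{j\}$ immediately certifies $Q_{\ell+1}\neq\emptyset$. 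The bound $||S_\ell||\le\kappa\zeta'$ then follows from the identity $S_\ell = -\sum_{i\in I_1}\vey_i - \sum_{i\in I_F}\alpha_i^*\vey_i$ together with the cancellation introduced as indices migrate out of $I_1=\{i:\alpha_i^*=1\}$ into the committed prefix: one tracks that $S_\ell$ is always a $[-1,1]$-combination of at most $\kappa$ of the $\vey_i$'s, indexed by the fractional support $I_F$.

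The main obstacle, on which I expect to spend the most care, is two-fold. First, one must ensure an index with $\alpha_j^*=1$ exists whenever $|R_\ell|>\kappa$; if $\alpha^*$ has only fractional and zero coordinates I would switch to a vertex maximizing $\sum_i\alpha_i$ (which tends to push coordinates to $1$), or accept a fractional index with a compensating charge to a rounding error that still fits within the $\kappa\zeta'$ envelope. Second, the end-game $|R_\ell|\le\kappa$ is handled directly: the remaining vectors are appended in any order, and each resulting partial sum differs from $S_\ell$ by the sum of at most $\kappa$ vectors of norm $\le\zeta'$, hence stays within $\kappa\zeta'$. Assembling these pieces and un-reducing from the centered case recovers the lemma with the stated centering $\frac{\ell-\kappa}{m}\vex$, which naturally accommodates the $\pm\kappa\zeta$ slack accumulated along the way.
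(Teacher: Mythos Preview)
The paper does not prove this lemma at all: it is stated with a citation to \cite{grinberg1980value} and used as a black box. So there is no ``paper's own proof'' to compare against.

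That said, your sketch is in the right spirit (it is the Grinberg--Sevastyanov LP scheme), but as written it has two genuine gaps.

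First, your reduction to the zero-sum case does not recover the exact constant $\kappa\zeta$ claimed in the lemma. After centering you have $\|\vey_i\|\le 2\zeta$, so the zero-sum Steinitz bound gives partial sums of the $\vey_i$ of norm at most $2\kappa\zeta$; adding back the shift $\tfrac{\kappa}{m}\vex$ (norm $\le\kappa\zeta$) yields $3\kappa\zeta$, not $\kappa\zeta$. For the paper's applications only an $\OFPT(\cdot)$ bound is needed, so this would suffice there, but it does not prove the lemma as stated. The actual Grinberg--Sevastyanov argument works directly with the non-zero sum and gets the sharp $\kappa\zeta$; the $\tfrac{\ell-\kappa}{m}\vex$ centering in the statement is precisely what falls out of that direct argument.

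Second, and more seriously, your polytope $Q_\ell=\{\alpha\in[0,1]^{R_\ell}:\sum_i\alpha_i\vey_i=-S_\ell\}$ is missing the crucial affine constraint $\sum_{i\in R_\ell}\alpha_i = |R_\ell|-\kappa$ (or an equivalent mass constraint). Without it, a vertex can perfectly well have $\kappa$ fractional coordinates and all others equal to $0$, so no index with $\alpha_j^*=1$ need exist, and your greedy step stalls. Your proposed fixes (``switch to a vertex maximizing $\sum_i\alpha_i$'' or ``accept a fractional index with a compensating charge'') are not arguments; the standard proof avoids the issue entirely by carrying the mass constraint from the start, which both forces an index at value $1$ (when $|R_\ell|>\kappa$) and is what produces the $\tfrac{\ell-\kappa}{m}\vex$ term in the final bound.
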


\section{4-block $n$-fold IP}
In this section we consider IP~(\ref{ILP:2}) for arbitrary $H$ and derive matching upper and lower bounds on the $\ell_{\infty}$-norm of its Graver basis depending on the parameter $s_{\scalebox{.6}{C}}=s_{\scalebox{.6}{D}}$.

We first establish the following upper bound that improves significantly the current result.
\begin{reptheorem}{thm:3-block-graver-4}
	For any $4$-block $n$-fold matrix $H$, $g_{\infty}(H)\le 
	\OFPT(n^{s_{\scalebox{.4}{D}}})$.	
\end{reptheorem}

\begin{proof}
	Let $\veg\in \G(H)$. As $F\cdot \veg=0$, there exist $\alpha_j\in\Z_+$, $\veg_j(F)\in\G(F)$ and $\veg_j(F)\sqsubseteq \veg$ such that 
	$$\veg=\sum_{j=1}^m \alpha_j\veg_j(F).$$ 
	Furthermore, $\|\veg_j(F)\|_{\infty}=\OFPT(1)$ according to Lemma~\ref{lemma:multi-stage-bounded}. %For simplicity, we re-index the elements of $\G(F)$ such that 
	
	Let $\veh_j=C\cdot\veg^0_j(F)+\sum_{i=1}^nD\veg_j^i(F)$, which is an $s_{\scalebox{.6}{D}}$-dimensional vector such that $\|\veh_j\|_{\infty}=\OFPT(n)$. As $H\veg=0$, it follows that
	$$\sum_{j=1}^m \alpha_j\veh_j=\underbrace{\veh_1+\veh_1+\cdots+\veh_1}_{\alpha_1}+\underbrace{\veh_2+\veh_2+\cdots+\veh_2}_{\alpha_2}+\cdots+\underbrace{\veh_m+\veh_m+\cdots+\veh_m}_{\alpha_m}=0,$$
	i.e., the sequence of $\veh_i$'s sum up to $0$. According to Lemma~\ref{lemma:steinitz}, there exists a permutation of the sequence such that 
	$$\|\sum_{i=1}^{\ell}\vez_i\|_{\infty}\le s_{\scalebox{.6}{D}} \cdot \OFPT(n)=\OFPT(n), \quad \forall \ell\le m'.$$
	where $m'=\sum_{i=1}^m\alpha_i$ and $\vez_1,\vez_2,\cdots,\vez_{m'}$ is a permutation of the sequence $\underbrace{\veh_1,\veh_1,\cdots,\veh_1}_{\alpha_1},\underbrace{\veh_2,\veh_2,\cdots,\veh_2}_{\alpha_2},\cdots,\underbrace{\veh_m,\veh_m,\cdots,\veh_m}_{\alpha_m}$. Let $\tau=\OFPT(n)$ be the upper bound on $\|\sum_{i=1}^{\ell}\vez_i\|_{\infty}$, then we know that $\sum_{i=1}^{\ell}\vez_i\in \{-\tau,-\tau+1,\cdots,\tau\}^{s_{\scalebox{.4}{D}}}$. Consequently, if $m'>(2\tau+1)^{s_{\scalebox{.4}{D}}}+1$, there exists $\ell_1<\ell_2$ such that $\sum_{i=1}^{\ell_1}\vez_i=\sum_{i=1}^{\ell_2}\vez_i$, i.e., $\sum_{i=1}^{\ell_2-\ell_1}\vez_i=0$. Recall that every $\vez_i$ corresponds to some $\veh_{i'}$. Suppose $\sum_{i=1}^{\ell_2-\ell_1}\vez_i=\sum_{j=1}^m\alpha'_j\veh_j$ for $\alpha_{j}'\le \alpha_j$, then by the definition of $\veh_j$ it follows that  
	$$C\left(\sum_{j=1}^{m}\alpha_j'\veg_j^0(F)\right)+\sum_{i=1}^nD\left(\sum_{j=1}^m\alpha'_j\veg_j^i(F)\right)=0.$$
	Hence, $H\sum_{j=1}^m\alpha_j'\veg_j(F)=0$. That is, if $m'=\sum_{j=1}^m\alpha_j>(2\tau+1)^{s_{\scalebox{.4}{D}}}+1$, then there exists some $\veg'=\alpha_j'\veg_j(F)$ such that $H\veg'=0$ and $\veg'\sqsubset \veg$, contradicting the fact that $\veg\in \G(H)$. Thus, $\sum_{j=1}^m\alpha_j\le (2\tau+1)^{s_{\scalebox{.4}{D}}}+1$, implying that $\|\veg\|_{\infty}=\OFPT(n^{s_{\scalebox{.4}{D}}})$.
\end{proof}

We complement our upper bound by establishing a matching lower bound. We remark that lower bound from Theorem~\ref{thm:4-block-lower} not only holds for the $\ell_{\infty}$-norm of Graver basis elements, but even holds for any non-zero lattice element. This gives a sharp contrast to 3-block $n$-fold IP. As we will show later in Theorem~\ref{thm:3-block-better-lower} and Theorem~\ref{lemma:3-infty-bound}, a similar lower bound also exists for the $\ell_{\infty}$-norm of Graver basis elements of $\ker_{\Z}(H_0)$, however, $\ker_{\Z}(H_0)$ does admit a decomposition into lattice elements whose $\ell_{\infty}$-norm is bounded by $\OFPT(1)$.
\begin{reptheorem}{thm:4-block-lower}[\apx]
	There exists a $4$-block $n$-fold matrix $H$ such that $s_C=s_D=t-1$, $t_C=t_D=t$, $s_A=s_B=t_A=t_B=t$, and for any nonzero $\vey\in \ker_{\Z}(H)$, $\|\vey\|_{\infty}=\Omega({n^{t-1}})$, and in particular, $\|\vey^0\|_\infty = \Omega(n^{t-1})$.
\end{reptheorem}

\section{3-block $n$-fold IP}\label{sec:3block}
In this section we focus on 3-block $n$-fold IP where $H=H_0$, i.e., $C=\vezero$. As we will show in this section, 3-block $n$-fold IP admits several properties that make it a particularly interesting and important special case.
First, 3-block $n$-fold IP is without loss of generality -- any 4-block $n$-fold IP reduces to 3-block $n$-fold IP with a constant increase in the parameters.
Second, any element of $\ker_{\Z}(H_0)$ admits a decomposition into lattice elements with bounded $\ell_{\infty}$-norm, which is in a certain contrast to Theorem~\ref{thm:4-block-lower}.
Unfortunately, a strong lower bound of $\Omega(n^{t})$ for \emph{feasible} lattice elements still exists for $s_i=t_i=\OO(t)$.
Nevertheless, we establish an alternative upper bound of $\OFPT(n^{t_A^2+1})$ on the $\ell_{\infty}$-norm of the Graver basis elements for 3-block $n$-fold IP which only depends on parameters of $A$.

\subsection{Decomposition into lattice elements with bounded $\ell_{\infty}$-norm}
The goal of this subsection is to prove the following theorem.
\begin{reptheorem}{lemma:3-infty-bound}
	There exists some $\xi=\OFPT(1)$ such that for any $\veg\in\Z^{t_B+nt_A}$ satisfying $H_0\veg=0$, there exist a finite sequence of $N$ vectors $\vece_1,\vece_2,\cdots, \vece_N$ such that $\vece_h\in\Z^{t_B+nt_A}$, $H_0\vece_h=0$, $\|\vece_h\|_{\infty}\le \xi$, $\vece^0_h\sqsubseteq \veg^0$ and $\veg=\sum_h\vece_h$.
\end{reptheorem}
%Recall that $\vex^0$ always refer to the first $t_B$ coordinates of a $(t_B+nt_A)$-dimensional vector $\vex$. Note that $\vece_h$'s do not necessarily lie in the same orthant. %therefore, they do not necessarily coincide with the Graver basis of $H_0$.

%\begin{theorem}\label{thm:4-block}
%	If there exists an algorithm $\mathcal{A}$ that solves ILP$\TwoBlock[]{E}{F_0}$ in $T_\mathcal{A}$ time, then there exists an algorithm that solves ILP$\TwoBlock[]EF$ in $T_\mathcal{A}\cdot O(n L)$ time, where $L$ is the length of the input.  
%\end{theorem}

\begin{proof}
Since $H_0\veg=0$,
we know that $F\cdot\veg=0$. Therefore, there exist $\alpha_j\in\Z_+$, $\veg_j(F)\sqsubseteq \veg$ such that
$$\veg=\sum_{j}\alpha_j\veg_j(F),$$
where $\veg_j(F)\in\G(F)$. Consider each $\veg_j(F)$. As $F$ is a two-stage stochastic matrix, by Lemma~\ref{lemma:multi-stage-bounded} it holds for every $j$ that 
$\|\veg_j(F)\|_{\infty}=\OFPT(1).$
Note that each $\veg_j(F)$ can be written into $n+1$ bricks such that $\veg_j(F)=\left(\veg_j^0(F),\veg_j^1(F),\cdots,\veg_j^{n}(F)\right)$ where $\veg_j^0(F)$ is a $t_B$-dimensional vector, and $\veg_j^i(F)$ is a $t_A$-dimensional vector for every $1\le i\le n$. It is obvious that $\|\veg_j^i(F)\|_{\infty}=\OFPT(1)$ for every $0\le i\le n$, and it holds that
$$B\veg_j^0(F)+A\veg_j^i(F)=0, \quad \forall 1\le i\le n.$$
We first prove the following claim.

\begin{claim}\label{claim:infbound}
	For every $\veg_j(F)$ and $1\le \ell\le |\G(A)|$, there exist some $\vev_j^*$, $\alpha_{j,\ell}^i\in\Z_{\ge 0}$ such that
	\begin{itemize}
		\item $\veg_{j}^i(F)-\vev_j^*=\sum_{\ell=1}^{|\G(A)|} \alpha_{j,\ell}^i\veg_\ell(A), \quad \forall 1\le i\le n.$
		\item For every $1\le \ell\le |\G(A)|$, either $|\{i:\alpha_{j,\ell}^i>0\}|= 0$, or $|\{i:\alpha_{j,\ell}^i>0\}| \ge n/2$.
		\item $\max_{i,j,\ell}|\alpha_{j,\ell}^i|\le \alpha_{max}$, where  $\alpha_{max}=2g_{\infty}(F) =\OFPT(1)$
		\item $\|\vev_j^*\|_{\infty}=\OFPT(1)$.
		%\item If $\veg_{\ell_1}(A)+\veg_{\ell_2}(A)=0$, then either $\sum_{i}\alpha^i_{j,\ell_1}=0$ or  $\sum_{i}\alpha^i_{j,\ell_2}=0$.
	\end{itemize}
\end{claim}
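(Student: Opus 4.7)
The starting point is the Graver decomposition of $\veg$ in the two-stage stochastic kernel. Since $H_0\veg = 0$ implies $F\veg = 0$, we may write $\veg = \sum_j \alpha_j\veg_j(F)$ with $\alpha_j \in \Z_+$, $\veg_j(F) \in \G(F)$ and $\veg_j(F) \sqsubseteq \veg$, and Lemma~\ref{lemma:multi-stage-bounded} gives $||\veg_j(F)||_{\infty} = \OFPT(1)$. The block equalities $B\veg_j^0(F) + A\veg_j^i(F) = 0$ for $1 \le i \le n$ show that every brick $\veg_j^1(F),\dots,\veg_j^n(F)$ lies in the single coset $\{\vex : A\vex = -B\veg_j^0(F)\}$ of $\ker(A)$.

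For any pivot $\vev_j^*$ chosen from that coset with $||\vev_j^*||_{\infty} = \OFPT(1)$ --- the simplest candidate being $\vev_j^* = \veg_j^{1}(F)$ --- the difference $\veg_j^i(F) - \vev_j^*$ lies in $\ker(A)$ and admits a sign-compatible positive-integer Graver decomposition $\veg_j^i(F) - \vev_j^* = \sum_{\ell} \alpha_{j,\ell}^i\veg_\ell(A)$. This immediately yields item~1, and the $\OFPT(1)$ norm of the pivot yields item~4. Item~3 follows automatically from sign-compatibility: each summand $\alpha_{j,\ell}^i\veg_\ell(A)$ is coordinate-wise bounded by the whole sum, so $\alpha_{j,\ell}^i \le ||\veg_j^i(F) - \vev_j^*||_{\infty} \le 2\max_h ||\veg_h(F)||_{\infty} = \alpha_{max}$.

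The substantive work is item~2. Because $||\veg_j^i(F)||_{\infty} = \OFPT(1)$ and the bricks live in $\Z^{t_A}$, only $K = \OFPT(1)$ distinct vectors appear among $\veg_j^1(F),\dots,\veg_j^n(F)$; also $|\G(A)| = \OFPT(1)$. My plan is an iterative ``majorization'' procedure: start with $\vev_j^* = \veg_j^{1}(F)$ and a canonical sign-compatible decomposition; as long as some Graver element $\veg_{\ell_0}(A)$ has $0 < |\{i : \alpha_{j,\ell_0}^i > 0\}| < n/2$, update $\vev_j^* \leftarrow \vev_j^* + \veg_{\ell_0}(A)$. At the minority indices the coefficient of $\veg_{\ell_0}$ drops by one (possibly emptying its support), while at the more than $n/2$ complementary indices the contribution of $-\veg_{\ell_0}(A) \in \G(A)$ gets incremented, placing its support in the majority regime. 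With $|\G(A)|=\OFPT(1)$ and a monotone lexicographic potential on minority-support sizes, termination after $\OFPT(1)$ updates should be achievable, keeping $||\vev_j^*||_{\infty} = \OFPT(1)$.

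The chief obstacle is verifying that each pivot update preserves sign-compatibility at the bricks whose difference was altered. Subtracting $\veg_{\ell_0}(A)$ from an already sign-compatible decomposition can clash with the prevailing sign pattern at the complementary indices, forcing a cascading re-decomposition that might reintroduce minority supports elsewhere. Taming this cascade --- probably by committing upfront to a globally consistent decomposition rule (e.g., a lexicographically minimal sign-compatible expansion depending only on the target $\veg_j^i(F) - \vev_j^*$, so that updates cannot propagate between distinct brick values) --- is the crux. Should the iteration refuse to stabilize, a backup route is an existential argument over the $\OFPT(1)$ candidate pivots inside an $\OFPT(1)$-norm ball of the coset: enumerate these and invoke a structural pigeonhole across the $K$ distinct brick values to select a single pivot jointly satisfying the majority condition for every $\veg_\ell(A)$.
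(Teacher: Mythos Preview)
Your plan contains the right idea --- shift the pivot $\vev_j^*$ along elements of $\G(A)$ to turn minority supports into majority ones --- but you have saddled yourself with an unnecessary constraint. The claim does \emph{not} ask for the decomposition $\veg_j^i(F)-\vev_j^*=\sum_\ell \alpha_{j,\ell}^i\veg_\ell(A)$ to be sign-compatible with $\veg_j^i(F)-\vev_j^*$; it only asks that the $\alpha_{j,\ell}^i$ be nonnegative and bounded by $\alpha_{\max}$. Once you drop sign-compatibility there is no re-decomposition and no cascade: subtracting $\veg_{\ell_0}(A)$ from each $\veg_j^i(F)-\vev_j^*$ is implemented purely by decrementing $\alpha_{j,\ell_0}^i$ where it is positive, or incrementing the coefficient of $-\veg_{\ell_0}(A)\in\G(A)$ where it is zero. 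Neither operation touches any other index $\ell$, so your ``chief obstacle'' disappears.

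The paper in fact carries out the shift in a single batch rather than iterating. From any initial pivot $\vev_j$ with $A\vev_j=-B\veg_j^0(F)$ and $\|\vev_j\|_\infty=\OFPT(1)$, write $\veg_j^i(F)-\vev_j=\sum_\ell \bar\alpha_{j,\ell}^i\veg_\ell(A)$ sign-compatibly (so $0\le\bar\alpha_{j,\ell}^i\le\alpha_{\max}$). Let $UB_j$ be the set of unbalanced $\ell$ (those with $1\le|\{i:\bar\alpha_{j,\ell}^i>0\}|\le\lfloor n/2\rfloor$), and set
\[
\vev_j^*=\vev_j+\sum_{\ell\in UB_j}\Bigl(\max_{i}\bar\alpha_{j,\ell}^{i}\Bigr)\,\veg_\ell(A).
\]
Then
\[
\veg_j^i(F)-\vev_j^*=\sum_{\ell\notin UB_j}\bar\alpha_{j,\ell}^i\,\veg_\ell(A)+\sum_{\ell\in UB_j}\Bigl(\max_{i'}\bar\alpha_{j,\ell}^{i'}-\bar\alpha_{j,\ell}^i\Bigr)\bigl(-\veg_\ell(A)\bigr),
\]
and this is already the desired form: every coefficient is a nonnegative integer at most $\alpha_{\max}$; the balanced terms keep their support unchanged; and for each unbalanced $\ell$ the coefficient of $-\veg_\ell(A)$ is strictly positive at every $i$ with $\bar\alpha_{j,\ell}^i=0$, of which there are at least $n-\lfloor n/2\rfloor\ge n/2$. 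The bound $\|\vev_j^*\|_\infty=\OFPT(1)$ is immediate since $|\G(A)|$, $\alpha_{\max}$ and $\|\veg_\ell(A)\|_\infty$ are all $\OFPT(1)$. Your iterative procedure, run to completion on each unbalanced $\ell$, would arrive at exactly the same $\vev_j^*$; shifting by the maximum coefficient in one stroke just makes the argument a two-line computation and renders the backup enumeration unnecessary.
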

%Note that in this claim $\veg_\ell(A)$'s do not necessarily lie in the same orthant.

\begin{proof}[Proof of Claim~\ref{claim:infbound}]
Consider an arbitrary $\vev_j$ such that $\TwoBlock[]{\veg_j^0(F)}{\vev_j}\in\G([B,A])$. We have $A(\veg_j^i(F)-\vev_j)=0$ for every $1\le i\le n$, hence there exist $\bar{\alpha}_{j,\ell}^i\in\Z_+$, $\veg_\ell(A)\in\G(A)$ and $\veg_{\ell}(A)\sqsubseteq \veg_j^i(F)-\vev_j$ such that
$$\veg_{j}^i(F)-\vev_j=\sum_{\ell=1}^m \bar{\alpha}_{j,\ell}^i\veg_\ell(A), \quad \forall 1\le i\le n.$$
Note that $\|\TwoBlock[]{\veg_j^0(F)}{\vev_j}\|_{\infty}\le g_{\infty}(F)=\alpha_{max}/2$, consequently $\|\veg_j^i(F)-\vev_j\|_{\infty}\le \alpha_{max}$, and $\bar{\alpha}_{j,\ell}^i\le \alpha_{max}$. 
% since $\|\veg_j^i(F)\|_{\infty}=\OFPT(1)$ and $\|\veg_\ell(A)\|_{\infty}=\OFPT(1)$, $\bar{\alpha}_{j,\ell}^i=\OFPT(1)$.
Consider the cardinality of the set $\{i:\bar{\alpha}_{j,\ell}^i>0\}$. If $1\le |\{i:\alpha_{j,\ell}^i>0\}|\le \lfloor n/2\rfloor$, we say $\ell$ is {\em unbalanced} for $\veg_j(F)$. Let $\bar{\alpha}_{j,max}^{i}=\max_{1\le i\le n}\bar{\alpha}_{j,\ell}^i$ and $UB_j$ be the set of all unbalanced indices $\ell$, we define 
$$\vev_j^*:=\vev_j+\sum_{\ell\in UB_j}\bar{\alpha}_{j,\max}^{i}\veg_\ell(A),$$ 
then 
$$\veg_{j}^i(F)-\vev_j^*=\sum_{\ell\in\{1,2,\cdots,m\}\setminus UB_j} \bar{\alpha}_{j,\ell}^i\veg_\ell(A)+\sum_{\ell\in UB_j}(\bar{\alpha}_{j,\max}^{i}-\bar{\alpha}_{j,\ell}^i)\cdot (-\veg_\ell(A)), \quad \forall 1\le i\le n.$$
Note that $-\veg_{\ell}(A)\in\G(A)$. For all the $\veg_{\ell}(A)$'s in $\G(A)$ that do not appear in the above equation, we take their coefficients as $0$. Furthermore, we have $|\bar{\alpha}_{j,\ell}^i|\le\alpha_{max}$ and $|\bar{\alpha}_{j,\max}^{i}-\bar{\alpha}_{j,\ell}^i|\le \alpha_{max}$ for all $i,\ell$. As $\|\vev_j\|_{\infty}=\OFPT(1)$, $\|\veg_{\ell}(A)\|_{\infty}=\OFPT(1)$, we know that $\|\vev_j^*\|_{\infty}=\OFPT(1)$.  Thus, the claim is proved.
\end{proof}

 %hence there are at most $\xi^{t_B}=\OFPT(1)$ different kinds of $\veg_j^0(F)$. For simplicity, we list all possible $t_B$-dimensional vectors whose $\ell_{\infty}$-norm is bounded by $\xi$ as $\veu_1,\veu_2,\cdots$.  For each $u_i$, consider the integer linear programming
%$$B\veu_i+A\vev=0.$$
%Consider the Graver basis of the matrix $[B,A]$. If $\G([B,A])=\emptyset$, then we say $\veu_i$ is infeasible.  Otherwise, we say $\veu_i$ is feasible. Note that since $B\veg_j^0(F)+A\veg_j^i(F)=0$, each $\veg_j^0(F)$ must be feasible. Let all the feasible $t_B$-dimensional vectors whose $\ell_{\infty}$-norm is bounded by $\xi$ be $\veu_1,\veu_2,\cdots,\veu_{\kappa}$ where $\kappa\le \xi^{t_B}=\OFPT(1)$. For each feasible $\veu_i$, we will select a suitable $\vev_i^*$ such that $\TwoBlock[]{\veu_i^*}{\vev} \in \G([B,A])$. Initially, we pick an arbitrary $\vev_i$ where $\TwoBlock[]{\veu_i}{\vev} \in \G([B,A])$. Now it is easy to see that $\vex=(\veu_i,\vev_i,\vev_i,\cdots,\vev_i)$ is a feasible solution to $F\cdot \vex=0$. %We call $$\ver_i:=(\veu_i,\vev_i,\vev_i,\cdots,\vev_i)$$ as a canonical vector.  

We call $(\veg_j^0(F),\vev_j^*,\vev_j^*,\cdots,\vev_j^*)$ as a canonical vector (of $\veg_j(F)$). Since $\|\vev_j^*\|_{\infty}=\OFPT(1)$ and $\|\veg_j^0(F)\|_{\infty}=\OFPT(1)$, there are at most $\tau=\OFPT(1)$ different kinds of canonical vectors. This means, there may be different $\veg_k(F)$'s with the same canonical vector. We list all the $\tau$ possible canonical vectors and let  $\ver_j:=(\vep_j^*,\vev_j^*,\vev_j^*,\cdots,\vev_j^*)$ be the $j$-th one. Let $CA_j$ be the set of indices of all $\veg_k(F)$'s whose canonical vector is $\ver_j$, then we have
\begin{eqnarray}\label{eq:inftybound-1}
\veg=\sum_{j=1}^\tau(\sum_{k\in CA_j}\alpha_k)\ver_j+\sum_{j=1}^{\tau}\sum_{k\in CA_j}\alpha_k\left(\veg_k(F)-\ver_j\right).
\end{eqnarray}

We say an $n$-dimensional vector $\vealpha=(\alpha^1,\alpha^2,\cdots,\alpha^n)\in\Z_{\ge 0}^n$ is {\em balanced}, if $\vealpha=0$, or $\|\vealpha\|_{\infty}\le \alpha_{max}=\OFPT(1)$ and $|\{i:\alpha^i>0\}|\ge n/2$. Then the following observation is true.
\begin{observation}\label{obs:1}
	For any nonzero balanced vector $\vealpha$ it holds that $\|\vealpha\|_1\ge n/2\cdot \alpha^i/\alpha_{max}$ for every $1\le i\le n$.
\end{observation}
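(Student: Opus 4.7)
The plan is to derive the inequality directly from the two defining properties of a nonzero balanced vector, without any machinery beyond elementary counting. Observe that the statement is symmetric in $i$, so it suffices to prove the bound for an arbitrary fixed index $i$.

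First I would unpack the definition: since $\vealpha$ is balanced and nonzero, the definition supplies two facts, namely $\|\vealpha\|_\infty \le \alpha_{\max}$ and $|\{i : \alpha^i > 0\}| \ge n/2$. Because $\vealpha\in\Z_{\ge 0}^n$, every positive coordinate is an integer at least $1$, so the second fact yields the crude lower bound
\[
\|\vealpha\|_1 \;=\; \sum_{i=1}^n \alpha^i \;\ge\; |\{i : \alpha^i > 0\}| \;\ge\; \frac{n}{2}.
\]

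Next I would exploit the first fact coordinatewise: for any fixed $i$, $\alpha^i \le \alpha_{\max}$, equivalently $\alpha^i/\alpha_{\max} \le 1$. Multiplying the previous display by this quantity on the right gives
\[
\|\vealpha\|_1 \;\ge\; \frac{n}{2} \;\ge\; \frac{n}{2}\cdot\frac{\alpha^i}{\alpha_{\max}},
\]
which is exactly the claimed inequality. Since $i$ was arbitrary, the bound holds for every $1\le i\le n$, completing the argument.

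There is no real obstacle here; the observation is a two-line consequence of the definition of \emph{balanced} together with the fact that positive integer coordinates contribute at least $1$ each to the $\ell_1$-norm. The content of the observation is not the proof itself but rather the packaging: the lemma will later be applied in the form $\alpha^i \le (2\alpha_{\max}/n)\|\vealpha\|_1$, converting a pointwise bound on coordinates into a bound proportional to the total mass, which is the shape needed to control the forthcoming merging step in the sign-compatible decomposition.
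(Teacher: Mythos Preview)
Your proof is correct and is precisely the natural argument the paper implicitly has in mind; the paper itself states Observation~\ref{obs:1} without proof, treating it as immediate from the definition of a balanced vector. Your two-step reasoning (at least $n/2$ coordinates are positive integers so $\|\vealpha\|_1\ge n/2$, and $\alpha^i/\alpha_{\max}\le 1$) is exactly the intended justification.
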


Using the concept of a balanced vector, Claim~\ref{claim:infbound} indicates that if $\ver_j$ is a canonical vector of $\veg_k(F)$, then $\veg_{k}^i(F)-\vev_j^*=\sum_{\ell=1}^{|\G(A)|} \alpha_{j,\ell}^i\veg_\ell(A)$ such that the vector $(\alpha_{j,\ell}^1,\alpha_{j,\ell}^2,\cdots,\alpha_{j,\ell}^n)$ is a balanced vector.

The nice thing about balanced vectors is that we can have the following claim, which will be used several times later.
\begin{claim}\label{claim:inftybound-2}
	Let $\vey_1,\vey_2,\cdots,\vey_{k}$ be a sequence of balanced vectors in $\Z_{\ge 0}^{n}$ %Let $\bar{y}_k\sqsubseteq \vey_k$ 
	such that $\|\sum_{h=1}^k\vey_h\|_1\le n\Lambda$ where $\Lambda=\OFPT(1)$, then $\|\sum_{h=1}^k\vey_h\|_{\infty}\le 2\alpha_{max}\Lambda=\OFPT(1)$.
\end{claim}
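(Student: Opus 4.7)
The plan is to reduce Claim~\ref{claim:inftybound-2} directly to Observation~\ref{obs:1} together with the nonnegativity assumption $\vey_h \in \Z_{\ge 0}^n$. The observation gives a coordinate-wise bound for a single balanced vector, and nonnegativity lets us turn a bound on the $\ell_1$-norm of the sum into a bound on the sum of $\ell_1$-norms, which is precisely the bridge we need.

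First I would handle the trivial summands: every zero $\vey_h$ contributes nothing, so I may assume each $\vey_h$ is nonzero and therefore satisfies the nontrivial clause in the definition of \emph{balanced} (namely $\|\vey_h\|_\infty \le \alpha_{\max}$ and the support condition). Then, applying Observation~\ref{obs:1} to each such $\vey_h$, I obtain for every coordinate $1 \le i \le n$ the pointwise bound
\[
y_h^i \;\le\; \frac{2\alpha_{\max}}{n}\,\|\vey_h\|_1.
\]
Summing this inequality over $h=1,\dots,k$ yields
\[
\Bigl(\sum_{h=1}^{k}\vey_h\Bigr)^{\!i} \;\le\; \frac{2\alpha_{\max}}{n}\sum_{h=1}^{k}\|\vey_h\|_1.
\]

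The key observation now is that because each $\vey_h$ has nonnegative entries, the $\ell_1$-norm is additive under summation: $\sum_{h=1}^{k}\|\vey_h\|_1 = \bigl\|\sum_{h=1}^{k}\vey_h\bigr\|_1$. Using the hypothesis $\bigl\|\sum_{h=1}^{k}\vey_h\bigr\|_1 \le n\Lambda$, I conclude
\[
\Bigl(\sum_{h=1}^{k}\vey_h\Bigr)^{\!i} \;\le\; \frac{2\alpha_{\max}}{n}\cdot n\Lambda \;=\; 2\alpha_{\max}\Lambda,
\]
uniformly over $i$, which is exactly the claimed $\ell_\infty$ bound. Nonnegativity also means the left side is already $\ge 0$, so no absolute value issues arise.

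There is no real obstacle here; the only subtlety worth emphasizing is why Observation~\ref{obs:1} is applicable: the definition of balanced guarantees that the support of any nonzero $\vey_h$ has size at least $n/2$, so $\|\vey_h\|_1 \ge (n/2)\cdot y_h^i/\alpha_{\max}$ for every coordinate, not merely for those in the support. Without this uniform-in-$i$ form of the inequality, one could not sum over $h$ to get a coordinate-wise bound on $\sum_h \vey_h$. Once that point is noted, the rest is a one-line calculation.
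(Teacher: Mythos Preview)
Your proof is correct and follows essentially the same approach as the paper: both apply Observation~\ref{obs:1} coordinate-wise to each nonzero balanced $\vey_h$, sum over $h$, and use nonnegativity to identify $\sum_h\|\vey_h\|_1$ with $\|\sum_h\vey_h\|_1$. The only cosmetic difference is that the paper phrases the argument by contradiction while you argue directly.
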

\begin{proof}[Proof of Claim~\ref{claim:inftybound-2}]
	We prove by contradiction. Suppose on the contrary that $\|\sum_{h=1}^k\vey_h\|_{\infty}>2\alpha_{max}\Lambda$, then there exists some $i^*$ such that $\sum_{h=1}^k\vey_h^{i*}>2\alpha_{max}\Lambda$. Since $\vey_h$'s are balanced vectors, according to Observation~\ref{obs:1}, we have
	$$\|\sum_{h=1}^k\vey_h\|_1=\sum_{h=1}^k\|\vey_h\|_1\ge n\cdot \frac{\sum_{h=1}^k\vey_h^{i*}}{2\alpha_{max}}>n\Lambda,$$
	which contradicts the fact that $\|\sum_{h=1}^k\vey_h\|_1\le n\Lambda$. Hence, the claim is true.
\end{proof}

Since $\ver_j$ is a canonical vector of $\veg_k(F)$, by Claim~\ref{claim:infbound}, there exist balanced vectors $\vebeta_{k,\ell}$ such that Eq~(\ref{eq:inftybound-1}) can be rewritten as 
\begin{eqnarray*}
\veg^i=\sum_{j=1}^\tau(\sum_{k\in CA_j}\alpha_k)\vev_j^*+\sum_{j=1}^{\tau}\sum_{k\in CA_j}\alpha_k\left(\sum_{\ell=1}^{|\G(A)|}\beta_{k,\ell}^i\veg_\ell(A)\right), \quad \forall 1\le i\le n,
\end{eqnarray*}
or equivalently,
\begin{eqnarray}\label{eq:inftybound-2}
	\veg^i=\sum_{j=1}^\tau\alpha_j'\vev_j^*+\sum_{\ell=1}^{|\G(A)|}\beta^i_\ell\veg_\ell(A),\quad \forall 1\le i\le n,
\end{eqnarray}
where $\alpha_j'=\sum_{k\in CA_j}\alpha_k$ and each vector $\beta_\ell=(\beta_\ell^1,\cdots,\beta_\ell^n)$ is the summation of some balanced vectors.

As $[0,D,D,\cdots,D]\veg=0$, we have
\begin{eqnarray}\label{eq:inftybound-3}
\sum_{j=1}^\tau n\alpha_j'D\vev_j^*+\sum_{\ell=1}^{|\G(A)|}(\sum_{i=1}^n\beta_\ell^i)\veg_\ell(A)=0.
\end{eqnarray}
Note that $|\G(A)|=\OFPT(1)$, the equation above can be rewritten as 
\begin{eqnarray}\label{eq:inftybound-5}
[D\vev_1^*,D\vev_2^*,\cdots,D\vev_\tau^*, \veg_1(A),\veg_2(A),\cdots,\veg_{|\G(A)|}(A)]\cdot (n\alpha_1',n\alpha_2',\cdots,n\alpha_\tau',\sum_{i=1}^n\beta_{1}^i,\cdots,\sum_{i=1}^n\beta^i_{|\G(A)|})=0.
\end{eqnarray}

Let $V=[D\vev_1^*,D\vev_2^*,\cdots,D\vev_\tau^*, \veg_1(A),\veg_2(A),\cdots,\veg_{|\G(A)|}(A)]$, %which is an $\OFPT(1)\times\OFPT(1)$-matrix with $\|V\|_{\infty}=\OFPT(1)$, then 
there exist $\lambda_k\in\Z_+$ and $\veg_k(V)\in \G(V)$, such that
$$(n\alpha_1',n\alpha_2',\cdots,n\alpha_\tau',\sum_{i=1}^n\beta_{1}^i,\cdots,\sum_{i=1}^n\beta^i_{|\G(A)|})=\sum_{k}\lambda_k \veg_k(V).$$
Note that since $\alpha_j',\beta_\ell^i\ge 0$, we can restrict that every $\veg_k(V)\in\Z_{\ge 0}^{\tau+|\G(A)|}$. %Furthermore, $\|\veg_j(V)\|_{\infty}=\OFPT(1)$.

There are two possibilities regarding the values of $\lambda_k$'s.

\smallskip
\noindent\textbf{Case 1.} $\lambda_k<n$ for every $k$. In this case we prove that $\|\veg\|_{\infty}=\OFPT(1)$ and Theorem~\ref{lemma:3-infty-bound} follows directly. Note that $V$ is a matrix of $\OFPT(1)$ size with $\|V\|_{\infty}=\OFPT(1)$, hence $\|\veg_k(V)\|_{\infty}=\OFPT(1)$ and $|\G(V)|=\OFPT(1)$. Therefore, $n\alpha_j'<n|\G(V)|\cdot \max_k\|\veg_k(V)\|_{\infty}=\OFPT(n)$, implying that $\alpha_j'=\OFPT(1)$. Consider the vector $\vebeta=(\beta^1,\cdots,\beta^n)$ where $\beta^i=\sum_{\ell }\beta_{\ell}^i$. As $\beta_{\ell}^i\ge 0$, 
$$\|\vebeta\|_1=\|(\sum_{i=1}^n\beta_1^i,\cdots,\sum_{i=1}^n\beta_{|\G(A)|}^i)\|_1\le \sum_k\lambda_k\|\veg_k(V)\|_1=\OFPT(n).$$
Recall that $\vebeta_{\ell}$ is the summation of balanced vectors, whereas $\vebeta=\sum_{\ell}\vebeta_{\ell}$ is also the summation of balanced vectors. Using Claim~\ref{claim:inftybound-2}, $\|\vebeta\|_{\infty}=\OFPT(1)$.

Combining the fact that $\|\vev_j^*\|_{\infty}=\OFPT(1)$ and $\|\veg_\ell(A)\|_{\infty}=\OFPT(1)$, we conclude that $\|\veg^i\|_{\infty}=\OFPT(1)$ for $1\le i\le n$. Meanwhile, as $\veg^0=\sum_j\alpha_j'\vep_{j}^*$, we have $\|\veg^0\|_{\infty}=\OFPT(1)$. Hence, $\|\veg\|_{\infty}=\OFPT(1)$.

\smallskip
\noindent\textbf{Case 2.} $\lambda_k\ge n$ for some $k$. For ease of description, we take the viewpoint of a packing problem. We view
each canonical vector $\ver_j^*$ and $\veg_\ell(A)$ as an item, whereas there are $\tau+|\G(A)|$ different kinds of items. There are $n+1$ different bins. Bin $0$ can only be used to pack items $\ver_j^*$, $1\le j\le \tau$, and bin $i$ ($1\le i\le n$) can only be used to pack items $\veg_\ell(A)$, $1\le \ell\le |\G(A)|$. Currently there are $\alpha_j'$ copies of item $\ver_j^*$ in bin $0$, and $\beta^i_\ell$ copies of item $\veg_\ell(A)$ in bin $i$. This is called a packing profile. Now we want to split this packing profile into several \lq\lq sub-profiles\rq\rq, i.e., we want to determine integers $\mu_j^h,\sigma^{i,h}_{\ell}\in\Z_{\ge 0}$ such that the followings are true:
\begin{enumerate}[(i)]
	\item $\mu_j^h, \sigma^{i,h}_{\ell}=\OFPT(1)$ and $\mu_j^h+ \sigma^{i,h}_{\ell}>0$.
	\item $\sum_h \mu_j^h=\alpha_j'$, $\sum_{h}\sigma^{i,h}_{\ell}=\beta^i_\ell$;
	\item $[D\vev_1^*,D\vev_2^*,\cdots,D\vev_\tau^*, \veg_1(A),\veg_2(A),\cdots,\veg_{|\G(A)|}(A)]\cdot (n\mu_1^h,n\mu_2^h,\cdots,n\mu_\tau^h,\sum_{i=1}^n\sigma^{i,h}_{\ell},\cdots,\sum_{i=1}^n\sigma^{i,h}_{|\G(A)|})=0$ for every $h$.
\end{enumerate}
A packing with $\mu_j^h$ copies of $\ver_j^*$ in bin $0$ and $\sigma^{i,h}_\ell$ copies of $\veg_\ell(A)$ in bin $i$ is called a sub-profile. Any sub-profile corresponds to a $(t_A+nt_B)$-dimensional vector $\vece_h=(\vece_h^0,\vece_h^1,\cdots,\vece_h^n)$ where
\begin{flalign*}
&\vece_h^0=\sum_{j=1}^\tau \mu_j^h\vep_j^* \\
&\vece_h^i=\sum_{j=1}^\tau \mu_j^h\vev_j^*+\sum_{\ell=1}^{|\G(A)|}\sigma_\ell^{i,h}\veg_\ell(A), \quad\forall 1\le i\le n
\end{flalign*}
If all the three conditions on sub-profiles hold, then we know that $\|\vece_h\|_{\infty}=\OFPT(1)$, $\veg=\sum_h \vece_h$ and $H_0\vece_h=0$ (to see why $H_0\vece_h=0$ holds, simply observe that $F\ver_j^*=0$ and condition (iii) implies that $[0,D,D,\cdots,D]\vece_h=0$), and furthermore, there are at most $\sum_j\alpha_j'+\sum_{i,\ell}\beta_\ell^i$ sub-profiles, which is finite. Hence, $\veg=\sum_h \vece_h$ and the theorem is proved.

We will construct $\vece_h$'s iteratively. Once $\vece_h$ is constructed, we continue our decomposition procedure on $\veg-\sum_{k=1}^h\vece_k$. 

Suppose we have constructed $\vece_1$ to $\vece_{h_0-1}$ where conditions (i) and (iii) are satisfied for each $\vece_h$, $\alpha_j'-\sum_{h=1}^{h_0-1}\mu_j^h\ge 0$, $\bar{\beta}_\ell^i:=\beta_\ell^i-\sum_{h=1}^{h_0-1}\sigma_\ell^{i,h}\ge 0$ and furthermore, each vector $\bar{\vebeta}_\ell=(\bar{\beta}_\ell^1,\cdots,\bar{\beta}_\ell^n)$ can be expressed as a summation of all but one balanced vectors, more precisely, there exist balanced vectors $\phi_{\ell,k}\in\Z_{\ge 0}^n$, $1\le k\le k_{max}$ such that
$$\bar{\beta}_\ell=\sum_{k=1}^{k_{max}-1}\phi_{\ell,k}+\bar{\phi}_{\ell,k_{max}}$$
where $\bar{\phi}_{\ell,k_{max}}\sqsubseteq \phi_{\ell,k_{max}}$. 

We show how to construct $\vece_{h_0}$. Let $\bar{\alpha}_j'=\alpha_j'-\sum_{h=1}^{h_0-1}\mu_j^h$. According to condition (iii) of each $\vece_h$, we know that
$$[D\vev_1^*,D\vev_2^*,\cdots,D\vev_\tau^*, \veg_1(A),\veg_2(A),\cdots,\veg_{|\G(A)|}(A)]\cdot (n\bar{\alpha}_1',n\bar{\alpha}_2',\cdots,n\bar{\alpha}_\tau',\sum_{i=1}^n\bar{\beta}_{1}^i,\cdots,\sum_{i=1}^n\bar{\beta}^{i}_{|\G(A)|})=0$$
Consequently, there exist $\lambda_k'\in\Z_{\ge 0}$ and $\veg_k\in\Z_{\ge 0}^{\tau+|\G(A)|}\cap \G(V)$ such that
$$(n\bar{\alpha}_1',n\bar{\alpha}_2',\cdots,n\bar{\alpha}_\tau',\sum_{i=1}^n\bar{\beta}_{1}^i,\cdots,\sum_{i=1}^n\bar{\beta}^{i}_{|\G(A)|})=\sum_{k}\lambda_k' \veg_k(V).$$
There are two possibilities. 

\smallskip
\noindent\textbf{Case 2.1 } If there exists some $\lambda_k'\ge n$, we consider the vector-summand $n\veg_k(V)$ out of $\lambda_k'\veg_k(V)$. Let $n\veg_k(V)=(n\zeta_1,n\zeta_2,\cdots,n\zeta_{\tau+|\G(A)|})$. We set $\mu_{j}^{h_0}=\zeta_j=\OFPT(1)$ for $1\le j\le \tau$. We set the values of $\sigma_{\ell}^{i,h_0}$ such that $\sum_{i=1}^n\sigma_\ell^{i,h_0}=n\zeta_{\tau+\ell}$. Consequently, condition (iii) is satisfied for $\vece_{h_0}$. Now it suffices to set the values of each $\sigma_{\ell}^{i,h_0}$ such that they are bounded by $\OFPT(1)$. Equivalently, this means out of the $\bar{\beta}_\ell^i$ copies of $\veg_\ell(A)$, our goal is to take $\sigma_{\ell}^{i,h_0}$ copies such that in total we take $n\zeta_{\tau+\ell}$ copies and $\sigma_{\ell}^{i,h_0}=\OFPT(1)$. We achieve this in a simple greedy way. Let $k^*$ be the index such that
$$\sum_{k=k^*+1}^{k_{max}-1}\|\phi_{\ell,k}\|_1+\|\bar{\phi}_{\ell,k_{max}}\|_1<n\zeta_{\tau+\ell}\le \sum_{k=k^*}^{k_{max}-1}\|\phi_{\ell,k}\|_1+\|\bar{\phi}_{\ell,k_{max}}\|_1$$
Let $\bar{\phi}_{\ell,k^*}\sqsubseteq \phi_{\ell,k^*}$ be an arbitrary vector such that
$$\|\bar{\phi}_{\ell,k^*}\|_{1}+\sum_{k=k^*+1}^{k_{max}-1}\|\phi_{\ell,k}\|_1+\|\bar{\phi}_{\ell,k_{max}}\|_1=n\zeta_{\tau+\ell}.$$
We set $\sigma_{\ell}^{i,h_0}=\bar{\phi}_{\ell,k^*}^i+\sum_{k=k^*+1}^{k_{max}-1}\phi_{\ell,k}^i+\bar{\phi}_{\ell,k_{max}}^i$. It is obvious that in total we have taken $n\zeta_{\tau+\ell}$ copies of $\veg_\ell(A)$. Now it remains to show that $\|\sigma_\ell^{h_0}\|_{\infty}=\|\bar{\phi}_{\ell,k^*}+\sum_{k=k^*+1}^{k_{max}-1}\phi_{\ell,k}+\bar{\phi}_{\ell,k_{max}}\|_{\infty}=\OFPT(1)$. To see this, notice that each $\phi_{\ell,k}$ is a balanced vector, hence $$\|{\phi}_{\ell,k^*}\|_{1}+\sum_{k=k^*+1}^{k_{max}-1}\|\phi_{\ell,k}\|_1+\|{\phi}_{\ell,k_{max}}\|_1\le n\zeta_{\tau+\ell}+2n\alpha_{max}=\OFPT(n).$$
According to Claim~\ref{claim:inftybound-2}, $\|{\phi}_{\ell,k^*}+\sum_{k=k^*+1}^{k_{max}-1}\phi_{\ell,k}+{\phi}_{\ell,k_{max}}\|_{\infty}=\OFPT(1)$. Consequently, $\|\sigma_\ell^{h_0}\|_{\infty}=\OFPT(1)$.

Also notice that after we take $\sigma_{\ell}^{i,h_0}$ copies of $\veg_\ell(A)$, 
$$\bar{\vebeta}_\ell-\sigma_{\ell}^{h_0}=\sum_{k=1}^{k^*-1}\phi_{\ell,k}+(\phi_{\ell,k^*}-\bar{\phi}_{\ell,k^*}),$$ 
which is still the summation of all but one balanced vector. Hence we can continue to decompose $\veg-\sum_{h=1}^{h_0}\vece_h$.

\smallskip
\noindent\textbf{Case 2.2 } $\lambda_k'< n$ for every $k$. We claim that $\|\veg-\sum_{h=1}^{h_0-1}\vece_h\|_{\infty}={\OFPT(1)}$. If this claim is true, then $\veg=\sum_{h=1}^{h_0-1}\vece_h+(\veg-\sum_{h=1}^{h_0-1}\vece_h)$, and Theorem~\ref{lemma:3-infty-bound} is proved. To show the claim, we use a similar argument as that of case 1. First, $n\bar{\alpha}_j'\le (\sum_k\lambda_k)\cdot \max_k\|\veg_k(V)\|_{\infty}=\OFPT(n)$, hence $\bar{\alpha}_j'=\OFPT(1)$. Second, we consider the $n$-dimensional vector $\vebeta=\sum_{\ell=1}^{|\G(A)|}\vebeta_{\ell}$. Recall that $\bar{\beta}_\ell^i:=\beta_\ell^i-\sum_{h=1}^{h_0-1}\sigma_\ell^{i,h}\ge 0$ and each vector $\bar{\vebeta}_\ell$ satisfy that
$$\bar{\beta}_\ell=\sum_{k=1}^{k_{max}-1}\phi_{\ell,k}+\bar{\phi}_{\ell,k_{max}}$$
where $\bar{\phi}_{\ell,k_{max}}\sqsubseteq \phi_{\ell,k_{max}}$.  Let $\bar{\beta}_\ell'=\sum_{k=1}^{k_{max}}\phi_{\ell,k}$ and $\vebeta'=\sum_{\ell=1}^{|\G(A)|}\vebeta_{\ell}'$. Given that $\bar{\phi}_{\ell,k_{max}}\sqsubseteq {\phi}_{\ell,k_{max}}$ and ${\phi}_{\ell,k_{max}}$ is a balanced vector, $\|\bar{\beta}_\ell'\|_1\le \|\bar{\beta}_\ell\|_1+n\alpha_{max}$. Consequently
$$\|\beta'\|_1\le \sum_{\ell=1}^{|\G(A)|}\|\bar{\beta}_\ell'\|_1\le \sum_{\ell=1}^{|\G(A)|}\|\bar{\beta}_\ell\|_1+n\alpha_{max}\cdot |\G(A)|\le \sum_k\lambda_k'\cdot \max_k\|\veg_k(V)\|_1+n\alpha_{max}\cdot |\G(A)|=\OFPT(n).$$
Note that $\vebeta'$ is the summation of balanced vectors. According to Claim~\ref{claim:inftybound-2}, $\|\vebeta'\|_{\infty}=OFPT(1)$, consequently $\|\vebeta\|_{\infty}\le \|\vebeta'\|_{\infty}=OFPT(1)$.
Combining the fact that $\|\vep_{j}^*\|_{\infty}=\OFPT(1)$, $\|\vev_j^*\|_{\infty}=\OFPT(1)$ and $\|\veg_\ell(A)\|_{\infty}=\OFPT(1)$, we conclude that $\|\veg-\sum_{i=1}^{h_0-1}\vece_h\|_{\infty}=\OFPT(1)$. 
\end{proof}

Theorem~\ref{lemma:3-infty-bound} indicates that, there exists some ``basis'' for 3-block $n$-fold IP with FPT-bounded $\ell_{\infty}$-norms. Unfortunately, this basis need not be Graver basis; indeed, we will show later that the Graver basis of 3-block $n$-fold IP does not have a bounded $\ell_{\infty}$-norm.
However, Theorem~\ref{lemma:3-infty-bound} provides a new perspective on the structure of the kernel space, which can be utilized to bound the $\ell_{\infty}$-norm of the Graver basis through a ``merging'' technique as we illustrate in the following subsection.

\subsection{A sign-compatible decomposition}\label{subsec:3block-graver-decompose}
We have shown in the previous subsection that any element of $\ker_{\Z}(H_0)$ admits a decomposition into lattice elements whose $\ell_{\infty}$-norm is bounded by $\OFPT(1)$. However, this decomposition is not necessarily ``sign-compatible'', meaning that possibly none of its elements is a feasible step on its own, which makes its immediate algorithmic use complicated. Towards the algorithm for 3-block $n$-fold IP, we resort to Graver basis. The goal of this subsection is to prove the following theorem.

%However, it is not clear how such a result can be utilized directly for an augmentation algorithm. The current augmentation algorithms for 4-block $n$-fold IP as well as other related IPs all rely on Graver basis. This is due to the fact that if there exists a feasible or optimal augmentation vector that can be decomposed into a conic combination of Graver basis elements which lie in the same orthant, then any of these Graver basis elements itself also provides a feasible augmentation. This fact is, unfortunately, no longer true if we take a conic combination of arbitrary integral elements in the kernel space that do not lie in the same orthant. Towards the algorithm for 3-block $n$-fold IP, we resort to Graver basis. The goal of this subsection is to prove the following theorem.

%In this subsection, we will prove the following Theorem using Theorem~\ref{lemma:3-infty-bound}.

\begin{reptheorem}{thm:3-block-graver}[\apx]
For any $3$-block $n$-fold matrix $H_0$, $g_{\infty}(H_0)\le 
\OFPT(n^{t_A^2+1})$.	
\end{reptheorem}

Following the line of arguments in previous papers~\cite{aschenbrenner2007finiteness,hemmecke2014graver,hemmecke2011polynomial,hocsten2007finiteness}, it seems very difficult to derive an upper bound singly exponential in $t_A$. To prove the Theorem~\ref{thm:3-block-graver}, we use a completely different approach. We give a brief overview of the proof idea, and the reader is referred to Appendix~\ref{appsec:3block-graver-decompose} for details. 

\smallskip
\noindent\textbf{Proof idea.} The basic idea is to show that if $\|\veg(H_0)\|_{\infty}$ is too large for some $\veg(H_0)\in\G(H_0)$, then we are able to find some $\vez\sqsubset \veg(H_0)$ and $H_0\vez=0$, contradicting the fact that $\veg(H_0)$ is a Graver basis element.
Suppose $\vey=\veg(H_0)$ and $\|\vey\|_{\infty}$ is very large.
The crucial idea is that we do not search directly for $\vez\sqsubset \vey$, but rather search for $\vez\sqsubset \tilde{\vey}$ where $\tilde{\vey}$ is an ``equalization'', of $\vey$, and then prove that such a $\vez$ also satisfies that $\vez\sqsubset \vey$.

Roughly speaking, we will divide the $n$ bricks of $\vey$, i.e., $\vey^i$ for $i=1,2,\cdots, n$, into $\sigma=\OFPT(1)$ groups $N_1,N_2,\cdots,N_{\sigma}$ such that for any $k\in N_j$, $\tilde{\vey}^k\approx \frac{1}{|N_j|} \sum_{i\in N_j}\vey^i$.  Why do we need to take such a detour in the proof? The problem is that by directly arguing on $\vey$ we run into a bound that is similar as~\cite{hemmecke2014graver}.
Therefore, we use a completely different approach -- we adopt the decomposition of Theorem~\ref{lemma:3-infty-bound}, and then modify such a decomposition into a sign-compatible one by ``merging'' summands. Towards this, we first prove a merging lemma (Lemma~\ref{lemma:merging-lemma}) which states that given a summation of a sequence of vectors, we can always divide them into disjoint subsets such the summation of vectors in each subset becomes sign-compatible. The merging lemma can turn an arbitrary decomposition into a sign-compatible one, despite the fact that the cardinality of each subset is exponential in the dimension of the vectors (which means the $\ell_{\infty}$-norm of the summands will explode by a factor that is exponential in the dimension). Consequently, if we directly merge the $\OFPT(n)$-dimensional vectors in the decomposition of Theorem~\ref{lemma:3-infty-bound}, we get a very weak bound. To handle this, we consider an alternative sum $\tilde{\vey}$, which is derived by averaging multiple bricks of $\vey$ as we mentioned above.

By altering the decomposition of $\vey$, we get a decomposition of $\tilde{\vey}$ such that the following is true: all the $n+1$ bricks of each vector-summand can be divided into $\OFPT(1)$ subsets where in each subset the bricks are identical. This indicates that, although we are summing up $\OFPT(n)$-dimensional vectors to $\tilde{\vey}$, it is essentially the same as summing up $\OFPT(1)$-dimensional vectors. Such a transformation comes at a cost -- summands summing up to $\tilde{\vey}$ do not have $\OFPT(1)$-bounded $\ell_{\infty}$-norms, indeed, each vector-summand consists of $n$ bricks whose $\ell_{\infty}$-norm is $\OFPT(1)$, and at most $1$ brick (which is a $t_A$-dimensional vector) whose $\ell_{\infty}$-norm is $\OFPT(n)$. Applying our merging lemma, we derive a sign-compatible decomposition of $\tilde{\vey}$ where the summands have an $\ell_{\infty}$-norm bounded by $\OFPT(n^{t_A^2+1})$.

It remains to show that at least one summand $\vez$ in the sign-compatible decomposition of $\tilde{\vey}$ also satisfies that $\vez\sqsubset \vey$. To show this we need to go back to the definition of $\tilde{\vey}$. We are averaging bricks of $\vey$, but which bricks shall we average? Each brick is a $t_A$-dimensional vector and we consider each coordinate. We set up a threshold $\Gamma$. If the absolute value of a coordinate is larger than $\Gamma$, we say it is (positive or negative) large. Otherwise it is small. Therefore, each brick can be characterized by identifying its coordinates being positive large, negative large or small (which is defined as the \emph{quantity type} of a brick). We only average the bricks of the same quantity type. By doing so, we can ensure that $\tilde{\vey}^i$ is roughly sign-compatible with $\vey^i$ -- if the $j$-th coordinate of $\vey^i$ is positive or negative large, then this coordinate of $\tilde{\vey}^i$ is also positive or negative. Hence, any $\vez \sqsubset \tilde{\vey}^i$ is almost sign-compatible with $\vey$ -- indeed, if we can ensure additionally that the $j$-th coordinate of $\vez^i$ is $0$ as long as the $j$-th coordinate of $\vey^i$ is small, then we can conclude that $\vez\sqsubset \vey$. This \lq\lq if\rq\rq\, can be proved using a counting argument, and we get Theorem~\ref{thm:3-block-graver}.

%Note that since Graver basis form a basis of $ker_{\Z^{t_B+nt_A}}(H_0)=\{\vex\in\Z^{t_B+nt_A}: H_0\vex=0,\vex\neq 0\}$, Theorem~\ref{lemma:3-infty-bound} implies that any vectors in $ker_{\Z^{t_B+nt_A}}(H_0)$ can be expressed as a nonnegative integral combination of vectors whose $\ell_{\infty}$-norm is bounded by $\OFPT(1)$. Towards the proof of Theorem~\ref{thm:fitness}, it is important to have a \lq\lq Caratheodory\rq\rq\, type result which states that any $\vey\in ker_{\Z^{t_B+nt_A}}(H_0)$ can be expressed as a nonnegative integral combination of only a few vectors with bounded $\ell_{\infty}$-norm, which is implied by the following lemma. Let 

%Let $\xi'=\max_{k}\|\veg_k(E)\|_{\infty}=\OFPT(1)$.

%\subsection{Proof of Theorem~\ref{thm:3-block-graver}}

%\clearpage

\subsection{$4$-block $n$-fold IP reduces to $3$-block $n$-fold IP}
In this subsection, we will show that for any $4$-block $n$-fold IP, there exists an equivalent $3$-block $n$-fold IP which is {\em kernel preserving}, as we define in the following.

%Let $A,B,C,D$ be the matrices defining a $4$-block $n$-fold matrix $H$, and $\veb, \vel, \veu, \vew$ be the right hand side, lower and upper bounds and objective function.
%We denote the first-stage\lin{I did not use the notion first-stage before. But I already define $0$-th brick. Need to mention in preliminary.} brick as $0$th (i.e. $\vex^0$, $\vel^0$ etc.), and $\veb^0$ corresponds to the right hand side of the row of matrices $(C~D~D~\cdots~D)$.

\begin{definition}[Extended formulation] \label{def:ef}
Let $n' \geq n$, $m' \in \N$, $\A \in \Z^{m \times n}$, $\veb \in \Z^m$, $\vel, \veu \in (\Z \cup \{\pm \infty\}^n$ and $\A' \in \Z^{m^{\prime} \times n^{\prime}}$, $\veb' \in \Z^{m'}$, $\vel', \veu' \in (\Z \cup \{\pm \infty\})^{n'}$.
We say that 
\begin{equation}
\A'(\vex, \vey) = \veb',\, \vel' \leq (\vex, \vey) \leq \veu' \tag{EF} \label{EF}
\end{equation} is an \emph{extended formulation} of 
\begin{equation}
\A \vex = \veb,\, \vel \leq \vex \leq \veu \tag{OrigF} \label{OrigF}
\end{equation}
if $\{\vex \mid \A \vex = \veb, \vel \leq \vex \leq \veu\} = \{\vex \mid \exists \vey: \A'(\vex, \vey) = \veb', \vel' \leq (\vex, \vey) \leq \veu'\}$.
\end{definition}
%\lin{I change $E$ to $H$ as we use $E$ for $n$-fold matrix in this paper.}

\begin{definition}[Feasibly kernel-preserving extended formulation] \label{def:fkp-ef}
We say that~\eqref{EF} is a \emph{feasibly kernel preserving} extended formulation of~\eqref{OrigF} if for each $(\vex, \vey)$ feasible in~\eqref{EF},
$$\A'(\veg, \veh) = \vezero, \, \vel' \leq (\vex, \vey) + (\veg, \veh) \leq \veu' \qquad \implies \qquad \A \veg = \vezero, \, \vel \leq \vex + \veg \leq \veu,$$
that is, each element $(\veg, \veh)$ of $\ker(\A')$ which is feasible with respect to $(\vex, \vey)$ corresponds to an element $\veg \in \ker(\A)$ which is feasible with respect to $\vex$.
\end{definition}

Extended formulations are commonly used to show how a set of solutions can be embedded in an \emph{extended space}, perhaps using less inequalities or obeying some extra structural requirements.
The basic observation is that if we take an objective function $f$ over the original formulation~\eqref{OrigF} and optimize $f'(\vex, \vey) = f(\vex)$ over~\eqref{EF}, the optimal solution $(\vex, \vey)$ over~\eqref{EF} is such that $\vex$ is an optimum over~\eqref{OrigF}.
In the subsequent theorem we will use it to show that any $4$-block $n$-fold IP can be embedded in a $3$-block $n$-fold IP without blowing up the block sizes too much.
The specific notion of a feasibly kernel-preserving extended formulation is useful to show that also our lower bounds on lattice elements are transferred, as we will show subsequently in Theorem~\ref{thm:3-block-better-lower}.

Now we come to the main result of this subsection.
\begin{theorem}[\apx]\label{thm:4block-3block}
Any $4$-block $n$-fold IP with parameters $\Delta, s_A, s_B, s_C, s_D, t_A, t_B, t_C, t_D$ has a feasibly kernel-preserving extended formulation whose constraint matrix is a $3$-block $n$-fold matrix with parameters $\hat{\Delta}, \hat{s}_A, \hat{s}_B, \hat{s}_D, \hat{t}_A, \hat{t}_B, \hat{t}_D$ satisfying %\martin{maybe they should be $s_{\hat{A}}$ instead?}
%\lin{This looks better than $s_{\hat{A}}$. Let's keep it.}
\begin{align*}
\hat{\Delta} &= \Delta & \hat{t}_A &= \hat{t}_D =  2t_C + t_D + s_A  & \hat{t}_B &= t_B & 
\hat{s}_A = \hat{s}_B = s_B + t_C && \hat{s}_D &= s_D = s_C \enspace .
\end{align*}
\end{theorem}

\noindent\textbf{Remark.} Theorem~\ref{thm:4block-3block} has several consequences. One is that 4-block $n$-fold IP is in FPT if and only if 3-block $n$-fold IP is in FPT. Furthermore, as the reduction is kernel preserving, we can also utilize Theorem~\ref{thm:4block-3block} to transfer the Graver basis elements between 4-block $n$-fold IP and 3-block $n$-fold IP, %In particular, the upper bound of $\OFPT(n^{t_A^2+1})$ on the $\ell_{\infty}$-norm of the Graver basis for 3-block $n$-fold IP indicates an upper bound of $\OFPT(n^{(s_A+t_A+2t_B)^2+1})$ on that of 3-block $n$-fold IP, which is dominated by Theorem~\ref{thm:3-block-graver-4}. However, we can also use Theorem~\ref{thm:4block-3block} to transfer the lower bound, 
as is implied by the following theorem.

\begin{reptheorem}{thm:3-block-better-lower}[\apx]
There exists an instance of 3-block $n$-fold IP with a matrix $H_0$ %\martin{insert parameters - in particular $t$} 
where $\hat{s}_A=\hat{s}_B=2t, \hat{s}_D=t-1,\hat{t}_A=\hat{t}_D=4t$, $\hat{t}_B=t$ such that for every feasible solution $\vex$ and for every $\veg \in \ker_{\Z}(H_0)$ which is feasible with respect to $\vex$, it holds that $\|\veg\|_\infty = \Omega(n^{t-1})$, and in particular, $\|\veg^0\|_\infty = \Omega(n^{t-1})$. %\lin{where $\hat{s}_A=\hat{s}_B=2t, \hat{s}_D=t-1,\hat{t}_A=\hat{t}_D=4t$, $\hat{t}_B=t$ by plugging Theorem~\ref{thm:4-block-lower}.}
\end{reptheorem}

\section{Algorithms}

%\subsection{Running time of the augmentation algorithm for 3-block $n$-fold IP}\label{subsec:3-block-alg}

Using the upper bound on the Graver basis elements, we can derive algorithms for $3$-block and 4-block $n$-fold IP by combining the idea from~\cite{hemmecke2014graver} and the recent progress in~\cite{martin2018parameterized,eisenbrand2018faster}.

\begin{reptheorem}{thm:alg-3-block}[\apx]
	There exists an algorithm for 3-block $n$-fold IP that runs in $\min\{\OFPT(n^{s_{\scalebox{.5}{D}}t_B+3}\log^3 n), \OFPT(n^{(t_A^2+1)t_B+3}\log^3 n)\}$ time.
\end{reptheorem}

%Notice that the upper bound of $\OFPT(n^{s_c})$ also holds for $4$-block $n$-fold IP, hence the same argument of Theorem~\ref{thm:alg-3-block} also implies the following.

Using the same argument and Theorem~\ref{thm:3-block-graver-4}, we have the following.
\begin{reptheorem}{thm:alg-4-block}[\apx]
	There exists an algorithm for 4-block $n$-fold IP that runs in $\OFPT(n^{s_{\scalebox{.5}{D}}t_B+3}\log^3 n)$ time.	
\end{reptheorem}

\section{Conclusion}
We consider 4-block $n$-fold IP and its important special case $3$-block $n$-fold IP, both generalizing the well-known two-stage stochastic IP and $n$-fold IP. We show that lattice elements of $3$-block $n$-fold IP admits a lattice decomposition whose $\ell_{\infty}$-norm is bounded in $\OFPT(1)$, while any non-zero integral element in the kernel space of $4$-block $n$-fold IP may have an $\ell_{\infty}$-norm at least $\Omega(n^{s_c})$. We provide a matching upper bound on the $\ell_{\infty}$-norm of the Graver basis for $4$-block $n$-fold IP, which gives an exponential improvement upon the best known result. We also establish an upper bound of $\min\{\OFPT(n^{s_c}),\OFPT(n^{t_A^2}+1)\}$ on the $\ell_{\infty}$-norm of the Graver basis for $3$-block $n$-fold IP. 

It remains as an important open problem whether $4$-block $n$-fold IP, or equivalently, $3$-block $n$-fold IP, is in FPT. Our lower bound results indicate that, following the classical iterative augmentation framework, it is unlikely to derive an FPT algorithm. 

\clearpage
\appendix

\section{Proof of Theorem~\ref{thm:4-block-lower}}
\begin{reptheorem}{thm:4-block-lower}
	There exists a $4$-block $n$-fold matrix $H$ such that $s_C=s_D=t-1$, $t_C=t_D=t$, $s_A=s_B=t_A=t_B=t$, and for any nonzero $\vey\in \ker_{\Z}(H)$, $\|\vey\|_{\infty}=\Omega({n^{t-1}})$, and in particular, $\|\vey^0\|_\infty = \Omega(n^{t-1})$.	
	\end{reptheorem}
\begin{proof}
	We let $A=I_{t\times t}$, $B=-I_{t\times t}$. We define $(t-1)\times t$ matrices $D$ and $C$ such that
	\[ D=
	\begin{pmatrix}
	1 & -1 & 0   &\cdots &0 & 0\\
	0 & 1 & -1  &\cdots &0 & 0 \\
	\vdots  &  &  & \ddots & &   \\
	0  & 0 & 0  & \cdots & 1 & -1\\
	\end{pmatrix}
	\hspace{10mm} C=
	\begin{pmatrix}
	-1 & 0 & 0   &\cdots &0 & 0\\
	0 & -1 & 0  &\cdots &0 & 0 \\
	\vdots  &  &  & \ddots & &   \\
	0  & 0 & 0  & \cdots & -1 & 0\\
	\end{pmatrix}
	\]
	
	Consider any nonzero $\vey\in \ker_{\Z}(H)$. According to $A\vey^0-B\vey^i=0$, we know that $\vey^0=\vey^i$ for every $1\le i\le n$. According to $C\vey^0+\sum_{i=1}^n D\vey^i=0$, we have $(C+nD)\vey^0=0$, i.e.,
	\[
	\begin{pmatrix}
	n-1 & -n & 0   &\cdots &0 & 0\\
	0 & n-1 & -n  &\cdots &0 & 0 \\
	\vdots  &  &  & \ddots & &   \\
	0  & 0 & 0  & \cdots & n-1 & -n\\
	\end{pmatrix}
	\cdot \vey=0
	\]

	Let $\vey^0=(y_1,y_2,\cdots,y_t)$, the following is true:
	\begin{eqnarray}\label{eq:lower-bound-4-lock}
	(n-1)y_i=ny_{i+1},\quad 1\le i\le t-1
	\end{eqnarray}
	It is easy to see that as long as $\vey\neq 0$, we have $\vey^0\neq 0$ and consequently $y_i\neq 0$ for every $1\le i\le t$. %Furthermore, Eq~(\ref{eq:lower-bound-4-lock}) indicates that either $y_i>0$ for all $i$, or $y_i<0$ for all $i$. %Suppose $y_i>0$ (the other case can be proved in the same way). 
	According to $(n-1)y_{t-1}=ny_t$, $y_{t-1}$ is dividable by $n$. Let $y_{t-1}=nz_{t-1}$ for some $z_{t-1}\in \mathbb{Z}_{\neq 0}$. According to $(n-1)y_{t-2}=ny_{t-1}=n^2z_{t-1}$, we know that $y_{t-2}$ is dividable by $n^2$. Let $y_{t-2}=n^2z_{t-2}$ and we plug it into $(n-1)y_{t-3}=ny_{t-2}$. In general, suppose we have shown that $y_{t-k}=n^kz_{t-k}$ for all $k\le k_0$. Now for $k=k_0+1$, we have $(n-1)y_{t-k_0-1}=ny_{t-k_0}=n^{k_0+1}z_{n-k_0}$, then $y_{t-k_0-1}$ is dividable by $n^{k_0+1}$. Hence, we conclude that $y_1$ is dividable by $n^{t-1}$, i.e., $\|\vey\|_{\infty}=\Omega(n^{t-1})$ and Theorem~\ref{thm:4-block-lower} is proved.
\end{proof}

\section{Proof of Theorem~\ref{thm:3-block-graver}}\label{appsec:3block-graver-decompose}
\begin{reptheorem}{thm:3-block-graver}
	For any $3$-block $n$-fold matrix $H_0$ and $\veg(H_0)\in \G(H_0)$, $\|\veg(H_0)\|_{\infty}\le 
	\OFPT(n^{t_A^2+1})$.	
\end{reptheorem}

\subsubsection{A merging lemma}\label{subsec:merging}

We show in this subsection a merging lemma that allows us to transform an arbitrary summation of vectors into a sign-compatible summation by merging the vector-summands.

We start with a merging lemma on 1-dimensional vectors to illustrate the main idea.
\begin{lemma}\label{lemma:merging-lemma-1}
	Let $x_1,x_2,\cdots,x_m$ be a sequence of integers such that $x=\sum_{i=1}^mx_i$, and $|x_i|\le \zeta$. Then the $m$ integers can be partitioned into $m'$ subsets $T_1,T_2,\cdots,T_{m'}$ satisfying that: $\cup_{j=1}^{m'}T_j=\{1,2,\cdots,m\}$, and for every $1\le j\le m'$ it holds that $\sum_{i\in T_j} x_i\sqsubseteq x$, $|T_j|\le  6\zeta+2$. 
\end{lemma}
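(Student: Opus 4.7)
My plan is to prove the lemma by iteratively extracting small ``sign-compatible'' blocks from a pool of remaining indices. Without loss of generality, I would assume $x \ge 0$ (otherwise simultaneously negate $x$ and every $x_i$; neither $|x_i| \le \zeta$ nor the sign-compatibility condition is affected). Let $P$ denote the current pool of unextracted indices and $R = \sum_{i \in P} x_i$ the current remaining sum. The key claim I would establish is: either $|P| \le 6\zeta + 2$, in which case the entire pool forms one valid block, or there exists a nonempty subset $T \subsetneq P$ with $|T| \le 2\zeta + 1$ and $\sum_{i \in T} x_i \in [0, R]$. Since $0 \le \sum_{i \in T} x_i \le R$, the updated remaining sum $R' = R - \sum_{i \in T} x_i$ again lies in $[0, R] \subseteq [0, x]$. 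Hence the invariant $R \ge 0$ is preserved throughout, and every extracted block automatically satisfies $\sum_{i \in T_j} x_i \sqsubseteq x$.

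To prove the extraction claim, I would proceed in two steps. First, I would greedily reorder $P$ as $\pi(1), \pi(2), \ldots$ so that every partial sum $S_k = \sum_{i \le k} x_{\pi(i)}$ lies in $[-\zeta, R + \zeta]$. The rule at step $k+1$, given the current sum $S_k$, is: pick a remaining $x_i > 0$ if $S_k < 0$, pick a remaining $x_i < 0$ if $S_k > R$, and pick any remaining element otherwise. A required-sign element always exists, since if $S_k < 0$ the remaining pool sums to $R - S_k > R \ge 0$, forcing some positive entry, and symmetrically for $S_k > R$. Combined with $|x_i| \le \zeta$, this rule keeps $S_{k+1} \in [-\zeta, R + \zeta]$ throughout. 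Second, I would apply pigeonhole to the first $2\zeta + 2$ partial sums $S_0 = 0, S_1, \ldots, S_{2\zeta+1}$. If some $S_j$ with $j \in [1, 2\zeta + 1]$ lies in $[0, R]$, take the prefix $T = \{\pi(1), \ldots, \pi(j)\}$, of size at most $2\zeta + 1$ and sum $S_j \in [0, R]$. Otherwise, all of $S_1, \ldots, S_{2\zeta+1}$ lie in the ``bad'' set $[-\zeta, -1] \cup [R + 1, R + \zeta]$ of cardinality at most $2\zeta$; by pigeonhole two of them coincide, say $S_{j_1} = S_{j_2}$ with $1 \le j_1 < j_2 \le 2\zeta + 1$, and the middle block $T = \{\pi(j_1+1), \ldots, \pi(j_2)\}$ has sum $0$ and size at most $2\zeta$.

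Iterating the extraction produces blocks $T_1, T_2, \ldots$; each iteration strictly decreases $|P|$, so the procedure terminates after at most $m$ rounds. Every block has $|T_j| \le 2\zeta + 1 \le 6\zeta + 2$ and $\sum_{i \in T_j} x_i \in [0, x]$, so $\sum_{i \in T_j} x_i \sqsubseteq x$ as required. The main technical obstacle will be carefully verifying the two invariants of the greedy reordering step (existence of the required-sign element and preservation of the partial-sum bound $[-\zeta, R + \zeta]$ under each of the three pickup rules); once those are established, both the pigeonhole step and the outer iteration are essentially immediate. A minor bookkeeping point is that after extracting the middle block in the second pigeonhole case, the surviving indices $\pi(1),\ldots,\pi(j_1),\pi(j_2+1),\ldots$ form a new pool on which one re-runs the reordering from scratch, rather than trying to salvage the old reordering.
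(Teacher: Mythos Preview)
Your argument is correct and in fact slightly sharper than the paper's. Both proofs share the same outer structure: assume $x\ge 0$, repeatedly peel off a small block with sum in $[0,R]$ while maintaining the invariant that the residual sum $R$ stays nonnegative, and stop once the pool is small. The difference lies in how the prefix ordering is produced and how the two cases are split. The paper invokes the Steinitz lemma to get a permutation with $|S_\ell-\frac{\ell-1}{m}x|\le\zeta$ and then distinguishes whether $\frac{3\zeta+1}{m}x>\zeta$ (in which case the first $3\zeta+2$ elements already have nonnegative sum and can be removed) or not (in which case pigeonhole on the range $\{-\zeta,\dots,2\zeta\}$ yields a zero-sum sub-block). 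You instead build the ordering by an elementary greedy rule keeping $S_k\in[-\zeta,R+\zeta]$, and then pigeonhole directly on the $2\zeta$-element ``bad'' set $[-\zeta,-1]\cup[R+1,R+\zeta]$. Your route avoids citing Steinitz (though in dimension one Steinitz is essentially your greedy argument), gives extracted blocks of size at most $2\zeta+1$ rather than $3\zeta+2$, and makes the case split cleaner; the paper's route has the advantage that it foreshadows the higher-dimensional Lemma~\ref{lemma:merging-lemma}, where Steinitz is genuinely needed.
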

\begin{proof}
	Without loss of generality we assume that $x\ge 0$ (otherwise we argue on $-x_i$'s). If $m\le 6\zeta+2$ the lemma is trivial. Otherwise we apply Steinitz lemma (Lemma~\ref{lemma:steinitz}) to the integral sequence $x_1,x_2,\cdots,x_m$ and there exists a permutation $\pi$ such that for all $1\le \ell\le m$ it holds that
	$$|\sum_{i=1}^\ell x_{\pi(i)}-\frac{\ell-1}{m}x|\le \zeta.$$
	Now we consider the first $3\zeta+2$ numbers $x_{\pi(1)},x_{\pi(2)},\cdots,x_{\pi(3\zeta+2)}$. There are two possibilities regarding $(3\zeta+1)x/m$. 
	
	If $(3\zeta+1)x/m> \zeta$, then since $-\zeta\le \sum_{i=1}^{3\zeta+2}x_{\pi(i)}-(3\zeta+1)x/m\le \zeta$, we know that $\sum_{i=1}^{3\zeta+2}x_{\pi(i)}\ge 0$, and is consequently sign-compatible with $x$. Further notice that the summation of the remaining integers satisfies that $\sum_{i=3\zeta+3}^m x_{\pi(i)}\ge x-(3\zeta+1)x/m-\zeta$. Given that $m\ge 6\zeta+2$, $x-(3\zeta+1)x/m\ge (3\zeta+1)x/m>\zeta$, the summation of the remaining integers is still positive.
	
	Otherwise $(3\zeta+1)x/m\le \zeta$, and consequently $0\le (\ell-1) x/m\le \zeta$ for any $1\le \ell\le 3\zeta+2$. This implies that the values of the $3\zeta+2$ numbers $\sum_{i=1}^{\ell}x_{\pi(i)}$ where $1\le \ell\le 3\zeta+2$ must lie in the set $\{-\zeta,-\zeta+1,\cdots,2\zeta\}$, i.e., there must exist two distinct integers $\ell_1<\ell_2$ and $\ell_1,\ell_2\le 3\zeta+2$ such that 
	$\sum_{i=1}^{\ell_1} x_{\pi(i)}=\sum_{i=1}^{\ell_2} x_{\pi(i)}$. Consequently, $\sum_{i=1}^{\ell_2-\ell_1} x_{\pi(i)}=0$. Further notice that by taking out the sequence of integers $x_{\ell_1+1},\cdots,x_{\ell_2}$, the summation is of the remaining integers is still $x\ge 0$

	Hence, as long as $m\ge 6\zeta+2$, we can always select at most $3\zeta+2$ integers whose summation is non-negative, and if we delete them, the summation of the remaining integers is still non-negative. Hence, we can carry on our argument on the remaining integers, and the lemma is proved.
\end{proof}

We can extend the above lemma to higher dimensions using the same basic idea but a much more involved analysis.

In the following we write $\OO^*(x^k)$ to represent a function that is bounded by $(cx)^{k}$ for some constant $c$.
\begin{lemma}\label{lemma:merging-lemma}
	Let $\vex_1,\vex_2,\cdots,\vex_m$ be a sequence of vectors in $\mathbb{Z}^\kappa$ such that $\vex=\sum_{i=1}^m\vex_i$, and $\|\vex_i\|_{\infty}\le \zeta$. Then the $m$ vectors can be partitioned into $m'$ subsets $T_1,T_2,\cdots,T_{m'}$ satisfying that: $\cup_{j=1}^{m'}T_j=\{1,2,\cdots,m\}$, and for every $1\le j\le m'$ it holds that $\sum_{i\in T_j} \vex_i\sqsubseteq \vex$, $|T_j|=\OO^*(\zeta^{\kappa^2})$. 
\end{lemma}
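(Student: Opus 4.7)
\textbf{Proof proposal for Lemma~\ref{lemma:merging-lemma}.} The plan is to reduce dimension $\kappa$ to dimension $\kappa-1$ by induction, using Lemma~\ref{lemma:merging-lemma-1} as the base case and as the engine of the inductive step. The key idea is to handle one coordinate at a time: apply the one-dimensional merging lemma to the first coordinate, ``glue together'' the resulting blocks into super-vectors, and then recursively process the remaining $\kappa-1$ coordinates of those super-vectors. Sign-compatibility in the first coordinate is preserved by the gluing because all block-sums have the correct sign, and further unions of such blocks remain sign-compatible and stay bounded by $|x^{(1)}|$ (since all contributions have the same sign and together sum to $x^{(1)}$).

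More concretely, the inductive step proceeds as follows. Given $\vex_1,\dots,\vex_m\in\Z^\kappa$ with $\|\vex_i\|_\infty\le \zeta$ and $\sum_i\vex_i=\vex$, first apply Lemma~\ref{lemma:merging-lemma-1} to the integer sequence $x_1^{(1)},\dots,x_m^{(1)}$ (each of absolute value $\le \zeta$, summing to $x^{(1)}$) to obtain a partition $S_1,\dots,S_{m'_1}$ of $\{1,\dots,m\}$ with $|S_k|\le 6\zeta+2$ and $\sum_{i\in S_k}x_i^{(1)}\sqsubseteq x^{(1)}$. Define the super-vectors $\vey_k:=\sum_{i\in S_k}\vex_i$. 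These satisfy $\sum_k\vey_k=\vex$, $\|\vey_k\|_\infty\le (6\zeta+2)\zeta$, and $y_k^{(1)}\sqsubseteq x^{(1)}$. Now project the $\vey_k$'s onto their last $\kappa-1$ coordinates and invoke the inductive hypothesis: we obtain a partition $U_1,\dots,U_{m'}$ of $\{1,\dots,m'_1\}$ such that $\sum_{k\in U_j}(y_k^{(2)},\dots,y_k^{(\kappa)})\sqsubseteq (x^{(2)},\dots,x^{(\kappa)})$.

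Define $T_j:=\bigcup_{k\in U_j}S_k$. Sign-compatibility of $\sum_{i\in T_j}\vex_i$ with $\vex$ is verified coordinatewise: coordinates $2,\dots,\kappa$ are handled directly by the inductive hypothesis, while for coordinate $1$, the sum $\sum_{i\in T_j}x_i^{(1)}=\sum_{k\in U_j}y_k^{(1)}$ is a subset-sum of terms all sign-compatible with $x^{(1)}$ and totaling to $x^{(1)}$, so it automatically lies in $\sqsubseteq x^{(1)}$. The size bound is $|T_j|\le (6\zeta+2)\cdot |U_j|$, which combined with the inductive bound on $|U_j|$ (evaluated at effective norm $\zeta'=(6\zeta+2)\zeta$) yields a polynomial-in-$\zeta$ bound whose exponent is a function of $\kappa$.

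The main obstacle will be obtaining the precise exponent $\kappa^2$ rather than the weaker bound that a naive unfolding of the recursion produces: since each inductive step inflates the effective $\ell_\infty$-bound multiplicatively by the preceding block size, a straightforward application of the recursion $f(\kappa,\zeta)\le (6\zeta+2)\cdot f(\kappa-1,(6\zeta+2)\zeta)$ gives an exponent growing much faster than $\kappa^2$. To recover the tight exponent one needs a finer control at the merging stage --- e.g.\ observing that in every coordinate previously ``fixed'' the super-vector entries are all sign-aligned (so their $\ell_1$-total is exactly $|x^{(j)}|$ rather than the loose $|S_k|\zeta$ bound), which lets one use a Steinitz-style partial-sum/pigeonhole argument in the remaining coordinates on a smaller-range lattice at each inductive level rather than re-applying the crude one-dimensional merge with an inflated $\zeta$. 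Packaging this refinement cleanly across all $\kappa$ levels of the induction is the technical heart of the proof.
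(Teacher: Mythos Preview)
Your plan is exactly the coordinate-by-coordinate induction that the paper itself flags (in the Remark following the lemma) as yielding only the weaker bound $\OO^*(\zeta^{2^\kappa})$, not $\OO^*(\zeta^{\kappa^2})$. Unrolling your recursion $f(\kappa,\zeta)\le (6\zeta+2)\cdot f(\kappa-1,(6\zeta+2)\zeta)$ gives $f(\kappa,\zeta)=\OO^*(\zeta^{2^\kappa-1})$, because the effective norm squares at every level. You recognise this, but the ``finer control'' you sketch---that in already-fixed coordinates the super-vectors are sign-aligned with $\ell_1$-total $|x^{(j)}|$---does not repair the blowup: the one-dimensional merging step on coordinate $h+1$ still depends on the $\ell_\infty$-norm of the super-vectors in \emph{that} coordinate, which has already been inflated by the product of all previous block sizes. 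Nothing in your outline prevents this multiplicative cascade.

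The paper's argument is structurally different and does not proceed coordinate-by-coordinate with re-merging. It applies the Steinitz lemma \emph{once} in full dimension $\kappa$, fixes the resulting permutation, and then repeatedly extracts a single conformal subset $T$ consisting of a \emph{contiguous interval} of this permutation. The key technical content is an inductive statement over the coordinates (ordered so that $x^1\le\cdots\le x^\kappa$): for each $j$ one either already has a conformal interval of length $\OO^*(\zeta^{(\kappa-j+1)\kappa})$, or one has a threshold $\mu_j$ with $\mu_j x^j/m>2\zeta$ and $\mu_j=\OO^*(\zeta^{(\kappa-j+1)\kappa})$. A gap lemma (Claim~\ref{claim:conformal-sum1}) says that if adjacent coordinates satisfy $x^j/x^{j-1}$ large enough, a pigeonhole on prefix sums in the small coordinates yields the desired interval; otherwise $\mu_{j-1}$ is bounded in terms of $\mu_j$ with only a $\zeta^{\OO(\kappa)}$ multiplicative loss. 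Running $j$ from $\kappa$ down to $1$ accumulates a total exponent of $\kappa\cdot\kappa=\kappa^2$. The point is that all coordinates are controlled simultaneously by the single Steinitz ordering; there is no iterated re-merging and hence no doubling of the exponent.
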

\begin{proof}
	Again we assume without loss of generality that $\vex\ge 0$ (if some of the coordinates are negative, then we take the nagation of every $\vex_i$ and $\vex$ at this coordinate). For $\vex=(\vex^1,\vex^2,\cdots,\vex^{\kappa})$, we further assume that $\vex^1\le \vex^2\le\cdots\le \vex^{\kappa}$ (Notice that here $\vex^j\in\Z$). By Steinitz lemma (Lemma~\ref{lemma:steinitz}), there exists a permutation $\pi$ such that for all $1\le \ell\le m$ it holds that
	\begin{eqnarray*}\label{eq:steinitz1}
		\|\sum_{i=1}^\ell \vex_{\pi(i)}-\frac{\ell-\kappa}{m}\vex\|_{\infty}\le \zeta.
	\end{eqnarray*}
	For simplicity, we reorder all the vectors such that $\vex_{\pi(i)}$ is at the $i$-th location, i.e., we assume that the given vectors $\vex_i$ satisfy that
	\begin{eqnarray}\label{eq:steinitz}
	\|\sum_{i=1}^\ell \vex_{i}-\frac{\ell-\kappa}{m}\vex\|_{\infty}\le \zeta.
	\end{eqnarray}
	
	Our goal is to show the following claim:
	\begin{claim}\label{claim:conformal-sum}
		There always exists a subset $T$ such that, $|T|=\OO^*(\zeta^{\kappa^2}) $, $\sum_{i\in T} \vex_i \sqsubseteq \vex$ and $\vex-\sum_{i\in T} \vex_i\ge 0$.	
	\end{claim}
	If the claim is true, we can iteratively apply it to cut the whole sequence of vectors into subsets $T_1,T_2,\cdots,T_{m'}$ and Lemma~\ref{lemma:merging-lemma} is proved.
	
	To prove Claim~\ref{claim:conformal-sum}, we need the following two claims. For simplicity, we say a subset $T$ is {\em conformal} if $\sum_{i\in T} \vex_i \sqsubseteq \vex$ and $\vex-\sum_{i\in T} \vex_i\ge 0$.  
	
	%\begin{claim}\label{claim:conformal-sum0}
	%	If there exists some $\mu_{max}$ such that $m\ge $, $\frac{\mu_{max}}{m}\vex^j\le 2\zeta$, then there exists a subset $T$ such that $|T|\le $, $\sum_{i\in T} \vex_i \sqsubseteq \vex$ and $\vex-\sum_{i\in T} \vex_i\ge 0$.
	%\end{claim}
	%\begin{proof}[Proof of Claim~\ref{claim:conformal-sum0}]
	%	Consider $\sum_{i=1}^{\ell}\vex_i$ where $\ell\in \mathcal{L}=\{1,2,\cdots,\}$ 
	%\end{proof}
	
	\begin{claim}\label{claim:conformal-sum1}
		For any $1\le j\le \kappa$, if there exists some $\mu_j$ such that %$m\ge 3(6\zeta+1)^{j-1}\mu_j+\kappa$,
		$\frac{\mu_j}{m}\vex^j>2\zeta\ge \frac{\mu_j-1}{m}\vex^j$ and $(\mu_j-1)\frac{\vex^j}{\vex^{j-1}}>\kappa+(6\zeta+1)^{j-1}\mu_j$, then there exists a subset $T$ such that $|T|\le 3(6\zeta+1)^{j-1}\mu_j+\kappa$ and $T$ is conformal, i.e., $\sum_{i\in T} \vex_i \sqsubseteq \vex$ and $\vex-\sum_{i\in T} \vex_i\ge 0$.
	\end{claim}
	\begin{proof}[Proof of Claim~\ref{claim:conformal-sum1}] %We pick $\mu_j'\le \mu_j$ as the smallest integer such that $\frac{\mu_j'}{m}\vex^j> 2\zeta$, which implies that $\frac{\mu_j'-1}{m}\vex^j\le 2\zeta$. 
		If $m\le 3(6\zeta+1)^{j-1}\mu_j+\kappa$, then we take $T=\{1,2,\cdots,m\}$. In the following we assume that $m>3(6\zeta+1)^{j-1}\mu_j+\kappa$.
		Recall that $\vex^1\le \vex^2\le\cdots\le \vex^{\kappa}$, whereas $\frac{\mu_j}{m}\vex^h>2\zeta$ for any $h\ge j$. Consider an arbitrary subsequence of vectors whose length is $\mu\ge \mu_j$, say, $\vex_{i_0}, \vex_{i_0+1},\cdots,\vex_{i_0+\mu-1}$. By Eq~(\ref{claim:conformal-sum}), we have
		\begin{eqnarray}\label{eq:claim}
		\|\sum_{i=1}^{i_0-1} \vex_{i}-\frac{i_0-1-\kappa}{m}\vex\|_{\infty}\le \zeta, \quad {\text{ and }}\quad  \|\sum_{i=1}^{i_0+\mu-1} \vex_{i}-\frac{i_0+\mu-1-\kappa}{m}\vex\|_{\infty}\le \zeta.
		\end{eqnarray}
		Consequently, for any $h\ge j$, it follows that	
		\begin{eqnarray*}
			\sum_{i=1}^{i_0-1} \vex_{i}^h\le \frac{i_0-1-\kappa}{m}\vex^h+ \zeta, \quad {\text{ and }}\quad  \sum_{i=1}^{i_0+\mu-1} \vex_{i}^h\ge \frac{i_0+\mu-1-\kappa}{m}\vex^h- \zeta.
		\end{eqnarray*}
		Thus,
		\begin{eqnarray}\label{eq:claim1}
		\sum_{i=i_0}^{i_0+\mu-1} \vex_{i}^h\ge \frac{\mu}{m}\vex^h- 2\zeta\ge\frac{\mu_j}{m}\vex^h-2\zeta >0, \quad \forall h\ge j
		\end{eqnarray} 
		This means, the summation of any adjacent $\mu\ge \mu_j$ vectors satisfies that the sum is positive on every $h$-th coordinate for $h\ge j$.
		
		Meanwhile, by Eq~(\ref{eq:claim}) we have
		\begin{eqnarray*}
			\sum_{i=1}^{i_0-1} \vex_{i}^h\ge \frac{i_0-1-\kappa}{m}\vex^h- \zeta, \quad {\text{ and }}\quad  \sum_{i=1}^{i_0+\mu-1} \vex_{i}^h\le \frac{i_0+\mu-1-\kappa}{m}\vex^h+ \zeta.
		\end{eqnarray*}
		Thus, $$\sum_{i=i_0}^{i_0+\mu-1} \vex_{i}^h\le\frac{\mu}{m}\vex^h+2\zeta,\quad \forall h\ge j.$$
		Meanwhile, we have
		$$\sum_{i=1}^{m} \vex_{i}^h\ge\frac{m-\kappa}{m}\vex^h-\zeta\ge \frac{\mu}{m}\vex^h\cdot \frac{m-\kappa}{\mu}-\zeta,\quad \forall h\ge j.$$
		Thus, 
		\begin{eqnarray}\label{eq:claim2}
		\sum_{i=1}^{m} \vex_{i}^h-\sum_{i=i_0}^{i_0+\mu-1} \vex_{i}^h\ge \frac{\mu}{m}\vex^h\cdot (\frac{m-\kappa}{\mu}-1)-3\zeta,\quad \forall h\ge j.
		\end{eqnarray}
		Recall that $\frac{\mu}{m}\vex^h>2\zeta$, as long as $m-\kappa\ge 3\mu$, we know that $\sum_{i=1}^{m} \vex_{i}^h-\sum_{i=i_0}^{i_0+\mu-1} \vex_{i}^h> 0$ for all $h\ge j$.

		Next we consider the $h$-th coordinate for $h<j$. Recall that $\frac{\mu_j-1}{m}\vex^j\le 2\zeta$. As $(\mu_j-1)\frac{\vex^j}{\vex^{j-1}}>\kappa+(6\zeta+1)^{j-1}\mu_j$, it follows directly that
		$$\frac{\kappa+(6\zeta+1)^{j-1}\mu_j}{m}\vex^h\le 2\zeta, \quad \forall h\le j-1.$$ Now we consider the $1+(6\zeta+1)^{j-1}$ vectors $\sum_{i=1}^{\ell}\vex_i$ for $\ell\in \mathcal{L}_j$ where $\mathcal{L}_j=\{\kappa,\kappa+\mu_j,\kappa+2\mu_j,\cdots,\kappa+(6\zeta+1)^{j-1}\mu_j\}$. For any $\ell\in \mathcal{L}_j$, it is clear that
		%$$0\le \frac{\ell-\kappa}{m}\vex^h\le \frac{\mu_j'-1}{m}\vex^h\cdot \ell\le \frac{\mu_j'-1}{m}\vex^h\cdot [\kappa+(6\zeta+1)^{j-1}]\mu_j'\le  2\zeta,\quad \forall h\le j-1.$$
		%Consequently, 
		$$|\sum_{i=1}^{\ell}x_i^h|\le \frac{\ell-\kappa}{m}\vex^h+2\zeta\le 3\zeta, \quad \forall h\le j-1$$
		that is, $\sum_{i=1}^{\ell}x_i^h\in\{-3\zeta,-3\zeta+1,\cdots,3\zeta\}$ for all $\ell\in \mathcal{L}_j$ and $h\le j-1$. Hence, if we consider the projection of $\sum_{i=1}^\ell\vex_i$ onto its first $j-1$ coordinates, this projection lies within $\{-3\zeta,-3\zeta+1,\cdots,3\zeta\}^{j-1}$, implying that there must exist $\ell_1<\ell_2$ such that $\sum_{i=1}^{\ell_1}\vex_i$ and $\sum_{i=1}^{\ell_2}\vex_i$ have the same projection. Hence, the first $j-1$ coordinates of $\sum_{i=\ell_1+1}^{\ell_2}\vex_i$ are all $0$. Furthermore, we observe the followings: 1). $\ell_2-\ell_1>\mu_j'$, whereas for $h\ge j$, the $h$-th coordinate of $\sum_{i=\ell_1+1}^{\ell_2}\vex_i$ is positive according to Eq~(\ref{eq:claim1}). 2). $\ell_2-\ell_1\le (6\zeta+1)^{j-1}\mu_j$ and $m\ge 3(6\zeta+1)^{j-1}\mu_j+\kappa$, whereas for $h\ge j$, the $h$-th coordinate of $\sum_{i=1}^m\vex_i-\sum_{i=\ell_1+1}^{\ell_2}\vex_i$ is also positive according to Eq~(\ref{eq:claim2}). Hence, $\sum_{i=\ell_1+1}^{\ell_2}\vex_i\sqsubseteq \vex$ and $\sum_{i=1}^m\vex_i-\sum_{i=\ell_1+1}^{\ell_2}\vex_i\ge 0$, i,e, by taking $T=\{\ell_1+1,\ell_1+2,\cdots,\ell_2\}$, Claim~\ref{claim:conformal-sum1} is true. \end{proof}

	Now we come to the proof of Claim~\ref{claim:conformal-sum}.
	
	\begin{proof}[Proof of Claim~\ref{claim:conformal-sum}]
		We prove the claim by induction on the following hypothesis.
		
		\smallskip
		\noindent\textbf{Statement (Hypothesis):}
		For $1\le j\le \kappa$, either there exists some $T$ which is conformal (i.e., $\sum_{i\in T} \vex_i \sqsubseteq \vex$ and $\vex-\sum_{i\in T} \vex_i\ge 0$) and $|T|=\OO^*(\zeta^{(\kappa-j+1)\kappa})$, or there exists some $\mu_{j}=\OO^*(\zeta^{(\kappa-j+1)\kappa})$ such that $\frac{\mu_{j}}{m}\vex^{j}> 2\zeta \ge \frac{\mu_{j}-1}{m}\vex^{j}$. 
		
		\smallskip
		We first prove the statement for $j=k$.
		Taking $\mu_{\kappa}'=(6\zeta+1)^{\kappa}+\kappa=\OO^*(\zeta^{\kappa})$. There are two possibilities. 
		
		If $\frac{\mu_{\kappa}'}{m}\vex^{k}\le 2\zeta $, then $\frac{\mu_{\kappa}'}{m}\vex^{j}\le 2\zeta $ for all $1\le j\le k$. Consequently, for $\ell\in \mathcal{L}=\{\kappa,\kappa+1,\cdots,\mu_{\kappa}'\}$, we have
		$$\|\sum_{i=1}^{\ell}\vex_i\|_{\infty}\le \frac{\mu_{\kappa}'}{m}\vex^{k}+\zeta\le 3\zeta, \quad \forall i\in \mathcal{L}$$
		i.e., $\sum_{i=1}^{\ell}\vex_i\in\{-3\zeta,-3\zeta+1,\cdots,3\zeta\}^{\kappa}$.  Since $|\mathcal{L}|=(6\zeta+1)^{\kappa}+1$, there exist $\ell_1<\ell_2$ and $\ell_1,\ell_2\in\mathcal{L}$ such that $\sum_{i=1}^{\ell_1}\vex_i=\sum_{i=1}^{\ell_2}\vex_i$, i.e., $\sum_{i=\ell_1+1}^{\ell_2}\vex_i=0$. Taking $T=\{\ell_1+1,\cdots,\ell_2\}$, the statement is true.
		
		Otherwise, $\frac{\mu_{\kappa}'}{m}\vex^{k}> 2\zeta $. Then there exists some $\mu_{\kappa}\le \mu_{\kappa}'=\OO^*(\zeta^{k})$ such that $\frac{\mu_{\kappa}}{m}\vex^{k}> 2\zeta \ge \frac{\mu_{\kappa}-1}{m}\vex^{k}$. That is, the statement is also true.
		
		Suppose the statement (hypothesis) holds for $j\ge j_0$, we prove it for $j=j_0-1$. According to the statement, either there exists some $T$ satisfying Claim~\ref{claim:conformal-sum} with $|T|=\OO^*(\zeta^{(\kappa-j_0+1)\kappa})$, or there exists some $\mu_{j_0}=\OO^*(\zeta^{(\kappa-j_0+1)\kappa})$ such that $\frac{\mu_{j_0}}{m}\vex^{j_0}> 2\zeta \ge \frac{\mu_{j_0}-1}{m}\vex^{j_0}$. If the former case is true, then obviously the same $T$ satisfies that $|T|\le \OO^*(\zeta^{(\kappa-j_0+2)\kappa})$, implying that the statement is true for $j=j_0-1$. Hence, from now on we assume the latter case is true. According to Claim~\ref{claim:conformal-sum1}, if $(\mu_{j_0}-1)\frac{\vex^{j_0}}{\vex^{j_0-1}}>\kappa+(6\zeta+1)^{j_0-1}\mu_{j_0}$, then there exists a subset $T$ which is conformal and $|T|\le (6\zeta+1)^{j_0-1}\mu_{j_0}$. Plugging in $\mu_{j_0}=\OO^*(\zeta^{(\kappa-j_0+1)\kappa})$, we have $|T|=\OO^*(\zeta^{\kappa-j_0+2})$, that is, if $(\mu_{j_0}-1)\frac{\vex^{j_0}}{\vex^{j_0-1}}>\kappa+(6\zeta+1)^{j_0-1}\mu_{j_0}$, then the statement holds for $j=j_0-1$. Thus, in the following we assume that $(\mu_{j_0}-1)\frac{\vex^{j_0}}{\vex^{j_0-1}}\le \kappa+(6\zeta+1)^{j_0-1}\mu_{j_0}$. Notice that $\vex^j/m\le \zeta$ (as $\|\vex_i\|_{\infty}\le \zeta$). According to $\frac{\mu_{j_0}}{m}\vex^{j_0}> 2\zeta$, we know $\mu_{j_0}\ge 2$, whereas
		$$\frac{\vex^{j_0}}{\vex^{j_0-1}}\le \frac{\kappa+(6\zeta+1)^{j_0-1}\mu_{j_0}}{\mu_{j_0}-1},$$
		and consequently
		$$\frac{\mu_{j_0}}{\mu_{j_0}-1}\cdot\frac{\kappa+(6\zeta+1)^{j_0-1}\mu_{j_0}}{m}\vex^{j_0-1}> 2\zeta.$$
		Since $\mu_{j_0}=\OO^*(\zeta^{(\kappa-j_0+1)\kappa})$, we can conclude that $\frac{\mu_{j_0}}{\mu_{j_0}-1}\cdot[{\kappa+(6\zeta+1)^{j_0-1}\mu_{j_0}}]=\OO^*(\zeta^{(\kappa-j_0+2)\kappa})$, hence, there exists some $\mu_{j_0-1}=\OO^*(\zeta^{(\kappa-j_0+2)\kappa})$ such that $\frac{\mu_{j_0-1}}{m}\vex^{j_0-1}>2\zeta\ge \frac{\mu_{j_0-1}-1}{m}\vex^{j_0-1}$. Thus, the statement holds for $j=j_0-1$.
		
		Now we have proved the statement for all $1\le j\le \kappa$. Taking $j=1$, either there exists some subset $T$ which is conformal and $|T|= \OO^*(\zeta^{\kappa^2})$, and Claim~\ref{claim:conformal-sum} is proved; Or there exists some $\mu_1=\OO^*(\zeta^{\kappa^2})$ such that $\frac{\mu_1}{m}\vex^1>2\zeta$. As $\vex^1\le \vex^j$ for all $j\le \kappa$, it holds that $\frac{\mu_1}{m}\vex^j>2\zeta$. There are two possibilities. If $m\le 2\mu_1+\kappa=\OO^*(\zeta^{\kappa^2})$, we simply take $T=\{1,2,\cdots,m\}$. Otherwise, we have 	
		\begin{eqnarray*}
			\|\sum_{i=1}^{\mu_1+\kappa} \vex_{i}-\frac{\mu_1}{m}\vex\|_{\infty}\le \zeta, \quad \textrm{ and }\quad  \|\sum_{i=1}^{m} \vex_{i}-\frac{m-\kappa}{m}\vex\|_{\infty}\le \zeta.
		\end{eqnarray*}
		Consequently, for any $1\le j\le \kappa$, it follows that	
		\begin{eqnarray*}
			0\le \frac{\mu_1}{m}\vex^j- \zeta	\le \sum_{i=1}^{\mu_1+\kappa} \vex_{i}^j\le \frac{\mu_1}{m}\vex^j+ \zeta, \quad {\text{ and }}\quad  \sum_{i=1}^{m} \vex_{i}^j\ge \frac{m-\kappa}{m}\vex^h- \zeta\ge 2\frac{\mu_1}{m}\vex^h- \zeta>\frac{\mu_1}{m}\vex^h+ \zeta.
		\end{eqnarray*}
		Hence, taking $T=\{1,2,\cdots,\mu_1+\kappa\}$ we have that $\sum_{i\in T}\vex_i\sqsubseteq \vex$ and $\sum_{i=1}^m\vex_i-\sum_{i\in T}\vex_i\ge 0$, and $|T|=\OO^*(\zeta^{\kappa^2})$, i.e., Claim~\ref{claim:conformal-sum} is proved.
	\end{proof}
	Iterratively applying Claim~\ref{claim:conformal-sum}, Lemma~\ref{lemma:merging-lemma} is proved. 
\end{proof}
%\noindent\textbf{Remark.} Note that the big-$\OO$ notation throughout the proofs hides a multiplicative factor of $c_0^{\kappa^2}$ for some constant $c_0$.
\noindent\textbf{Remark.} It is notable that a weaker version of Lemma~\ref{lemma:merging-lemma} can also be proved, in a much simpler way, by iteratively applying Lemma~\ref{lemma:merging-lemma-1} to each individual dimension. However, by doing so we get an upper bound of $\OO^*(\zeta^{2^{\kappa}})$ on $|T_j|$'s, which is much worse.

\subsubsection{A decomposition in two orthants}\label{subsec:decompose-two}

The goal of this subsection is to provide a semi-sign-compatible decomposition. More precisely, we prove the following:

\begin{lemma}\label{lemma:decompose-bounded}
	For any $\vey\in \ker_{\Z}(H_0)$, there exist $q=\OFPT(1)$ vectors $\vece_h$, $\alpha_h,\beta_{\ell}\in\Z_+$ and at most $2nt_A-1$ vectors $\ved_\ell=(0,\veg_\ell(E))$ where $\veg_\ell(E)\in\G(E)$ such that $\vece_h^0\sqsubseteq \vey^0$, $\|\vece_h\|_{\infty}\le \xi'=\OFPT(1)$,
	$\vey=\sum_{h=1}^q\alpha_h\vece_h+\sum_{\ell}\beta_{\ell}\ved_\ell$, and moreover, all the $\vece_h$'s lie in the same orthant, and all the $\ved_\ell$'s lie in the same orthant. 	
\end{lemma}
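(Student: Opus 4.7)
The plan is to apply Lemma~\ref{lemma:decompose-bounded-simpler} and then recast the decomposition in a low-dimensional coefficient space, where a Hilbert basis argument enforces the same-orthant property on the $\vece_h$'s. The $\bar{\ved}_\ell$'s delivered by Lemma~\ref{lemma:decompose-bounded-simpler} already share a common orthant, so they are reused verbatim; only the $\sum_h\alpha_h\bar{\vece}_h$ portion needs modification.

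Concretely, Lemma~\ref{lemma:decompose-bounded-simpler} produces $\vey=\sum_{h=1}^{q'}\alpha_h\bar{\vece}_h+\sum_\ell\beta_\ell\bar{\ved}_\ell$ with $q'=\OFPT(1)$, $||\bar{\vece}_h||_\infty\le\xi$, and first bricks $\veu_h:=\bar{\vece}_h^0\sqsubseteq\vey^0$. Set $\vew:=\sum_h\alpha_h\bar{\vece}_h\in\ker(H_0)$ and consider the linear map $L(\vecc):=\sum_{h=1}^{q'}c_h\bar{\vece}_h$, so that $L(\vealpha)=\vew$ for $\vealpha=(\alpha_1,\ldots,\alpha_{q'})$, together with the polyhedral cone
\[
\mathcal{F}=\{\vecc\in\R_{\ge 0}^{q'}: L(\vecc)^i\cdot\vew^i\ge 0\ \text{for every}\ i,\ \text{and}\ L(\vecc)^i=0\ \text{whenever}\ \vew^i=0\}.
\]
By construction $\vealpha\in\mathcal{F}$, and every $L(\vecc)$ with $\vecc\in\mathcal{F}$ lies in the single orthant of $\vew$.

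The structural observation that makes the argument work is that, although $\mathcal{F}$ has one inequality per output coordinate (up to $t_B+nt_A$ of them), only $\OFPT(1)$ of them are essentially distinct: the coefficient vector $(\bar{\vece}_1^i,\ldots,\bar{\vece}_{q'}^i)$ associated with coordinate $i$ lies in $\{-\xi,\ldots,\xi\}^{q'}$, a set of cardinality $(2\xi+1)^{q'}=\OFPT(1)$, and only three constraint directions arise. Hence $\mathcal{F}$ is a pointed rational cone in $\R^{q'}$ cut out by $\OFPT(1)$ integer inequalities with coefficients of absolute value at most $\xi=\OFPT(1)$, so classical bounds on Hilbert bases furnish a Hilbert basis $\{\veb_1,\ldots,\veb_q\}$ of $\mathcal{F}\cap\Z^{q'}$ with $q=\OFPT(1)$ elements and $||\veb_k||_\infty=\OFPT(1)$. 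Writing $\vealpha=\sum_{k=1}^{q}\lambda_k\veb_k$ with $\lambda_k\in\Z_{\ge 0}$ and setting $\vece_k:=L(\veb_k)\in\ker(H_0)$ and $\ved_\ell:=\bar{\ved}_\ell$ then yields $\vey=\sum_k\lambda_k\vece_k+\sum_\ell\beta_\ell\ved_\ell$ with $||\vece_k||_\infty\le q'\xi||\veb_k||_\infty=\OFPT(1)$ and all $\vece_k$'s in the orthant of $\vew$.

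For the remaining condition $\vece_k^0\sqsubseteq\vey^0$: whenever $\lambda_k\ge 1$ the identity $\vealpha=\sum_{k'}\lambda_{k'}\veb_{k'}$ with $\lambda_{k'}\ge 0$ and $\veb_{k'}\ge 0$ forces $\veb_k\le\vealpha$ componentwise, and since $\veu_h\sqsubseteq\vey^0$ makes every $\veu_h^i$ share the sign of $\vey^{0,i}$, one obtains
\[
|\vece_k^{0,i}|=\Big|\sum_{h=1}^{q'}(\veb_k)_h\,\veu_h^i\Big|=\sum_{h=1}^{q'}(\veb_k)_h\,|\veu_h^i|\le\sum_{h=1}^{q'}\alpha_h\,|\veu_h^i|=|\vey^{0,i}|.
\]
The main obstacle is justifying the dimensional reduction---that the $(t_B+nt_A)$-dimensional sign-compatibility condition on $L(\vecc)$ collapses into $\OFPT(1)$ essentially distinct inequalities on the $q'$-dimensional variable $\vecc$---since this keeps the Hilbert basis size and entries independent of $n$; once this reduction is in hand, the trick $\veb_k\le\vealpha$ delivers the first-brick magnitude bound $\vece_k^0\sqsubseteq\vey^0$ essentially for free.
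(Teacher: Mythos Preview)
Your argument is correct and takes a genuinely different route from the paper. Both proofs start from Lemma~\ref{lemma:decompose-bounded-simpler} and both rest on the same structural observation: across all $t_B+nt_A$ coordinates $i$, the coefficient tuple $(\bar{\vece}_1^i,\ldots,\bar{\vece}_{q'}^i)$ assumes only $(2\xi+1)^{q'}=\OFPT(1)$ distinct values, so the apparently $n$-dimensional sign-compatibility condition is really an $\OFPT(1)$-sized system. Where the two diverge is in how they exploit this.

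The paper passes to ``reduced'' vectors $Rd(\bar{\vece}_h)$ of dimension $t_B+\theta t_A=\OFPT(1)$ (one representative brick per type) and applies the bespoke merging Lemma~\ref{lemma:merging-lemma} to the multiset consisting of $|K_j|$ copies of each $Rd(\bar{\vece}_j)$; the resulting partition into groups $S_j$ of size $\OFPT(1)$ is then lifted back, and the $\sqsubseteq$-conclusion of the merging lemma applied to brick~$0$ yields $\vece_j^0\sqsubseteq\vey^0$ directly. You instead stay in the $q'$-dimensional \emph{coefficient} space, encode the same-orthant requirement as a pointed rational cone $\mathcal{F}\subseteq\R_{\ge 0}^{q'}$ with $\OFPT(1)$ facets, and decompose $\vealpha$ over a Hilbert basis of $\mathcal{F}\cap\Z^{q'}$. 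Your route is arguably more elementary in that it relies only on standard Hilbert-basis bounds rather than the custom merging machinery, and it delivers $\vece_k^0\sqsubseteq\vey^0$ neatly from $\veb_k\le\vealpha$ (valid because all $\veb_{k'}\ge 0$). The paper's route has the advantage of being uniform with the subsequent, more delicate proof of Theorem~\ref{thm:3-block-graver}, where Lemma~\ref{lemma:merging-lemma} is indispensable; there the summands no longer live in a single cone of coefficients, so a Hilbert-basis argument of your flavor would not apply.
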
 
Note that $\vece_h$ and $\ved_\ell$ do not necessarily lie in the same orthant. %The $\xi$ in this lemma is the same as that in Theorem~\ref{lemma:3-infty-bound}.

%Towards the proof, we first show a simpler result, which will also be utilized later.
\begin{proof}%[Proof of Lemma~\ref{lemma:decompose-bounded}]
		According to Theorem~\ref{lemma:3-infty-bound}, there exist $\vece_1,\vece_2,\cdots,\vece_k$ with $\|\vece_h\|_{\infty}\le \xi$, $\vece_h^0\sqsubseteq \vey^0$ such that $\vey=\sum_{h=1}^k \vece_h$. Let $\veu_1,\veu_2,\cdots,\veu_{\eta}$ be all the $t_B$-dimensional vectors whose $\ell_{\infty}$-norm is bounded by $\xi$, then $\eta=O(\xi^{t_B})=\OFPT(1)$. For any $\veu_j$, we pick an arbitrary $\bar{\vece}_j$ such that $H_0\bar{\vece}_j=0$ and $\bar{\vece}_j^0=\veu_j$. Note that such a $\bar{\vece}_j$ can be found out by solving an $n$-fold IP, which can be done in $\OFPT(n^3L)$ time~\cite{hemmecke2013n}. Among $\vece_1$ to $\vece_k$, we define $K_j=\{\vece_h:\vece_h^0=\veu_j, 1\le h\le k\}$. We have
	$$\vey=\sum_{j=1}^\eta \bar{\vece}_j\cdot |K_j|+\sum_{j=1}^\eta\sum_{\vece_h\in K_j}(\vece_h-\bar{\vece}_j)$$
	
	Since $H_0\vece_h=0$, it is easy to see that $\sum_{j=1}^q\sum_{\vece_h\in K_j}(\vece_h-\bar{\vece}_j)=(0,\vece')$ where $\vece'$ is a feasible solution to $E\vex=0$ where $E$ is an $n$-fold matrix. According to \cite{hemmecke2013n}, there exists at most $2nt_A-1$ vectors $\veg_\ell(E)\in\G(E)$, $\veg_\ell(E)\sqsubseteq \vece'$ and $\beta_\ell\in\Z_+$ such that $\vece'=\sum_{\ell}\beta_\ell\veg_\ell(E)$. Define $\ved_\ell=(0,\veg_\ell(E))$, we have
	$$\sum_{j=1}^\eta\sum_{\vece_h\in K_j}(\vece_h-\bar{\vece}_j)=\sum_{\ell}\beta_\ell\ved_\ell.$$
	
Now we consider the $\bar{\vece}_j$'s where $K_j\neq\emptyset$. If they are all sign-compatible, the lemma is proved. Otherwise we try to apply Lemma~\ref{lemma:merging-lemma} to the sequence of vectors that consists of $|K_j|$ copies of $\bar{\vece}_j$. Note that we cannot directly apply the lemma as $\bar{\vece}_j$'s have very high dimensions. However, if we consider the bricks $\bar{\vece}_j^i$, since $\|\bar{\vece}_j^i\|_{\infty}\le \xi$, there are at most $\xi^{O(t_A)}$ different kinds of bricks. Consequently, if we consider the vectors that consists of the $\eta$ bricks $(\bar{\vece}_1^i,\bar{\vece}_2^i,\cdots,\bar{\vece}_\eta^i)$, there are at most $\theta=\xi^{O(\eta t_A)}=\OFPT(1)$ different kinds of vectors for $1\le i\le n$. We let these vectors be $\phi_1,\phi_2,\cdots,\phi_\theta$. Now we consider the \lq\lq reduced\rq\rq\, vectors $Rd(\bar{\vece}_j)$ that only consists of distinct vectors. More precisely, we define the set of indices $In_j=\{i:(\bar{\vece}_1^i,\bar{\vece}_2^i,\cdots,\bar{\vece}_\eta^i)=\phi_j\}$. For each $In_j$, we pick an arbitrary $i_j\in In_j$ and define a $(t_B+\theta t_A)$-dimensional vector $Rd(\bar{\vece}_h)=(\bar{\vece}_h^0,\bar{\vece}_h^{i_1}, \bar{\vece}_h^{i_2},\cdots,\bar{\vece}_h^{i_\theta})$. Note that $\bar{\vece}_h$ is simply a vector that copies the bricks of $Rd(\bar{\vece}_h)$ for multiple times. Now we consider the sequence that consists of $|K_j|$ copies of $Rd(\bar{\vece}_j)$. Applying Lemma~\ref{lemma:merging-lemma}, we can divide these vectors into disjoint subsets $S_1,S_2,\cdots, S_m$ such that the summation of vectors in each subset is sign-compatible, and each subset has cardinality bounded by $\xi^{O(\theta t_A+t_B)}$. Consequently, $\sum_{h\in S_j}\bar{\vece}_h$'s are also sign-compatible. Let $\vece_j=\sum_{h\in S_j}\bar{\vece}_h$, we have
	$$\vey=\sum_{j=1}^m {\vece}_j+\sum_{\ell}\beta_\ell\ved_\ell.$$
	It remains to show there are at most $\OFPT(1)$ different kinds of $\vece_j$'s. To see this, consider the reduced vector $\vece_j'=\sum_{i\in S_j}\hat{\vece}_i$ and notice that $\vece_j$ is duplicating the bricks of $\vece_j'$ at locations indicated by $In_k$. Hence, it suffices to show that there are at most $\OFPT(1)$ different kinds of $\vece_j'$'s. Note that $\|\vece_j'\|_{\infty}\le \xi\cdot  \xi^{O(\theta t_A+t_B)}$, and it is of $(t_B+\theta t_A)$-dimensional. Hence, there are only $\OFPT(1)$ different kinds of $\vece_j'$'s, and the lemma is proved.
\end{proof}

Our next goal is to further make $\vece_h$'s and $\ved_h$'s sign-compatible. Given $\vey$ and a decomposition satisfying Lemma~\ref{lemma:decompose-bounded}, we call $\vece_h$'s as the principle vectors and $\ved_h$'s as the add-ons. The basic idea is to merge principle vectors and add-ons such that they become sign-compatible, and we will mainly use Lemma~\ref{lemma:merging-lemma} to achieve this. However, there is a problem in applying Lemma~\ref{lemma:merging-lemma} directly as the dimension is too high. Again we try to utilize the idea in the proof of Lemma~\ref{lemma:decompose-bounded}: note that principle vectors are good in the sense that they can be reduced to lower dimensional vectors such that they are duplicating the bricks of lower dimensional vectors in fixed locations. While add-ons do not have such a nice structure, they are \lq\lq sparse\rq\rq\, according to Lemma~\ref{lemma:cite-nfold}, that is, only an $\OFPT(1)$ number of their bricks are non-zero. This will allow us to achieve the desired merging.

%Given an arbitary feasible solution $\vex_0$, let $x_0+x_*$ be the optimal solution. Then according to Lemma~\ref{lemma:decompose-bounded}, there exist $q=\OFPT(1)$ and $\vece_h\in \Xi$, $\alpha_h\in\Z_+$ such that $\vece_h^0\sqsubseteq \vex_*^0$, $\vex_*=\sum_{h=1}^q\alpha_h\vece_h+\sum_{h>q}\alpha_h\ved_h$, and moreover, $\ved_h$'s lie in the same orthant. 

%To apply an augmentation algorithm, we need to show that in each augmentation step we can find some $\vey$ such that $\vex_0+\vey$ is feasible, and $\vew\cdot\vey\ge 1/poly(n)\cdot\vew\cdot\vex_*$. If this is true, then in polynomial time we can compute the optimal solution. Of course among all the $\vece_h$'s and $\ved_h$'s we should be able to find some $\vece_{h^*}$ or $\ved_{h^*}(E)$ such that either $\vew\cdot\alpha_{h^*}\vece_{h^*}$ or $\vew\cdot\alpha_{h^*}\ved_{h^*}(E)$ is at least $1/(q+2n-1)\cdot\vew\cdot\vex_*$, however, there is no guarantee that $\vex_0+\alpha_{h^*}\vece_{h^*}$ or $\vex_0+\alpha_{h^*}\ved_{h^*}(E)$ is feasible. This is due to the fact that the summands are not necessarily in the same orthant. The goal of this subsection is to prove the following.

\subsubsection{Defining types of bricks}\label{subsec:type}
Prior to our merging process, let $\Gamma$ be some positive integer to be determined later. We will eventually set its value within $\OFPT(1)$, but for ease of analysis on its value at the end, we will first treat it as an unbounded parameter and write $\OFPT(\Gamma)$ in the following. %It would be convenient to take them as some undetermined parameter, and bare in mind that if we write $\OFPT(1)$ in the following, it is independent of $\gamma$ and $\Gamma$ unless otherwise specified. Let $\alpha=\sum_h\alpha_h$.

We first define a {\em quantity type}. For every $t_A$-dimensional vector $\vex=(x_1,x_2,\cdots,x_{t_A})$, we compare every coordinate $x_j$ with $\Gamma$. If $x_j\le \Gamma$, we say the $j$-th coordinate of $\vex$ is {\em small}. Otherwise, we say it is {\em large}. A large coordinate may be positive or negative, hence each coordinate of $\vex$ can be of three kinds: small, positive large and negative large. %and is denoted as $0,+,-$. 
The {\em quantity type} of each $\vex$ is defined as a $t_B$-dimensional vector which stores the kind of each $x_j$. It is obvious there are at most $3^{t_A}$ different quantity types.

Next, we define a {\em principle type} for every $\vey^i$. Note that $\|{\vece}_j\|_{\infty}\le \xi'$. For each $1\le i\le n$, we define the vector $({\vece}_1^i,{\vece}_2^i,\cdots,{\vece}_q^i)$ as the {\em principle type} of $\vey^i$. There are at most $(\xi')^{O(qt_A)}=\OFPT(1)$ different principle types.

%Next, we define a {\em add-on type} for every $\vey^i$. Note that each $\ved_h^i$ is a $t_A$-dimensional vector and thus can be in at most $2^{t_A}$ different orthants. Since $\ved_h$'s are sign-compatible, $\ved_h^i$'s are also sign-compatible for every $i$. The orthant each $\ved_h^i$ lies in is the add-on type of $\vey^i$.  

Consider the bricks of $\vey$. $\vey^i$'s with the same quantity type and principle type are called to have the same {\em type}. There are at most $6^{t_A}\cdot (\xi')^{O(qt_A)}=\OFPT(1)$ different types. We pick an arbitrary brick, say, brick $1$ as a specific brick and let $N_1=\{1\}$. For the remaining bricks (brick $2$ to brick $n$), we divide them into $\sigma-1=\OFPT(1)$ sets such that bricks in the same set have the same type, i.e., we let $N_2,\cdots,N_\sigma$ be the set of indices of the bricks that have the same type, and let $n_j=|N_j|$. For simplicity, we reorder the bricks of $\vey$ such that $N_j=\{\iota_{j-1}+1,\iota_{j-1}+2,\cdots,\iota_{j-1}+n_j\}$ where $\iota_{j-1}=n_1+n_2+\cdots+n_{j-1}$. 
%Note that $n_{1}\ge n/\sigma$. We assume that $n$ is sufficiently large (i.e., $\Omega_{FPT}(1)$), whereas $n_{1}$ is also sufficiently large.

\subsubsection{Equalization}\label{subsec:centralize}
According to Lemma~\ref{lemma:cite-nfold}, every $\ved_h^i$, as well as $\sum_i \ved_h^i$, is the summation of at most $\OFPT(1)$ elements of $\G(A)$. Let $\vev_1,\vev_2,\cdots,\vev_\lambda$ be all the non-zero $t_A$-dimensional vectors that $\ved_h^i$ can take. For simplicity, we allow $\ved_h$'s to be the same and rewrite Eq~(\ref{eq:decompose-main}) as
\begin{eqnarray*}\label{eq:decompose-main-1}
	\vey=\sum_{h=1}^{q}\alpha_h{\vece}_h+\sum_{h}\ved_h.
\end{eqnarray*}
Note that in the above summation we simply add each $\ved_h$ separately by $\beta_h$ times.

For ease of description, let us now take a scheduling point of view. We assume there are $n$ machines. The $t_B$-dimensional load of machine $i$ is defined by $\vey^i$. This load is contributed by two parts, $\sum_{h=1}^q\alpha_h{\vece}_h^i$ and $\sum_{h}\vemd_h^i$. For every $i\in N_j$, the first part $\sum_{h=1}^q\alpha_h{\vece}_h^i$ is the same, while the second part might be different.
We can view each $\vev_k$ as a $t_A$-dimensional job. Obviously there are only $\lambda=\OFPT(1)$ different kinds of jobs, albeit that each job may have multiple identical copies. Let $\psi(j,k)$ be the total number of copies of job $k$ on machines in $N_j$.

%We view $\ved_h$ as a \lq\lq scheduling profile\rq\rq\, that either schedules some $\vev_k$ on machine $1$, or schedules simultaneously a job $\vev_k$ on some machine $i>1$ and $-\vev_k$ on machine 1. 
We define a vector $\vey_f$ such that $\vey^0_f=\vey^0$, $\vey^1_f=\vey^1$ and
\begin{eqnarray}\label{eq:average}
\vey_f^k= \frac{1}{n_j}\cdot \sum_{i\in N_j}\vey^i=\sum_{h=1}^q\alpha_h\vece_h^i+\frac{1}{|N_j|}\cdot\sum_{i\in N_j}\sum_{h}\ved_h^i, \quad \forall k\in N_j, 2\le j\le \sigma
\end{eqnarray}

Ideally, we would like to argue on $\vey_f$. However, $\vey_f$ may be fractional. Therefore, in the following we define an integral vector $\tilde{\vey}\approx \vey_f$ and call it the {\em equalization} of $\vey$.

We give the precise definition of $\tilde{\vey}$ as follows. Let $\psi(j,k)$ be the number of copies of job $\vev_k$ on machines in $N_j$. We (almost) evenly distribute these jobs among machines such that every machine gets $\lfloor\psi(j,k)/n_j\rfloor$ or $\lceil \psi(j,k)/n_j\rceil$ copies. To make it unique, we further restrict that machines with smaller indices in $N_j$ always have the same or more number of copies. By doing so, we construct a new vector $\tilde{\vey}$. Note that $\tilde{\vey}$ consists of the same number of jobs as $\vey$, only that jobs are distributed among machines in a different (more even) way. 

As we re-distribute $\vev_k$'s such that for every machine in $N_j$, the number of copies of each $\vev_k$ differs by at most $1$, the following lemma is straightforward.

\begin{lemma}\label{lemma:average}
	$$\|\tilde{\vey}^i-\vey_f^i\|_{\infty}\le \sum_{k=1}^{\lambda} \|\vev_k\|_{\infty}.$$
\end{lemma}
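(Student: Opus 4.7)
The plan is to observe that on every machine $i\in N_j$ the ``principle contribution'' $\sum_{h=1}^q\alpha_h\vece_h^i$ is identical --- this is precisely why bricks were grouped by principle type in Section~\ref{subsec:type} --- and that the centralization step leaves this contribution untouched. Consequently, the entire discrepancy between $\tilde{\vey}^i$ and $\vey_f^i$ arises only from the rounding that redistributes the add-on jobs $\vev_1,\ldots,\vev_\lambda$ evenly among the $n_j$ machines in $N_j$. For $i=0$ and $i=1$ (the latter being the sole element of $N_1$) both vectors agree with $\vey^i$ and there is nothing to prove, so I fix $j\ge 2$ and $i\in N_j$.

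Write $P_j := \sum_{h=1}^q\alpha_h\vece_h^i$, which by the principle-type equality is independent of the choice of $i\in N_j$. Since $\sum_{i'\in N_j}\sum_h\ved_h^{i'}$ contains exactly $\psi(j,k)$ copies of job $\vev_k$ for each $1\le k\le\lambda$, Eq.~(\ref{eq:average}) rewrites as
$$\vey_f^i \;=\; P_j + \sum_{k=1}^{\lambda}\frac{\psi(j,k)}{n_j}\,\vev_k.$$
On the other hand, the construction of $\tilde{\vey}$ keeps the principle part $P_j$ intact and assigns to machine $i$ some integer number $\tilde{\psi}(i,k)\in\{\lfloor\psi(j,k)/n_j\rfloor,\,\lceil\psi(j,k)/n_j\rceil\}$ of copies of each $\vev_k$, so that $\tilde{\vey}^i = P_j + \sum_{k=1}^{\lambda}\tilde{\psi}(i,k)\,\vev_k$. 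Subtracting, the two copies of $P_j$ cancel and I obtain
$$\tilde{\vey}^i - \vey_f^i \;=\; \sum_{k=1}^{\lambda}\Bigl(\tilde{\psi}(i,k)-\tfrac{\psi(j,k)}{n_j}\Bigr)\vev_k.$$
Each coefficient $\tilde{\psi}(i,k)-\psi(j,k)/n_j$ is the difference between a real number and one of its two integer neighbours, and therefore lies in the open interval $(-1,1)$. The triangle inequality then gives
$$\|\tilde{\vey}^i-\vey_f^i\|_\infty \;\le\; \sum_{k=1}^{\lambda}\bigl|\tilde{\psi}(i,k)-\tfrac{\psi(j,k)}{n_j}\bigr|\cdot\|\vev_k\|_\infty \;\le\; \sum_{k=1}^{\lambda}\|\vev_k\|_\infty,$$
which is the claim.

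No substantial obstacle is expected: the argument is essentially a one-step rounding estimate. The only nontrivial ingredient is the cancellation of the principle contribution $P_j$, but the type classification of Section~\ref{subsec:type} was crafted precisely so that all machines in the same $N_j$ share an identical principle part, making the cancellation automatic.
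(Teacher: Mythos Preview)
Your argument is correct and is precisely the detailed unpacking of what the paper dismisses in one sentence (``As we re-distribute $\vev_k$'s such that for every machine in $N_j$, the number of copies of each $\vev_k$ differs by at most $1$, the following lemma is straightforward''). The cancellation of the principle part $P_j$ and the bound $|\tilde{\psi}(i,k)-\psi(j,k)/n_j|<1$ are exactly the two observations the paper is tacitly invoking.
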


\subsubsection{Decomposition of $\tilde{\vey}$}\label{subsec:decompose-tilde-vey}
We create new $(t_B+nt_A)$-dimensional vectors in the following way. For simplicity, we define $\psi^q(j,k)=\lfloor\psi(j,k)/n_j\rfloor$ and $\psi^r(j,k)=\psi(j,k)-n_j\psi^q(j,k)$, i.e., they are the quotient and residue of $\psi(j,k)$ divided by $n_j$, respectively. 
For every $2\le j\le \sigma$, we create $\psi^q(j,k)$ copies of a vector $\vemd(j,k)$ and one copy of $\overline{\vemd}(j,k)$ such that
\[
\vemd^i(j,k)=\left\{
\begin{array}{ll}
\vev_k,\hspace{9mm} i\in N_j\\
-n_j\vev_k, \quad i=1\\
0, \quad otherwise
\end{array}
\right. \hspace{10mm}
\overline{\vemd}^i(j,k)=\left\{
\begin{array}{ll}
\vev_k,\hspace{20mm} \iota_j\le i\le \iota_j+\psi^r(j,k)\\
-\psi^r(j,k)\cdot\vev_k, \hspace{25mm} i=1\\
0, \hspace{37mm} otherwise
\end{array}
\right.
\]
%For $j=1$, $\vemd(1,k)$ and $\overline{\vemd}(1,k)$ are created in the same way with the only difference that now machine 1 coincides with the first machine in $N_1$, whereas we define

Using the above notations, it is clear that for any $i\ge 2$, $\tilde{\vey}^i$ consists of $\psi^q(j,k)$ copies of $\vemd^i(j,k)$ and one copy of $\overline{\vemd}^i(j,k)$, i.e., we have the following:
\begin{eqnarray}\label{eq:decompose-extra}
\tilde{\vey}^i=\sum_{h=1}^q\alpha_h\vece_h^i+\sum_{j=1}^{\sigma}\sum_{k=1}^{\lambda}\left(\psi^q(j,k)\cdot\vemd^i(j,k)+\overline{\vemd}^i(j,k) \right), \quad \forall i\ge 2
\end{eqnarray}
The above equation is also true for $i=0$ as $\vemd^0(j,k)=\overline{\vemd}^0(j,k)=0$ for every $1\le j\le \sigma$, $1\le k\le \lambda$.

Furthermore, we have the following observations which follow directly from the definitions of $\vemd(j,k)$ and $\overline{\vemd}(j,k)$.

\begin{observation}
	$H_0\cdot \vemd(j,k)=0$ and $H_0\cdot \overline{\vemd}(j,k)=0$ for all $1\le j\le \sigma$ and $1\le k\le \lambda$.
\end{observation}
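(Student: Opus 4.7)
The plan is a direct verification, split according to the block structure of $H_0$. First I would handle the top row of block equations: since $C = 0$, this row acts on a vector $\vex = (\vex^0, \vex^1, \ldots, \vex^n)$ as $D \sum_{i=1}^n \vex^i$, so it suffices to check $\sum_i \vemd^i(j,k) = 0$ and $\sum_i \overline{\vemd}^i(j,k) = 0$. Both follow immediately from the definitions: $\vemd(j,k)$ places one copy of $\vev_k$ at each of the $n_j$ positions in $N_j$ and compensates with $-n_j \vev_k$ at position $1$; likewise $\overline{\vemd}(j,k)$ balances its $\psi^r(j,k)$ copies of $\vev_k$ against $-\psi^r(j,k)\vev_k$ at position $1$. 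Hence the top block row evaluates to $D \cdot 0 = 0$ in both cases.

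Next I would handle the remaining $n$ block rows, the $i$-th of which acts as $B\vex^0 + A\vex^i$. Both $\vemd(j,k)$ and $\overline{\vemd}(j,k)$ have zero $0$-th brick by definition, so the $B\vex^0$ contribution vanishes and it remains to verify $A \vemd^i(j,k) = 0$ and $A \overline{\vemd}^i(j,k) = 0$ for every $i \geq 1$. Every nonzero brick of either vector is an integer multiple of a single $\vev_k$, so it suffices to establish $A \vev_k = 0$. Recall that $\vev_k$ was introduced as one of the nonzero $t_A$-dimensional values realized by some brick $\ved_h^i$ of a vector $\ved_h = (0, \veg_h(E))$ with $\veg_h(E) \in \G(E)$. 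Reading off the $i$-th block row of the $n$-fold matrix $E$, the condition $E\veg_h(E) = 0$ forces $A \veg_h^i(E) = 0$ for every $i \geq 1$, so $A \vev_k = 0$ as required.

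I do not anticipate any genuine obstacle: the entire observation reduces to the cancellation built into the placement of the $-n_j \vev_k$ (resp.\ $-\psi^r(j,k) \vev_k$) entry at position $1$, together with the elementary fact that $\ker(E)$ projects coordinate-wise into $\ker(A)$. The explicit construction of $\vemd$ and $\overline{\vemd}$ was tailored precisely to make these two verifications trivial.
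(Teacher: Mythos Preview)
Your proposal is correct and is precisely the direct verification the paper has in mind when it says the observation follows from the definitions of $\vemd(j,k)$ and $\overline{\vemd}(j,k)$. The two ingredients you identify—cancellation of the bricks so that $\sum_{i\ge 1}\vemd^i(j,k)=\sum_{i\ge 1}\overline{\vemd}^i(j,k)=0$, and $A\vev_k=0$ because each $\vev_k$ arises as a brick of some $\veg_h(E)\in\G(E)$—are exactly the facts the paper relies on (it later invokes $A\vev_k=0$ explicitly).
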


\begin{observation}\label{obs:size}
	For any $i=0$ or $2\le i\le n$, $\vemd^i(j,k)=\OFPT(1),\overline{\vemd}^i(j,k)=\OFPT(1)$; For $i=1$, $\vemd^1(j,k)=\OFPT(n),\overline{\vemd}^1(j,k)=\OFPT(n)$.
\end{observation}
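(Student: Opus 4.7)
The plan is simple: this observation is essentially a mechanical check from the piecewise definitions of $\vemd(j,k)$ and $\overline{\vemd}(j,k)$. Scanning the cases, every non-zero brick on the right-hand side of either definition is a scalar multiple of some $\vev_k$: multiplied by $\pm 1$ when $i \in N_j$, and multiplied by $-n_j$ or $-\psi^r(j,k)$ when $i = 1$. Thus the whole observation reduces to two inputs: that $\|\vev_k\|_\infty = \OFPT(1)$ for every $k$, and that the two potentially large scalar factors satisfy $n_j \le n$ and $\psi^r(j,k) < n_j \le n$.

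The only step that actually requires a small argument is the bound $\|\vev_k\|_\infty = \OFPT(1)$. Recall from Section~\ref{subsec:centralize} that $\vev_1, \ldots, \vev_\lambda$ are defined as the distinct non-zero $t_A$-dimensional vectors occurring as some brick $\ved_h^i$, and by Lemma~\ref{lemma:decompose-bounded} each $\ved_h$ has the form $(0, \veg_\ell(E))$ with $\veg_\ell(E) \in \G(E)$. I would then apply Lemma~\ref{lemma:cite-nfold} with $I = \{i\}$: it says $\veg_\ell^i(E) \in M(A)$, i.e., $\veg_\ell^i(E)$ is the sum of at most $\kappa = \OFPT(1)$ elements of $\G(A)$. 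Since $A$ has constantly many rows and columns and entries of absolute value at most $\Delta$, $\G(A)$ is a finite set of vectors whose $\ell_\infty$-norms are all $\OFPT(1)$. Consequently $\|\vev_k\|_\infty \le \kappa \cdot \max_{\veg \in \G(A)} \|\veg\|_\infty = \OFPT(1)$.

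With this bound in hand, substituting into the piecewise definitions finishes the proof. For $i = 0$, both $\vemd^0(j,k)$ and $\overline{\vemd}^0(j,k)$ equal $0$; for $2 \le i \le n$ the only non-zero possibilities are $\vev_k$ (when $i \in N_j$, respectively when $\iota_j \le i \le \iota_j + \psi^r(j,k)$), giving norm at most $\|\vev_k\|_\infty = \OFPT(1)$; and for $i = 1$ the values are $-n_j \vev_k$ and $-\psi^r(j,k) \vev_k$, which have $\ell_\infty$-norm at most $n \cdot \|\vev_k\|_\infty = \OFPT(n)$. I do not foresee any genuine obstacle here — the observation is a bookkeeping statement packaging the scalar blow-up that was deliberately introduced in the constructions of $\vemd$ and $\overline{\vemd}$, and its only purpose is to record, for the later merging argument, which bricks remain $\OFPT(1)$-bounded and which single brick ($i=1$) may carry the concentrated error term of order $\OFPT(n)$.
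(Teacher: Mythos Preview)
Your proposal is correct and matches the paper's approach: the paper simply states that the observation ``follow[s] directly from the definitions of $\vemd(j,k)$ and $\overline{\vemd}(j,k)$'' without any further argument, and you have just spelled out that direct check, including the one non-trivial ingredient $\|\vev_k\|_\infty = \OFPT(1)$ (which the paper records at the start of Section~\ref{subsec:centralize} when it invokes Lemma~\ref{lemma:cite-nfold}).
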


It is clear that Eq~(\ref{eq:decompose-extra}) does not necessarily hold for $i=1$. Let us consider 
$$\eta=\tilde{\vey}^1-\sum_{i=1}^q\alpha_h\vece_h^1-\sum_{j=1}^{\sigma}\sum_{k=1}^{\lambda}\left(\psi^q(j,k)\cdot\vemd^1(j,k)+\overline{\vemd}^1(j,k) \right).$$
We have the following lemma.
\begin{lemma}\label{lemma:machine-1}
	$D\eta=0$.
\end{lemma}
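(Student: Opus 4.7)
My approach is to expand $\eta$ carefully using the definitions from Sections~\ref{subsec:centralize} and~\ref{subsec:decompose-tilde-vey}, show it collapses to the aggregate sum $\sum_h \sum_{i=1}^{n} \ved_h^i$, and then invoke the fact that each $\ved_h$ lies in $\ker_{\Z}(E)$.

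The starting observation is that $1 \in N_1 = \{1\}$ by construction, so brick~$1$ is its own group; no job redistribution happens inside a singleton group, so $\tilde{\vey}^1 = \vey^1 = \sum_{h=1}^{q}\alpha_h\vece_h^1 + \sum_h \ved_h^1$ by the decomposition of Lemma~\ref{lemma:decompose-bounded}. Next I compute the centralizing contribution at brick~$1$: by construction $\vemd^1(j,k) = -n_j\vev_k$ and $\overline{\vemd}^1(j,k) = -\psi^r(j,k)\,\vev_k$ for $j \ge 2$, and both vanish when $j=1$. Hence
$$\sum_{j,k}\Bigl(\psi^q(j,k)\vemd^1(j,k)+\overline{\vemd}^1(j,k)\Bigr) = -\sum_{j\ge 2,\,k}\bigl(n_j\psi^q(j,k)+\psi^r(j,k)\bigr)\vev_k = -\sum_{j\ge 2,\,k}\psi(j,k)\,\vev_k,$$
which, since $\psi(j,k)$ counts the copies of $\vev_k$ placed on machines in $N_j$, equals $-\sum_{i=2}^n\sum_h \ved_h^i$.

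Plugging back into the definition of $\eta$, the principle terms $\sum_h\alpha_h\vece_h^1$ cancel and the two job sums reassemble into $\eta = \sum_h\sum_{i=1}^n \ved_h^i$. Since $\ved_h = (0,\veg_\ell(E))$ with $\veg_\ell(E) \in \G(E) \subseteq \ker_{\Z}(E)$, reading off the top block-row of $E\veg_\ell(E) = 0$ gives $D\sum_{i=1}^n \ved_h^i = 0$ for each $h$, and summing over $h$ yields $D\eta = 0$. There is no real obstacle here: the argument is essentially bookkeeping. The only thing to appreciate is that the $-n_j\vev_k$ and $-\psi^r(j,k)\vev_k$ masses concentrated at brick~$1$ in $\vemd(j,k)$ and $\overline{\vemd}(j,k)$ were engineered precisely to soak up the excess job mass that centralization strips from the other bricks, so $\eta$ reduces to exactly the aggregate add-on $\sum_h\sum_i \ved_h^i$, which is visibly in $\ker D$.
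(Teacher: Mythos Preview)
Your proof is correct and follows essentially the same bookkeeping as the paper. The only difference is one of presentation: the paper argues at the level of $D$ applied to the total sum $\sum_{i=1}^n(\tilde{\vey}^i-\sum_h\alpha_h\vece_h^i)$, using the invariance of this quantity under redistribution together with the identity $\vemd^1(j,k)=-\sum_{i\ge 2}\vemd^i(j,k)$, whereas you first observe $\tilde{\vey}^1=\vey^1$ (since $N_1=\{1\}$ is a singleton) and compute $\eta=\sum_h\sum_{i=1}^n\ved_h^i$ explicitly before applying $D$. Your route has the minor advantage that it exhibits $\eta$ concretely as a sum of $\vev_k$'s, which is exactly what the paper needs immediately after the lemma.
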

\begin{proof}
	Note that $H_0\ved_{\ell}=0$ for each $\ved_{\ell}$, whereas $D\sum_{i=1}^n \sum_{h}\ved_h^i=0$. Since $\tilde{\vey}$ is constructed from $\vey$ by re-distributing the bricks $\ved_h^i$ (i.e., by shifting it from brick $i$ to brick $i'$), it holds that
	$$D\sum_{i=1}^n(\tilde{\vey}^i-\sum_{h=1}^q\vece_h^i)=0.$$
	Plugging in Eq~(\ref{eq:decompose-extra}), we have
	\begin{eqnarray*}
		0&=&D\tilde{\vey}^1+D\sum_{i=2}^n\tilde{\vey}^i-D\sum_{i=1}^n\sum_{h=1}^q\vece_h^i\\
		&=&D\tilde{\vey}^1+D\sum_{i=2}^n\left(\sum_{h=1}^q\alpha_h\vece_h^i+\sum_{j=1}^{\sigma}\sum_{k=1}^{\lambda}\left(\psi^q(j,k)\cdot\vemd^i(j,k)+\overline{\vemd}^i(j,k) \right) \right)-D\sum_{i=1}^n\sum_{h=1}^q\vece_h^i\\
		&=&D\tilde{\vey}^1-D\sum_{h=1}^q\vece_h^1+D\sum_{j=1}^{\sigma}\sum_{k=1}^{\lambda}\left(-\psi^q(j,k)\cdot\vemd^1(j,k)-\overline{\vemd}^1(j,k) \right)=D\eta.	
	\end{eqnarray*} 
	Here the third equation makes use of the fact that $\vemd^1(j,k)=-\sum_{i=2}^n\vemd^i(j,k)$ and $\overline{\vemd}^1(j,k)=-\sum_{i=2}^n\overline{\vemd}^i(j,k)$.	
\end{proof}
Recall that by definition $\vey^1-\sum_{h=1}^q\vece_h^1$ is a weighted sum of $\vev_k$'s, and so is $\sum_{j=1}^{\sigma}\sum_{k=1}^{\lambda}\left(\psi^q(j,k)\cdot\vemd^1(j,k)+\overline{\vemd}^1(j,k) \right)$. Hence, $\eta$ is also a weighted sum of $\vev_k$'s, and we let $\eta=\sum_{k}\gamma_k\vev_k$ where $\gamma_k\in \Z$ for $1\le k\le \lambda$. According to Lemma~\ref{lemma:machine-1}, we have that 
$$\sum_{k=1}^{\lambda}\gamma_k \cdot D\vev_k=0.$$
Equivalently, the above equation can be written as
$$(\gamma_1,\gamma_2,\cdots,\gamma_{\lambda})\cdot [D\vev_1,D\vev_2,\cdots,D\vev_{\lambda}]=0.$$
Consequently, if we define the matrix $DV=[D\vev_1,D\vev_2,\cdots,D\vev_{\lambda}]$, which is an $\OFPT(1)\times \OFPT(1)$ matrix, then there exist $\gamma_h'\in\Z_+$ and $\veg_h(DV)\in \G(DV)$, $\veg_h(DV)\sqsubseteq (\gamma_1,\gamma_2,\cdots,\gamma_k)$ such that
$$(\gamma_1,\gamma_2,\cdots,\gamma_{\lambda})=\sum_{h=1}^\omega \gamma_h'\veg_h(DV).$$
%If we allow $\veg_h(DV)$'s to be the same for different $h$, then the above equation can also be written as $(\gamma_1,\gamma_2,\cdots,\gamma_\lambda)=\sum_h\veg_h(DV).$ 
where $\omega\le |\G(DV)|=\OFPT(1)$. Consequently, we have
$$\eta=\sum_{h=1}^{\omega} \gamma_h'\left(\sum_{k=1}^{\lambda} \veg_h^k(DV)\cdot \vev_k\right),$$
Note that here each $\veg_h^k(DV)\in \Z$ is the $k$-th coordinate of $\veg_h(DV)$. Furthermore, $\|\veg_h(DV)\|_{\infty}=\OFPT(1)$.

We define new vectors $\veod(h)$ such that
\[
\veod^i(h)=\left\{
\begin{array}{ll}
\sum_{k=1}^\lambda \veg_h^k(DV)\vev_k,\hspace{9mm} i=1\\
0, \hspace{23mm} otherwise
\end{array}
\right. 
\]
Recall that $A\vev_k=0$ and $\sum_{k=1}^\lambda  \veg_h^k(DV)\cdot D\vev_k=0$, we have the following observation.

\begin{observation}
	$\|\veod(h)\|_{\infty}=\OFPT(1)$ and $H_0\cdot \veod(h)=0$ for all $1\le h\le \omega$.
\end{observation}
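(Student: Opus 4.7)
The plan is to verify both assertions by direct unpacking of the definitions laid out in Section~\ref{subsec:decompose-tilde-vey}; no new combinatorial ideas are needed, as $\veod(h)$ was engineered precisely so that these conclusions hold.

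For the norm bound, I would note first that $\veod^i(h)=0$ for all $i\ne 1$, so $||\veod(h)||_{\infty}=||\veod^1(h)||_{\infty}$. Then by the triangle inequality,
\[
||\veod^1(h)||_{\infty}=\Big\|\sum_{k=1}^{\lambda}\veg_h^k(DV)\,\vev_k\Big\|_{\infty}\le \lambda\cdot\max_k|\veg_h^k(DV)|\cdot\max_k||\vev_k||_{\infty}.
\]
Here $\lambda=\OFPT(1)$ by construction, $|\veg_h^k(DV)|\le||\veg_h(DV)||_{\infty}=\OFPT(1)$ because $DV$ is an $\OFPT(1)\times\OFPT(1)$ integer matrix with $||DV||_{\infty}=\OFPT(1)$ (so its Graver basis has $\ell_\infty$-norm $\OFPT(1)$ by standard bounds), and $||\vev_k||_{\infty}=\OFPT(1)$ because each $\vev_k$ is a brick of some $\veg_\ell(E)\in\G(E)$ whose norm is $\OFPT(1)$ by Lemma~\ref{lemma:cite-nfold}.

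For $H_0\cdot\veod(h)=0$, I would split along the block structure of $H_0$. The top block $[0,D,D,\ldots,D]$ applied to $\veod(h)$ yields $\sum_{i=1}^n D\veod^i(h)=D\veod^1(h)=\sum_{k=1}^{\lambda}\veg_h^k(DV)\cdot D\vev_k=DV\cdot\veg_h(DV)$, which is zero because $\veg_h(DV)\in\G(DV)\subseteq\ker_\Z(DV)$. For the $i$-th lower block $[B,0,\ldots,A,\ldots,0]$ (with $A$ in the $i$-th slot), applied to $\veod(h)$ we obtain $B\veod^0(h)+A\veod^i(h)=A\veod^i(h)$ since $\veod^0(h)=0$. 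For $i\ne 1$ this vanishes as $\veod^i(h)=0$, and for $i=1$ it equals $\sum_k\veg_h^k(DV)\cdot A\vev_k=0$ because $A\vev_k=0$: each $\vev_k$ arises as a brick of some $\veg_\ell(E)\in\G(E)$, and the $n$-fold constraint $E\veg_\ell(E)=0$ forces $A\veg_\ell^i(E)=0$ for every $i\ge1$.

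The main obstacle is essentially non-existent here — the proof is a bookkeeping exercise. The only subtlety is tracing $A\vev_k=0$ back through two layers of definitions (from the $\vev_k$'s being the non-zero bricks of the $n$-fold Graver elements $\veg_\ell(E)$), and recognizing that the Graver membership $\veg_h(DV)\in\G(DV)$ is exactly the linear dependence needed for the top-block cancellation.
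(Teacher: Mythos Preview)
Your proposal is correct and follows exactly the same approach as the paper, which merely records (just before the observation) that $A\vev_k=0$ and $\sum_{k}\veg_h^k(DV)\cdot D\vev_k=0$ together with $||\veg_h(DV)||_{\infty}=\OFPT(1)$; you have simply unpacked these two facts along the block rows of $H_0$ and spelled out the triangle-inequality norm bound in more detail than the paper bothers to.
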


Now we derive the following decomposition of $\tilde{\vey}$:
\begin{eqnarray}\label{eq:decompose}
\tilde{\vey}=\sum_{h=1}^q\alpha_h\vece_h+\sum_{j=1}^{\sigma}\sum_{k=1}^{\lambda}\left(\psi^q(j,k)\cdot\vemd(j,k)+\overline{\vemd}(j,k) \right)+\sum_{h=1}^\omega \gamma_h'\cdot \veod(h),
\end{eqnarray}

\subsubsection{A sign-compatible decomposition of $\tilde{\vey}$}\label{subsec:decompose-tildevey-compatible}

We will find a sign-compatible decomposition of $\tilde{y}$ in this subsection, and show in the next subsection that at least one element of the decomposition lie in the same orthant of $\bar{\vey}$. %with $\|\tilde{\vex}\|_{\infty}=\OFPT(n^{t_A})$, and then show that $\tilde{\vex}\sqsubseteq {\vey}$ also holds.

Recall Eq~(\ref{eq:decompose}). We observe that all the vectors involved have a nice structure in the sense that they can be divided into $\OFPT(1)$ segments where every segment consists of identical bricks. More precisely, for every $1\le j\le \sigma$, let $\pi_j$ be the permutation of $\{1,2,\cdots,k\}$ such that the $\lambda$ residues can be ordered as $\psi^r(j,\pi_j(1))\le \psi^r(j,\pi_j(2))\le\cdots\le \psi^r(j,\pi_j(\lambda))$. Additionally, we define $\psi^r(j,\pi_j(0))=\iota_{j-1}+1$ for $j\ge 2$, $\psi^r(j,\pi_j(\lambda+1))=\iota_j-1$ and $\psi^r(1,\pi_j(0))=2$ (as machine $1$ is special and should be excluded). 
We can divide the $n+1$ bricks of a $(t_B+nt_A)$-dimensional vector into $2+2(\lambda+1)\sigma$ groups as follows: 
\begin{itemize}
	\item Group 0 consists of brick 0 (the first $t_B$ dimensions), which is $0$ for all the $\vemd(j,k)$ and $\overline{\vemd}(j,k)$.
	\item Group 1 consists of only machine (brick) 1.
	\item For $1\le j\le \sigma$ and $1\le k\le \lambda+1$, Group $k+1+(j-1)(\lambda+1)$ consists of brick $\psi^r(j,\pi(k-1))+1$ to brick $\psi^r(j,\pi(k))$.
\end{itemize}
Hence, each vector is divided into $2+2(\lambda+1)\sigma$ segments where each segments contains its bricks within one group. See the following figure as an illustration of the grouping. Here circles of different colors represent different $\vev_k$'s. Note that if we take a \lq\lq snapshot\rq\rq\, of any vector ($\vece_h$ or $\vemd(j,k)$) on the bricks within a group (see the bricks among two ajacent red lines in the figure), we see that all of these bricks are identical (for otherwise some of the residues shall lie within the indices of these bricks, which contradicts the grouping). More precisely, we have the following.

\begin{figure}\label{fig:1}%\caption{Grouping}
	\begin{center}
		\includegraphics[scale=0.4]{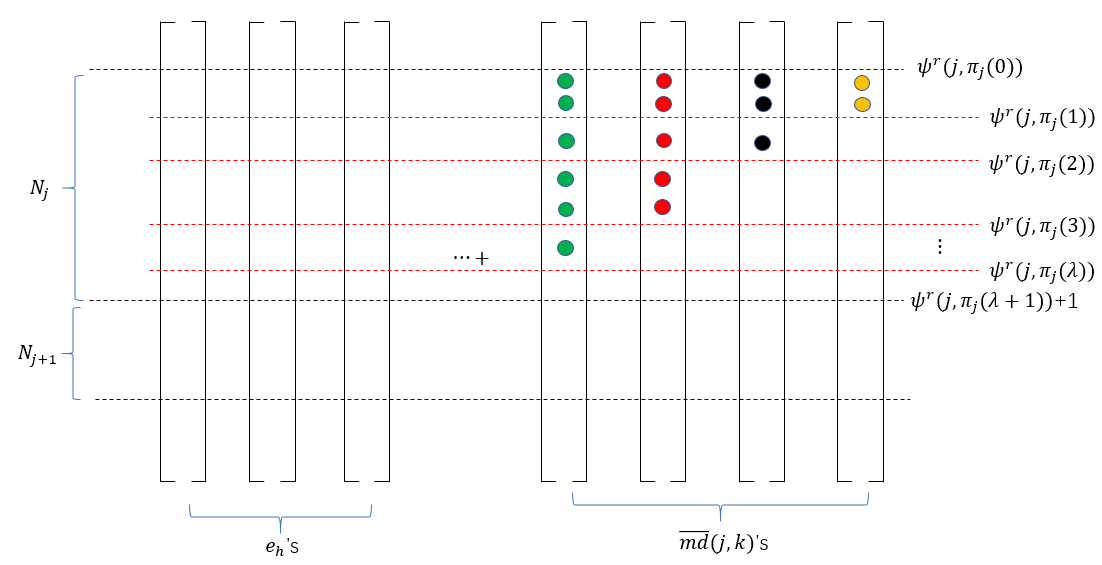}
	\end{center}	
\end{figure}

\begin{observation}\label{obs:decompose}
	Let $Gr_\ell$ be the indices of bricks in Group $\ell$, then for every $i_1,i_2\in Gr_\ell$, we have $\vece_h^{i_1}=\vece_h^{i_2}$ and $\vemd^{i_1}(j,k)=\vemd^{i_2}(j,k)$ for $1\le h\le q$, $1\le j\le \sigma$, $1\le k\le \lambda$.
\end{observation}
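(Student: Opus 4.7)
The plan is to verify the observation by unwrapping the definitions of the groups $Gr_\ell$ from Section~\ref{subsec:decompose-tildevey-compatible} and of the vectors $\vece_h$ and $\vemd(j,k)$ from Sections~\ref{subsec:type} and \ref{subsec:decompose-tilde-vey}. No combinatorial work is really needed; the equalities are baked into how the groupings were set up, and the proof is a short case check. The cases $\ell\in\{0,1\}$ are trivial because $Gr_0$ and $Gr_1$ are singletons (brick $0$ and brick $1$ respectively), so below I focus on $\ell\ge 2$.

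First I would observe that by the construction of the groups in Section~\ref{subsec:decompose-tildevey-compatible}, each $Gr_\ell$ with $\ell\ge 2$ is a contiguous subrange of exactly one of the sets $N_{j^*}$ (this follows from the definition of the ranges $[\psi^r(j,\pi_j(k-1))+1,\psi^r(j,\pi_j(k))]$, which by construction sit inside a single $N_j$, here $j^*=j$). Fix any $i_1,i_2\in Gr_\ell\subseteq N_{j^*}$. For the first equality, recall from Section~\ref{subsec:type} that the partition $N_1,\ldots,N_\sigma$ was chosen precisely so that two bricks share an $N_{j^*}$ only if they have the same principle type $(\vece_1^i,\vece_2^i,\ldots,\vece_q^i)$. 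Hence $\vece_h^{i_1}=\vece_h^{i_2}$ for every $1\le h\le q$.

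For the second equality I would appeal directly to the definition of $\vemd(j,k)$, splitting into two sub-cases. If $j\ne j^*$ then $\vemd^i(j,k)=0$ identically on $N_{j^*}$ (since $N_{j^*}\cap N_j=\emptyset$ and $1\notin N_{j^*}$ because $N_1=\{1\}$ is singled out as its own block), so $\vemd^{i_1}(j,k)=\vemd^{i_2}(j,k)=0$. If instead $j=j^*$, then by definition $\vemd^i(j,k)=\vev_k$ for every $i\in N_j$, so $\vemd^{i_1}(j,k)=\vemd^{i_2}(j,k)=\vev_k$. In either case constancy holds on $Gr_\ell$.

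The main (mild) obstacle is purely bookkeeping: one has to confirm that the residue-based refinement within each $N_j$ does not accidentally cross the boundaries of $N_j$, which is why the auxiliary definitions $\psi^r(j,\pi_j(0))$ and $\psi^r(j,\pi_j(\lambda+1))$ are given in Section~\ref{subsec:decompose-tildevey-compatible} and why brick $1$ is singled out. Once that is checked, each $Gr_\ell$ lies inside a single $N_{j^*}$ and the case analysis above finishes the proof. I would remark (though the observation does not state it) that the \emph{same} argument, combined with the sorting by $\pi_j$, shows the analogous constancy for $\overline{\vemd}(j,k)$ as well, which is what ultimately justifies the fine residue subdivision.
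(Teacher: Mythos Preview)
Your proposal is correct and follows the same reasoning the paper uses: the paper treats this as an observation that holds ``by construction,'' remarking only that identical bricks within a group follow because otherwise some residue boundary would fall inside the group. You simply unpack this in more detail---noting that $Gr_0,Gr_1$ are singletons, that every other $Gr_\ell$ sits inside a single $N_{j^*}$, that the principle-type partition forces $\vece_h^{i_1}=\vece_h^{i_2}$ there, and that the piecewise definition of $\vemd(j,k)$ is constant on each $N_{j^*}$---which is exactly the intended argument.
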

Furthermore, notice that $Gr_\ell$'s is a further sub-division of $N_1,N_2,\cdots,N_{\sigma}$, hence we have the following observation.

\begin{observation}
	For any $i_1,i_2\in Gr_\ell$, $\vey^{i_1}$ and $\vey^{i_2}$ have the same type.
\end{observation}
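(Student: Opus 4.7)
The plan is to reduce this observation to the set-theoretic containment $Gr_\ell \subseteq N_j$ for some $j$. Once that containment is established, the conclusion is immediate from the construction of the partition $N_1, \ldots, N_\sigma$ in Section~\ref{subsec:type}: indices $i$ were placed into the same $N_j$ precisely when the corresponding bricks $\vey^i$ share the same combination of quantity type and principle type. Hence any two indices lying in a common $N_j$ automatically yield bricks of the same type, and this property is inherited by any subset of $N_j$, including any $Gr_\ell$ contained in it.

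To establish the containment, I would inspect the group definitions. Groups $0$ and $1$ each consist of a single brick, so the statement is vacuous in those cases. For $\ell = k+1+(j-1)(\lambda+1)$ with $1 \le j \le \sigma$ and $1 \le k \le \lambda+1$, the group is the range of brick indices from $\psi^r(j,\pi_j(k-1))+1$ to $\psi^r(j,\pi_j(k))$. The ``genuine'' residues $\psi^r(j,\pi_j(1)), \ldots, \psi^r(j,\pi_j(\lambda))$ are by construction offsets inside $N_j$ (each lying between $\iota_{j-1}+1$ and $\iota_j$), while the two artificial boundary values $\psi^r(j,\pi_j(0))$ and $\psi^r(j,\pi_j(\lambda+1))$ were set precisely so as to align with the two endpoints of $N_j$ (with the additional adjustment excluding the special brick $1$ in the case $j=1$). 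Consequently every $Gr_\ell$ with $\ell \ge 2$ is a contiguous sub-interval of some unique $N_j$.

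Combining the two steps proves the observation. I do not anticipate any substantive obstacle here, since the statement is essentially a bookkeeping consequence of how the groups were defined; the only care needed is in parsing the indexing conventions and verifying the boundary conditions for $j=1$ and for brick~$1$ (handled by Group~$1$), so that no group accidentally straddles two different $N_j$'s.
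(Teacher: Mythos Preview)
Your proposal is correct and matches the paper's own justification: the paper simply notes that the $Gr_\ell$'s form a further sub-division of $N_1,\ldots,N_\sigma$, and you have unpacked exactly that containment argument with the appropriate boundary checks. Nothing more is needed.
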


Now we are able to define reduced vectors. For $\vez=\vece_h$ or $\vemd(j,k)$ or $\overline{\vemd}(j,k)$ or $\veod(h)$ or $\tilde{\vey}$, we define $Rd(\vez)$ as a $(t_B+t_A+2(\lambda+1)\sigma t_A)$-dimensional vector that consists of $2+2(\lambda+1)\sigma$ bricks where the $\ell$-th brick $Rd^\ell(\vez)$ equals any brick of $\vez$ in the group $Gr_\ell$ (as they are the same by Observation~\ref{obs:decompose}). Furthermore, Eq~(\ref{eq:decompose}) implies the following: 
\begin{eqnarray}\label{eq:decompose2}
Rd(\tilde{\vey})=\sum_{i=1}^q\alpha_h Rd(\vece_h)+\sum_{j=1}^{\sigma}\sum_{k=1}^{\lambda}\left(\psi^q(j,k)\cdot Rd(\vemd(j,k))+Rd(\overline{\vemd}(j,k))\right)+\sum_{h=1}^\omega \gamma_h'\cdot Rd(\veod(h)) .
\end{eqnarray}

If we want to make the rightside of Eq~(\ref{eq:decompose}) into a sign-compatible summation, it suffices to make the above Eq~(\ref{eq:decompose2}) into a sign-compatible summation, and this is achievable by utilizing Lemma~\ref{lemma:merging-lemma}. To derive a good bound, we will apply Lemma~\ref{lemma:merging-lemma} twice in a separate way. 

By Observation~\ref{obs:size}, we have the following.
\begin{observation}\label{obs:size-2}
	For $i=0$ or $i\ge 2$, $Rd^i(\vemd(j,k)),Rd^i(\overline{\vemd}(j,k))=\OFPT(1)$; For $i=1$, $Rd^1(\vemd(j,k)),Rd^1(\overline{\vemd}(j,k))=\OFPT(n)$.
\end{observation}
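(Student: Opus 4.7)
The statement is essentially a bookkeeping consequence of how the groups $Gr_\ell$ were defined in Section~\ref{subsec:decompose-tildevey-compatible} together with the earlier Observation~\ref{obs:size}. My plan is to unwind the definition of the reduced vector $Rd(\cdot)$ and read off the bounds group-by-group.

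Concretely, I would first recall the grouping: Group~$0$ contains only brick~$0$ (the first $t_B$ coordinates); Group~$1$ contains only brick~$1$ (the special ``heavy'' machine); and every remaining group of the form $k+1+(j-1)(\lambda+1)$ for $1\le j\le \sigma$, $1\le k\le \lambda+1$ consists of consecutive bricks whose indices lie in $\{\iota_{j-1}+1,\dots,\iota_j\}\subseteq\{2,3,\dots,n\}$. Thus the only group whose bricks are allowed to contain the potentially large coordinates is Group~$1$, and every other group only collects bricks with original index $0$ or $\ge 2$.

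Next I would invoke the definition of the reduced vector: $Rd^\ell(\vez)$ is, by construction, equal to any one brick $\vez^i$ with $i\in Gr_\ell$ (well-defined by Observation~\ref{obs:decompose}). So for each $\ell\neq 1$, $Rd^\ell(\vemd(j,k))$ equals $\vemd^i(j,k)$ for some $i=0$ or $i\ge 2$, and similarly for $\overline{\vemd}(j,k)$. Applying the bound from Observation~\ref{obs:size} to that particular index yields $Rd^\ell(\vemd(j,k)),\,Rd^\ell(\overline{\vemd}(j,k))=\OFPT(1)$. For $\ell=1$, since $Gr_1=\{1\}$, we get $Rd^1(\vemd(j,k))=\vemd^1(j,k)$ and $Rd^1(\overline{\vemd}(j,k))=\overline{\vemd}^1(j,k)$, which by Observation~\ref{obs:size} are bounded by $\OFPT(n)$.

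There is no real obstacle here---the proof is purely definitional---so I would keep it to a sentence or two in the paper, perhaps inlined directly after the statement. The only minor care needed is to verify that the group indexing in Section~\ref{subsec:decompose-tildevey-compatible} indeed places brick~$1$ in its own group (it does, by the explicit definition of Group~$1$) so that the heavy contribution from machine~$1$ cannot ``leak'' into any other $Rd^\ell$.
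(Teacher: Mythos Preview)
Your proposal is correct and is precisely the paper's approach: the paper derives this observation directly from Observation~\ref{obs:size}, and your unpacking of the definition of $Rd^\ell(\cdot)$ via the grouping $Gr_\ell$ is exactly the intended (one-line) justification. The only thing to note is that the paper does not spell out the details you give; it simply states ``By Observation~\ref{obs:size}, we have the following,'' so your write-up is, if anything, more careful than the original.
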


We now view the rightside of Eq~(\ref{eq:decompose}) as a summation over a sequence of vectors $\vez_i$, where each vector $\vez_i$ equals $\vece_h$ or $\vemd(j,k)$ or $\overline{\vemd}(j,k)$ or $\veod(h)$. Hence, we can rewrite  Eq~(\ref{eq:decompose}) as 
\begin{eqnarray}\label{eq:decompose-z}
\tilde{\vey}=\sum_i \vez_i.
\end{eqnarray}
Consequently,
\begin{eqnarray}\label{eq:decompose-z-1} 
Rd(\tilde{\vey})=\sum_i Rd(\vez_i).
\end{eqnarray}

We define $Rd(\vex)[\bar{1}]$ as the projection of the vector $Rd(\vex)$ onto the subspace by excluding $Rd^1(\vex)$. Hence, we have
$$Rd(\tilde{\vey})[\bar{1}]=\sum_i Rd(\vez_i)[\bar{1}].$$
According to Observation~\ref{obs:size-2}, we have $\|Rd(\vez_i)[\bar{1}]\|_{\infty}\le \OFPT(1)$, whereas by Lemma~\ref{lemma:merging-lemma} we can find disjoint subsets $T_1,T_2,\cdots,T_{m'}$ such that $|T_j|=\OFPT(1)$ and $\sum_{i\in T_j}Rd(\vez_i)[\bar{1}]\sqsubseteq Rd(\tilde{\vey})[\bar{1}]$ and $Rd(\tilde{\vey})[\bar{1}]=\sum_j(\sum_{i\in T_j}Rd(\vez_i)[\bar{1}])$.

Now we consider $Rd^1(\vex)$'s. By Eq~(\ref{eq:decompose-z-1}) we have 
\begin{eqnarray*}\label{eq:decompose-z-2} 
	Rd^1(\tilde{\vey})=\sum_{j=1}^{m'} \sum_{i\in T_j}Rd^1(\vez_i).
\end{eqnarray*}
Given that $Rd^1(\vemd(j,k)),Rd^1(\overline{\vemd}(j,k))=\OFPT(n)$, $Rd(\vece_h)=\OFPT(1)$, and $|T_j|=\OFPT(1)$, we can conclude that $\|\sum_{i\in T_j}Rd^1(\vez_i)\|_{\infty}=\OFPT(n)$. Applying Lemma~\ref{lemma:merging-lemma}, we can further find $m''$ disjoint sets $T_1',T_2',\cdots,T_{m''}'\subseteq\{1,2,\cdots,m'\}$ such that $|T_h'|=\OFPT(n^{t_A^2})$, $\cup_{h=1}^{m''}=\{1,2,\cdots,m'\}$ and $\sum_{j\in T_h'}\sum_{i\in T_j}Rd^1(\vez_i)\sqsubseteq Rd^1(\tilde{\vey})$. Hence, Eq~(\ref{eq:decompose-z}) can be rewritten as:
$$\tilde{\vey}=\sum_{h=1}^{m''}\left(\sum_{j\in T_h'}\sum_{i\in T_j}\vez_i\right),$$
where for every $h$ it holds that $\sum_{j\in T_h'}\sum_{i\in T_j}\vez_i\sqsubseteq \tilde{\vey}$, $\|\sum_{j\in T_h'}\sum_{i\in T_j}\vez_i\sqsubseteq \tilde{\vey}\|_{\infty}=\OFPT(n^{t_A^2})$, i.e., the following lemma is true.
\begin{lemma}\label{lemma:decompose-tildevey}
	Let $H_0\vey=0$ and $\tilde{\vey}$ be the equalization of $\vey$, then there exist $\vez_h$'s such that $H_0\vez_h=0$, $\vez_h\sqsubseteq \tilde{\vey}$, $\|\vez_h\|_{\infty}=\OFPT(n^{t_A^2})$ and $\tilde{\vey}=\sum_{h=1}^{m''}\vez_h$. Furthermore, the $n+1$ bricks of each $\vez_h$ can be divided into $2+2(\lambda+1)\sigma=\OFPT(1)$ groups such that for any $i_1,i_2\in Gr_\ell$, $\vez_h^{i_1}=\vez_{h}^{i_2}$, and $\vey^{i_1}$, $\vey^{i_2}$ have the same type.
\end{lemma}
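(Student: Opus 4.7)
The plan is to build directly on the decomposition of $\tilde{\vey}$ in Eq~(\ref{eq:decompose}), which already expresses $\tilde{\vey}=\sum_i\vez_i$ (see Eq~(\ref{eq:decompose-z})) as a sum of kernel elements of $H_0$ drawn from $\{\vece_h,\vemd(j,k),\overline{\vemd}(j,k),\veod(h)\}$, all satisfying the group-wise constancy of Observation~\ref{obs:decompose}. Since any integral combination of summands that are constant on each group $Gr_\ell$ is itself constant on $Gr_\ell$, the task reduces to refining this decomposition into a sign-compatible one. To exploit the $\OFPT(1)$ dimensionality induced by the group structure, I will pass to the reduced vectors $Rd(\cdot)$ of dimension $t_B+t_A+2(\lambda+1)\sigma t_A=\OFPT(1)$, where Eq~(\ref{eq:decompose2}) gives $Rd(\tilde{\vey})=\sum_i Rd(\vez_i)$, and then apply Lemma~\ref{lemma:merging-lemma} to partition this sum into sign-compatible blocks.

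The main obstacle is the asymmetric norm profile exposed by Observation~\ref{obs:size-2}: the reduced bricks at indices $0$ and $\ge 2$ have norm $\OFPT(1)$, but the brick at $i=1$ may be as large as $\OFPT(n)$, because it absorbs all of the compensating contributions $-n_j\vev_k$ and $-\psi^r(j,k)\vev_k$ introduced in the definitions of $\vemd(j,k)$ and $\overline{\vemd}(j,k)$. Applying Lemma~\ref{lemma:merging-lemma} once to the whole reduced vector with $\zeta=\OFPT(n)$ would blow up the subset size by an exponent of the full reduced dimension, much weaker than the target $\OFPT(n^{t_A^2})$. To avoid this, I carry out the merging in two stages, with brick~$1$ isolated. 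First, apply Lemma~\ref{lemma:merging-lemma} to the projected sequence $\{Rd(\vez_i)[\bar{1}]\}_i$ obtained by deleting brick~$1$: here both the dimension and the per-summand norm are $\OFPT(1)$, yielding subsets $T_1,\ldots,T_{m'}$ of size $\OFPT(1)$ such that $\sum_{i\in T_j}Rd(\vez_i)[\bar{1}]\sqsubseteq Rd(\tilde{\vey})[\bar{1}]$. Second, collapse each $T_j$ to its brick-$1$ sum $\sum_{i\in T_j}Rd^1(\vez_i)\in\Z^{t_A}$, whose norm is at most $|T_j|\cdot\OFPT(n)=\OFPT(n)$, and apply Lemma~\ref{lemma:merging-lemma} again in dimension $t_A$ with $\zeta=\OFPT(n)$. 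This produces $T_1',\ldots,T_{m''}'\subseteq\{1,\ldots,m'\}$ of size $\OO^*(n^{t_A^2})=\OFPT(n^{t_A^2})$ whose brick-$1$ partial sums are sign-compatible with $Rd^1(\tilde{\vey})$.

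Setting $\vez_h:=\sum_{j\in T_h'}\sum_{i\in T_j}\vez_i$, I conclude. Each $\vez_h$ satisfies $H_0\vez_h=0$ as a sum of kernel elements. Sign-compatibility on the $[\bar{1}]$-coordinates is preserved under unions over $j\in T_h'$ of the sign-compatible blocks produced by the first pass (partial sums of sign-compatible summands are themselves sign-compatible with the total), while sign-compatibility on brick~$1$ is exactly what the second pass guarantees, so $\vez_h\sqsubseteq\tilde{\vey}$. The $\ell_\infty$-norm bound $\OFPT(n^{t_A^2})$ follows by combining $|T_h'|=\OFPT(n^{t_A^2})$ with the per-term norm estimates $\OFPT(1)$ away from brick~$1$, together with the sign-compatibility on brick~$1$. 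Finally, the group-wise constancy $\vez_h^{i_1}=\vez_h^{i_2}$ for $i_1,i_2\in Gr_\ell$ is inherited directly from Observation~\ref{obs:decompose}, since each summand $\vez_i$ already satisfies this property and integral combinations preserve it; and because $Gr_\ell$ is constructed in Section~\ref{subsec:decompose-tildevey-compatible} as a refinement of the type-classes $N_1,\ldots,N_\sigma$ defined in Section~\ref{subsec:type}, any two bricks of $\vey$ sitting in the same $Gr_\ell$ have the same type by definition.
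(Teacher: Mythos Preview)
Your proposal is correct and follows essentially the same approach as the paper: starting from the decomposition Eq~(\ref{eq:decompose}), passing to the reduced vectors $Rd(\cdot)$, and applying the merging lemma (Lemma~\ref{lemma:merging-lemma}) in two stages---first on the $[\bar 1]$-projection with $\zeta=\OFPT(1)$ to obtain the $T_j$'s, then on brick~$1$ in dimension $t_A$ with $\zeta=\OFPT(n)$ to obtain the $T_h'$'s---and finally setting $\vez_h=\sum_{j\in T_h'}\sum_{i\in T_j}\vez_i$. Your explicit justification that sign-compatibility on the $[\bar 1]$-coordinates survives the second aggregation (because partial sums of summands that are already $\sqsubseteq Rd(\tilde\vey)[\bar 1]$ remain $\sqsubseteq Rd(\tilde\vey)[\bar 1]$) and your remark that group-wise constancy is inherited from Observation~\ref{obs:decompose} are exactly the points the paper leaves implicit.
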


\subsubsection{A sign-compatible decomposition of $\vey$}\label{subsec:decompose-vey}
Let $\Gamma=\sum_{k=1}^{\lambda}\|\vev_k\|_{\infty}=\OFPT(1)$. Let $\vez_h$'s be the same as that in Lemma~\ref{lemma:decompose-tildevey}. The goal of this subsection is to prove the following lemma.
\begin{lemma}\label{lemma:decompose-vey}
	If $m''>2\Gamma\cdot  t_A\cdot \left( 2+2(\lambda+1)\sigma\right)$, then there exists some $h_0$ such that $\vez_{h_0}\sqsubseteq \vey$.
\end{lemma}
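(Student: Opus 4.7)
The plan is to prove Lemma~\ref{lemma:decompose-vey} by a pigeonhole on the indices $h\in\{1,\dots,m''\}$ that can carry a nonzero entry at a ``small'' coordinate of $\vey$. Roughly: since $\tilde{\vey}=\sum_h\vez_h$ is a sign-compatible summation (each $\vez_h\sqsubseteq\tilde{\vey}$), and since at every small coordinate the value of $\tilde{\vey}$ is bounded by $2\Gamma$, only a few indices can contribute there; if $m''$ exceeds the total budget, some $\vez_{h_0}$ vanishes on every small location, and the centralization construction upgrades this to $\vez_{h_0}\sqsubseteq\vey$.

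First I would fix the notion of small location. By Lemma~\ref{lemma:decompose-tildevey}, all bricks of $\vey$ inside $Gr_\ell$ share the same quantity type, so each group has a well-defined set of at most $t_A$ small coordinates; in total there are at most $t_A\bigl(2+2(\lambda+1)\sigma\bigr)$ (group, small-coordinate) pairs. For any such $(\ell,j)$ and any $i\in Gr_\ell\subseteq N_{j^\star}$, every $y_j^{i'}$ with $i'\in N_{j^\star}$ satisfies $|y_j^{i'}|\le\Gamma$ (the quantity type is uniform on $N_{j^\star}$), so the average $\tfrac{1}{n_{j^\star}}\sum_{i'\in N_{j^\star}} y_j^{i'}$ has absolute value $\le\Gamma$, and Lemma~\ref{lemma:average} yields $|\tilde{y}_j^i|\le 2\Gamma$. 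Now comes the pigeonhole: since $\vez_h\sqsubseteq\tilde{\vey}$, the integers $(\vez_h)_j^i$ all share the sign of $\tilde{y}_j^i$ and sum to $\tilde{y}_j^i$, so at most $|\tilde{y}_j^i|\le 2\Gamma$ of them can be nonzero; by the constancy property of Lemma~\ref{lemma:decompose-tildevey}, this count controls the whole group uniformly, and summing over small locations the number of indices $h$ nonzero at some small coordinate is at most $2\Gamma\cdot t_A\cdot (2+2(\lambda+1)\sigma)$. Under the hypothesis on $m''$, there exists $h_0$ that vanishes at every small coordinate of every brick in every group.

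It remains to verify $\vez_{h_0}\sqsubseteq\vey$. At small coordinates we have $(\vez_{h_0})_j^i=0$, so both parts of $\sqsubseteq$ hold trivially. At a large coordinate, averaging values all $>\Gamma$ (resp.\ all $<-\Gamma$) over $N_{j^\star}$ produces an average still above $\Gamma$ (resp.\ below $-\Gamma$), and the $\Gamma$-slackness of Lemma~\ref{lemma:average} cannot flip the sign; hence $\tilde{y}_j^i$ shares the sign of $y_j^i$, and sign-compatibility of $\vez_{h_0}$ with $\vey$ then follows from $\vez_{h_0}\sqsubseteq\tilde{\vey}$. The remaining step is the magnitude inequality $|(\vez_{h_0})_j^i|\le|y_j^i|$: I plan to argue this inside a single $Gr_\ell$, where $\tilde{\vey}$ and $\vez_{h_0}$ are both constant and $\vey-\tilde{\vey}$ is a signed combination of $\vev_k$'s of $\ell_\infty$-norm $\le\Gamma$, so combining $|(\vez_{h_0})_j^i|\le|\tilde{y}_j^i|$ with this structural description should close the gap.

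The main obstacle is precisely this last magnitude step: unlike sign, the magnitude of an average can exceed that of an individual summand, so the generic bound $|(\vez_{h_0})_j^i|\le|\tilde{y}_j^i|$ is not by itself strong enough. I expect to need the explicit form of the centralization (only the $\vev_k$'s are reshuffled inside $N_{j^\star}$) and the deliberate choice $\Gamma=\sum_k\|\vev_k\|_\infty$, so that on any large coordinate the discrepancy between $\vey^i$ and $\tilde{\vey}^i$ is absorbed into the gap created by the large threshold. Everything else reduces to routine bookkeeping once this structural bound is in place.
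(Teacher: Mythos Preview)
Your plan is exactly the paper's: prove (as in Corollary~\ref{coro:type}) that $\tilde{\vey}^i[j]$ has the right sign at large coordinates of $\vey$ and that $|\tilde{\vey}^i[j]|\le 2\Gamma$ at small ones, then pigeonhole over the at most $t_A\cdot\bigl(2+2(\lambda+1)\sigma\bigr)$ group/small-coordinate pairs to find some $\vez_{h_0}$ vanishing at every small location. Your direct count and the paper's contradiction argument (Lemma~\ref{lemma:aug-1}) are the same computation, and both are correct.

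The obstacle is exactly where you place it, but your proposed fix does not work. Lemma~\ref{lemma:average} bounds $\|\tilde{\vey}^i-\vey_f^i\|_\infty$ by $\Gamma=\sum_k\|\vev_k\|_\infty$, i.e., the rounding error between the integral redistribution $\tilde{\vey}$ and the \emph{fractional average} $\vey_f$; it says nothing about $\|\vey^i-\tilde{\vey}^i\|_\infty$, which compares an individual brick to the group average. Because the number of copies of a job $\vev_k$ sitting on a single machine in $N_{j^\star}$ is unbounded (the add-on multiplicities $\beta_\ell$ are arbitrary), that discrepancy is not controlled by $\Gamma$ or by any $\OFPT(1)$ quantity---if all $\psi(j^\star,k)$ copies within $N_{j^\star}$ land on one machine, its brick differs from the average by roughly $\psi(j^\star,k)(1-1/n_{j^\star})\cdot\vev_k$. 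And even granting your hoped-for bound, you would only reach $|\vez_{h_0}^i[j]|\le|\tilde{\vey}^i[j]|\le|\vey^i[j]|+\Gamma$, still short of the required $|\vey^i[j]|$. The paper does not argue this step either: after Lemma~\ref{lemma:aug-1} it simply asserts ``it is clear that the $\vez_{h_0}$ \ldots satisfies that $\vez_{h_0}\sqsubseteq\vey$.'' So the magnitude inequality at large coordinates is a genuine gap that remains unresolved both in your outline and in the paper's own text.
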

%Recall that $\|\vez_h\|=\OFPT(n^{t_A^2})$.

Towards the proof, we need the following observation and lemma. For an arbitrary $(t_B+nt_A)$-dimensional vector $\vez$, we define by $\vez^i[j]$ the $j$-th coordinate of the brick $\vez^i$. 
Recall the definition of $\vey_f$. As the average is taken among bricks of the same type, we have the following observation.
\begin{observation}
	If $\vey^i[j]$ is large, then $|\vey_f^i[j]|>\Gamma$. Otherwise, $|\vey_f^i[j]|\le \Gamma$.
\end{observation}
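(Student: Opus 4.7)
The plan is that this observation is essentially immediate from the way the partition $N_1,N_2,\dots,N_\sigma$ was constructed, together with the basic fact that averaging preserves strict sign when all summands agree in sign. I would first recall that two bricks have the same \emph{type} only when they share both a principle type and a quantity type, so in particular all bricks $\vey^i$ with $i\in N_j$ agree coordinate by coordinate in their quantity type. Reading the definition of quantity type back through the ``positive large / negative large / small'' trichotomy (so that a coordinate $x$ is positive large iff $x>\Gamma$, negative large iff $x<-\Gamma$, and small iff $|x|\le\Gamma$), this means: for every fixed coordinate index $p$ and every $j$, exactly one of the following holds uniformly over $i\in N_j$:
\begin{itemize}
\item[(i)] $\vey^i[p]>\Gamma$ for all $i\in N_j$, or
\item[(ii)] $\vey^i[p]<-\Gamma$ for all $i\in N_j$, or
\item[(iii)] $|\vey^i[p]|\le\Gamma$ for all $i\in N_j$.
\end{itemize}

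Next I would apply the definition of the centralization coordinate by coordinate. For $j\ge 2$ and $i\in N_j$, Eq.~(\ref{eq:average}) gives $\vey_f^i[p]=\tfrac{1}{n_j}\sum_{i'\in N_j}\vey^{i'}[p]$, while for $i=1$ (where $N_1=\{1\}$) we have $\vey_f^1[p]=\vey^1[p]$. In case (i) every summand exceeds $\Gamma$, so the average exceeds $\Gamma$ and $|\vey_f^i[p]|>\Gamma$; case (ii) is symmetric, yielding $\vey_f^i[p]<-\Gamma$ and again $|\vey_f^i[p]|>\Gamma$. In case (iii), the triangle inequality gives
\[
|\vey_f^i[p]|\;\le\;\tfrac{1}{n_j}\sum_{i'\in N_j}|\vey^{i'}[p]|\;\le\;\Gamma,
\]
which is the small half of the dichotomy. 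The degenerate case $i=1$ is handled by the identity $\vey_f^1=\vey^1$, for which both claims are trivial.

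Putting these cases together yields the stated equivalence: $\vey^i[p]$ is large iff $|\vey_f^i[p]|>\Gamma$, and $\vey^i[p]$ is small iff $|\vey_f^i[p]|\le\Gamma$. There is no real obstacle here; the only thing worth highlighting is that \emph{both} directions of the observation depend crucially on the partition being by quantity type (and not, say, only by principle type), since otherwise mixing positive-large and negative-large entries could make the average small, breaking the ``large $\Rightarrow$ average large'' direction. This is precisely why the types were refined in Section~\ref{subsec:type} before centralization was introduced.
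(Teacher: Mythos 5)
Your argument is correct and matches the paper's (implicit) reasoning exactly: the paper states this observation with only the remark ``as the average is taken among bricks of the same type,'' and your case analysis (coordinate-wise uniformity of quantity type within each $N_j$, averaging preserves sign and strict bound in the large cases, triangle inequality in the small case, trivial $N_1=\{1\}$ case) is precisely the unpacking of that remark. No gap.
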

By Lemma~\ref{lemma:average}, we have the following corollary.
\begin{corollary}\label{coro:type}
	\begin{itemize}
		\item If $\vey^i[j]$ is positive large, then $\tilde{\vey}^i[j]>0$.
		\item If $\vey^i[j]$ is negative large, then 
		$\tilde{\vey}^i[j]<0$.
		\item If $\vey^i[j]$ is small, then 
		$|\tilde{\vey}^i[j]|\le 2\Gamma$.
	\end{itemize}
\end{corollary}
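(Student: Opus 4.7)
The plan is to deduce Corollary~\ref{coro:type} essentially as an immediate combination of the preceding Observation (relating $|\vey_f^i[j]|$ to $\Gamma$ according to whether $\vey^i[j]$ is large or small) with Lemma~\ref{lemma:average}, whose bound $\|\tilde\vey^i-\vey_f^i\|_\infty\le\sum_{k=1}^\lambda\|\vev_k\|_\infty$ coincides exactly with the chosen value $\Gamma=\sum_{k=1}^\lambda\|\vev_k\|_\infty$. So the proof is really a sign/magnitude bookkeeping step: replace $\tilde\vey^i[j]$ by $\vey_f^i[j]\pm\Gamma$ and read off each of the three claims.

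First I would handle the positive-large case. Here $\vey^i[j]>\Gamma$; since $i$ lies in some group $N_j$ and bricks in the same group share a common quantity type (by the definition of type in Section~\ref{subsec:type}), every brick $\vey^{i'}[j]$ with $i'\in N_j$ is also positive large, in particular positive. Averaging over $N_j$ yields $\vey_f^i[j]>\Gamma$ by the preceding Observation, and then Lemma~\ref{lemma:average} gives $\tilde\vey^i[j]\ge\vey_f^i[j]-\Gamma>0$. The negative-large case is symmetric: the whole group has negative-large $j$-th coordinate, the Observation yields $\vey_f^i[j]<-\Gamma$, and Lemma~\ref{lemma:average} gives $\tilde\vey^i[j]\le\vey_f^i[j]+\Gamma<0$.

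For the third bullet, I would invoke the \lq\lq otherwise\rq\rq\, branch of the Observation: when $\vey^i[j]$ is small, $|\vey_f^i[j]|\le\Gamma$, and the triangle inequality combined with Lemma~\ref{lemma:average} gives $|\tilde\vey^i[j]|\le|\vey_f^i[j]|+|\tilde\vey^i[j]-\vey_f^i[j]|\le\Gamma+\Gamma=2\Gamma$.

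The only non-routine thing to justify carefully is the silent use of \lq\lq the whole group has the same quantity type\rq\rq, which is what keeps the average from being dragged across the threshold $\Gamma$ and thereby lets the Observation apply uniformly on $N_j$; this is exactly why the types were defined on quantity before centralizing. Apart from that, the argument is a one-line triangle-inequality computation in each bullet, and no further obstacles are expected. A special note is needed for $i=1$ (which is a type class of its own with $N_1=\{1\}$) and for $i=0$, but both reduce to the same estimate since $\tilde\vey^0=\vey^0$ and $\tilde\vey^1=\vey^1$ by construction of the centralization, so $\tilde\vey^i[j]=\vey_f^i[j]$ in those cases and the bounds hold trivially.
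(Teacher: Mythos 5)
Your proof is correct and follows exactly the route the paper has in mind: the paper states the corollary as an immediate consequence of the preceding Observation and Lemma~\ref{lemma:average}, and your argument is precisely the triangle-inequality bookkeeping (with $\Gamma = \sum_k \|\vev_k\|_\infty$) needed to make that step explicit. Your extra care in noting that the sign of $\vey_f^i[j]$ is preserved because all bricks in the same group $N_\ell$ share a quantity type is a genuine, if small, gap in the Observation as stated (which only bounds $|\vey_f^i[j]|$), and filling it is the right thing to do.
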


Using the above corollary, we have the following lemma, which implies directly Lemma~\ref{lemma:decompose-vey}.
\begin{lemma}\label{lemma:aug-1}
	If $m''>2\Gamma t_A\cdot \left( 2+2(\lambda+1)\sigma\right)$, then there exists some $\vez_{h_0}$ such that  
	\begin{itemize}
		\item If $\vey^i[j]$ is positive large, then $\vez_{h_0}^i[j]\ge 0$.
		\item If $\vey^i[j]$ is negative large, then $\vez_{h_0}^i[j]\le 0$.
		\item If $\vey^i[j]$ is small, then $\vez_{h_0}^i[j]= 0$.
	\end{itemize}
\end{lemma}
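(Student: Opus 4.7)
The plan is to observe that the first two bullets are essentially automatic, and only the third bullet (vanishing on small coordinates) requires work, which is then handled by a pigeonhole argument using the group structure from Lemma~\ref{lemma:decompose-tildevey}.

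First I would note that Lemma~\ref{lemma:decompose-tildevey} gives $\vez_h \sqsubseteq \tilde{\vey}$ for every $h$, so coordinatewise $\vez_h^i[j]$ is sign-compatible with $\tilde{\vey}^i[j]$. Invoking Corollary~\ref{coro:type}: if $\vey^i[j]$ is positive large then $\tilde{\vey}^i[j] > 0$, forcing $\vez_h^i[j] \ge 0$; and if $\vey^i[j]$ is negative large then $\tilde{\vey}^i[j] < 0$, forcing $\vez_h^i[j] \le 0$. Hence the first two conditions in Lemma~\ref{lemma:aug-1} hold for \emph{every} $h$, with no pigeonholing needed.

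Next I would handle the small coordinates. For any coordinate $(i,j)$ with $\vey^i[j]$ small, Corollary~\ref{coro:type} gives $|\tilde{\vey}^i[j]| \le 2\Gamma$. Since the sign-compatibility $\vez_h \sqsubseteq \tilde{\vey}$ forces all the summands $\vez_h^i[j]$ to share the sign of $\tilde{\vey}^i[j]$, we get
\begin{equation*}
\sum_{h=1}^{m''} |\vez_h^i[j]| \;=\; \bigl|\sum_{h=1}^{m''} \vez_h^i[j]\bigr| \;=\; |\tilde{\vey}^i[j]| \;\le\; 2\Gamma.
\end{equation*}
As the entries are integers, this means at most $2\Gamma$ indices $h$ can satisfy $\vez_h^i[j] \ne 0$.

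Now I would exploit the group structure: by Lemma~\ref{lemma:decompose-tildevey}, within each group $Gr_\ell$ the bricks of every $\vez_h$ are identical, and the corresponding bricks of $\vey$ all share the same type. In particular, whether coordinate $j$ is small, positive large, or negative large depends only on $(\ell, j)$, and the value $\vez_h^i[j]$ depends only on $(\ell,j,h)$. Thus the set of ``small-coordinate slots'' (those pairs $(\ell,j)$ that are declared small) has cardinality at most $t_A \cdot (2 + 2(\lambda+1)\sigma)$. For each such slot, by the previous paragraph at most $2\Gamma$ of the indices $h$ fail to satisfy $\vez_h^i[j] = 0$. Unioning over all small slots, the number of \emph{bad} indices $h$ (i.e.\ those that violate the third bullet somewhere) is at most $2\Gamma \cdot t_A \cdot (2 + 2(\lambda+1)\sigma)$.

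Finally, if $m'' > 2\Gamma \cdot t_A \cdot (2 + 2(\lambda+1)\sigma)$, then by pigeonhole there exists at least one good index $h_0$; combined with the automatic first two bullets, this $\vez_{h_0}$ satisfies all three conditions of Lemma~\ref{lemma:aug-1}. The only conceptual subtlety is noticing that sign-compatibility plus integrality converts the bound $|\tilde{\vey}^i[j]| \le 2\Gamma$ into a bound on the number (not just magnitude) of nonzero $\vez_h^i[j]$; everything else is bookkeeping over the $\OFPT(1)$ groups.
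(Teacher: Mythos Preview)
Your proof is correct and follows essentially the same approach as the paper's: both handle the first two bullets immediately via $\vez_h\sqsubseteq\tilde{\vey}$ and Corollary~\ref{coro:type}, and both bound the number of indices $h$ that can be nonzero on a small slot $(\ell,j)$ by $2\Gamma$ using sign-compatibility plus integrality, then pigeonhole over the at most $t_A\cdot(2+2(\lambda+1)\sigma)$ slots. The only cosmetic difference is that you phrase the slot count as a direct union bound, whereas the paper writes the same estimate as a proof by contradiction.
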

\begin{proof}
	First, by Lemma~\ref{lemma:decompose-tildevey} we have $\vez_h\sqsubseteq \tilde{\vey}$ for every $1\le h\le m''$. If  $\vey^i[j]$ is positive large, by Corollary~\ref{coro:type} we have $\tilde{\vey}^i[j]>0$, then $\vez_h^i[j]\ge 0$. Similarly if $\vey^i[j]$ is negative large we have $\vez_h^i[j]\le 0$. It remains to consider small coordinates. Consider the following set:
	$$Z_s=\{h: \exists 1\le i\le n, 1\le j\le t_A \text{ such that } \vey^i[j] \text{ is small and } \vez_h^i[j]\neq 0\}.$$
	We claim that, $|Z_s|\le 2\Gamma t_A\cdot \left( 2+2(\lambda+1)\sigma\right)$. Suppose on the contrary that this claim is not true, then $Z_s$ contains more than $2\Gamma t_A\cdot \left( 2+2(\lambda+1)\sigma\right)$ elements, and consequently there exists some $1\le \ell_0\le 2+2(\lambda+1)\sigma$ such that $|Z_s\cap Gr_{\ell_0}|>2\Gamma t_A$. As $1\le j\le t_A$, there exists some $j_0$ such that 
	$$|\{h: \vey^i[j_0] \text{ is small and } \vez_h^i[j_0]\neq 0, i\in Gr_{\ell_0}\}|>2\Gamma.$$
	Note that for all $i\in Gr_{\ell_0}$, $\vez_h^i[j_0]$ takes the same value, hence, for an arbitrary $i_0\in Gr_{\ell_0}$ we have that
	$$|\{h: \vey^{i_0}[j_0] \text{ is small and } \vez_h^{i_0}[j_0]\neq 0\}|>2\Gamma.$$	
	Let $Z_s[j_0]=\{h: \vey^{i_0}[j_0] \text{ is small and } \vez_h^{i_0}[j_0]\neq 0\}$. According to Corollary~\ref{coro:type}, $|\tilde{\vey}^{i_0}[j_0]|\le 2\Gamma$. Meanwhile the fact that $\vez_h\sqsubseteq \tilde{\vey}$ implies that either $\vez_h^{i_0}[j_0]>0$ for all $h\in Z_s[j_0]$, or $\vez_h^{i_0}[j_0]<0$ for all $h\in Z_s[j_0]$. In either case, we conclude that $|\sum_{h\in Z_s[j_0]}\vez_h^{i_0}[j_0]|>2\Gamma$. As $\tilde{\vey}=\sum_h\vez_h$ is a sign-compatible decomposition, we have $|\tilde{\vey}^{i_0}[j_0]|>2\Gamma$, which is a contradiction. Hence, $|Z_s|\le 2\Gamma t_A\cdot \left( 2+2(\lambda+1)\sigma\right)$. Thus, if $m''>2\Gamma t_A\cdot \left( 2+2(\lambda+1)\sigma\right)$, there must exist some $h_0$ such that $\vez_{h_0}^i[j]=0$ for all $i,j$ where $\vey^i[j]$ is small.
\end{proof}
It is clear that the $\vez_{h_0}$ in Lemma~\ref{lemma:aug-1} satisfies that $\vez_{h_0}\sqsubseteq \vey$, whereas Lemma~\ref{lemma:decompose-vey} is proved.

Recall that $\|\vez_h\|_{\infty}=\OFPT(n^{t_A^2})$, then there exists some function $f(A,B,C,D)$ that only depends on the small matrices $A,B,C,D$ (or more precisely, the parameters $\Delta,s_A,s_B,s_C,s_D,t_A,t_B,t_C,t_D$) such that $\|\vez_h\|_{\infty}=f(A,B,C,D)\cdot n^{t_A^2}$. Consequently, the following corollary follows directly from Lemma~\ref{lemma:decompose-vey}.

\begin{corollary}\label{coro:vey-decompose}
	If $\|\vey\|_{1}>2\Gamma\cdot  t_A\cdot \left( 2+2(\lambda+1)\sigma\right)\cdot (t_B+nt_A)\cdot f(A,B,C,D) n^{t_A^2}+2\Gamma\cdot (t_B+nt_A)$, then there exists some $\vez_{h_0}$ such that $\|\vez_{h_0}\|_{\infty}\le f(A,B,C,D) n^{t_A^2}$ and $\vez_{h_0}\sqsubseteq \vey$.
\end{corollary}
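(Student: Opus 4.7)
The plan is to deduce the existence of $\vez_{h_0}$ directly from Lemma~\ref{lemma:decompose-vey} by translating the hypothesis on $||\vey||_1$ into the threshold $m'' > 2\Gamma t_A(2+2(\lambda+1)\sigma)$ required by that lemma. This requires two bridges: one from $||\vey||_1$ to $||\tilde{\vey}||_1$, and one from $||\tilde{\vey}||_1$ to $m''$.

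The second bridge is immediate. The decomposition of $\tilde{\vey}$ from Lemma~\ref{lemma:decompose-tildevey} is sign-compatible, i.e.\ $\vez_h\sqsubseteq\tilde{\vey}$ for each $h$, so $||\tilde{\vey}||_1=\sum_{h=1}^{m''}||\vez_h||_1$. Using the coordinate count and the $\ell_\infty$ bound $||\vez_h||_\infty\le f(A,B,C,D)n^{t_A^2}$, this gives
$$||\tilde{\vey}||_1\le m''\cdot(t_B+nt_A)\cdot f(A,B,C,D)n^{t_A^2}.$$

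The first bridge passes through the auxiliary vector $\vey_f$ of Eq.~(\ref{eq:average}). The crucial observation is that all bricks of $\vey$ within a single type group $N_{j'}$ ($j'\ge 2$) share the same \emph{quantity type}: on any large coordinate $j$, the values $\vey^i[j]$ for $i\in N_{j'}$ all have the same sign, so averaging them into $\vey_f$ causes no cancellation whatsoever, giving $\sum_{i\in N_{j'}}|\vey^i[j]|=\sum_{i\in N_{j'}}|\vey_f^i[j]|$. On a small coordinate, both sums are bounded by $n_{j'}\Gamma$ trivially. Summing over $j'\ge 2$ and noting $\vey_f^0=\vey^0$, $\vey_f^1=\vey^1$ yields $||\vey||_1\le||\vey_f||_1+nt_A\Gamma$. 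Next, Lemma~\ref{lemma:average} gives $||\tilde{\vey}^i-\vey_f^i||_\infty\le\Gamma$ for $i\ge 2$ (and equality for $i=0,1$, since machine $1$ constitutes its own group $N_1$ and brick $0$ is never averaged), so $||\vey_f||_1\le||\tilde{\vey}||_1+nt_A\Gamma$. Chaining and absorbing into the slightly slacker $2\Gamma(t_B+nt_A)\ge 2nt_A\Gamma$ produces
$$||\vey||_1\le||\tilde{\vey}||_1+2\Gamma(t_B+nt_A).$$

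Combining the two bridges, the hypothesis on $||\vey||_1$ forces $||\tilde{\vey}||_1>2\Gamma t_A(2+2(\lambda+1)\sigma)(t_B+nt_A)f(A,B,C,D)n^{t_A^2}$, which in turn forces $m''>2\Gamma t_A(2+2(\lambda+1)\sigma)$. Lemma~\ref{lemma:decompose-vey} then delivers an index $h_0$ with $\vez_{h_0}\sqsubseteq\vey$, and by construction $||\vez_{h_0}||_\infty\le f(A,B,C,D)n^{t_A^2}$, completing the proof. The main obstacle I anticipate is the bookkeeping in Step~2: one must be careful to exploit the common-sign property on \emph{every} large coordinate inside a type group while separately controlling small coordinates, and to track brick $0$ and brick $1$ (which are not subject to averaging) so that the total deficit fits within the $2\Gamma(t_B+nt_A)$ slack term.
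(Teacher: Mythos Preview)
Your proof is correct and follows essentially the same route as the paper: first compare $||\vey||_1$ with $||\vey_f||_1$ using the common-sign property on large coordinates within each type group $N_{j'}$ (losing at most $\Gamma n_{j'} t_A$ on the small coordinates), then compare $||\vey_f||_1$ with $||\tilde{\vey}||_1$ via Lemma~\ref{lemma:average}, and finally convert the resulting lower bound on $||\tilde{\vey}||_1$ into the threshold $m''>2\Gamma t_A(2+2(\lambda+1)\sigma)$ needed for Lemma~\ref{lemma:decompose-vey}. The only cosmetic difference is that you invoke sign-compatibility to get $||\tilde{\vey}||_1=\sum_h||\vez_h||_1$, whereas the paper uses the triangle inequality $||\tilde{\vey}||_1\le\sum_h||\vez_h||_1$; either suffices.
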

\begin{proof}
	Recall Eq~(\ref{eq:average}), we have $n_\ell\cdot \vey_f^k=\sum_{i\in N_\ell} \vey^i$ for all $k\in N_\ell$. Note that $\vey^i$'s have the same type for $i\in N_\ell$, implying for $1\le j\le  t_A$, if $\vey^i[j]$ is large for some $i\in N_\ell$, then $\vey^i[j]$'s are all positive or all negative. This means, for a large coordinate $j$ we have $|\sum_{i\in N_\ell}\vey^i_f[j]|=|\sum_{i\in N_\ell}\vey^i[j]|=\sum_{i\in N_\ell}|\vey^i[j]|$. Hence,
	$$\sum_{i\in N_\ell}\|\vey_f^i\|_1\ge \sum_{i\in N_\ell}\|\vey^i\|_1-\Gamma\cdot n_\ell \cdot t_A$$ 	
	According to Lemma~\ref{lemma:average}, we know that $\|\tilde{\vey}^i-\vey_f^i\|_{\infty}\le \Gamma$, hence 
	$$\sum_{i\in N_\ell}\|\tilde{\vey}^i\|_1\ge \sum_{i\in N_\ell}\|\vey_f^i\|_1-\Gamma\cdot n_\ell \cdot t_A\ge \sum_{i\in N_\ell}\|\vey^i\|_1-2\Gamma\cdot n_\ell \cdot t_A, \quad \forall 1\le i\le n$$ 
	Recall that $\tilde{\vey}^0=\vey^0$, hence,
	$$\|\tilde{\vey}\|_1\ge \|\vey\|_1-2\Gamma\cdot (t_B+nt_A).$$
	
	If $\|\vey\|_{1}>2\Gamma\cdot  t_A\cdot \left( 2+2(\lambda+1)\sigma\right)\cdot (t_B+nt_A)\cdot f(A,B,C,D)n^{t_A^2}+2\Gamma\cdot (t_B+nt_A)$, then 
	\begin{eqnarray}\label{eq:vey-decompose}
	\|\tilde{\vey}\|_1> 2\Gamma\cdot  t_A\cdot \left( 2+2(\lambda+1)\sigma\right)\cdot (t_B+nt_A)\cdot f(A,B,C,D)n^{t_A^2}.	
	\end{eqnarray}
	Since $\|\vez_h\|_{\infty}\le f(A,B,C,D) n^{t_A^2}$, we have $\|\vez_h\|_{1}\le f(A,B,C,D) n^{t_A^2} \cdot (t_B+nt_A)$. As $\tilde{\vey}=\sum_{h=1}^{m''}\vez_h$,  Eq~(\ref{eq:vey-decompose}) implies that $m''> 2\Gamma\cdot  t_A\cdot \left( 2+2(\lambda+1)\sigma\right)$. By Lemma~\ref{lemma:decompose-vey}, there exists some $\vez_{h_0}\sqsubseteq \vey$.	
\end{proof}

By Corollary~\ref{coro:vey-decompose} and the definition of Graver basis, we know that if $\|\vey\|_{1}>2\Gamma\cdot  t_A\cdot \left( 2+2(\lambda+1)\sigma\right)\cdot (t_B+nt_A)\cdot f(A,B,C,D) n^{t_A^2}+2\Gamma\cdot (t_B+nt_A)=\Omega_{FPT}(n^{t_A^2+1})$, then $\vey$ is not a Graver basis element. Hence, Theorem~\ref{thm:3-block-graver} is true.

\section{Proof of Theorem~\ref{thm:4block-3block}}
\begin{reptheorem}{thm:4block-3block}
	Any $4$-block $n$-fold IP with parameters $\Delta, s_A, s_B, s_C, s_D, t_A, t_B, t_C, t_D$ has a feasibly kernel-preserving extended formulation whose constraint matrix is a $3$-block $n$-fold matrix with parameters $\hat{\Delta}, \hat{s}_A, \hat{s}_B, \hat{s}_D, \hat{t}_A, \hat{t}_B, \hat{t}_D$ satisfying %\martin{maybe they should be $s_{\hat{A}}$ instead?}
	%\lin{This looks better than $s_{\hat{A}}$. Let's keep it.}
	\begin{align*}
	\hat{\Delta} &= \Delta & \hat{t}_A &= \hat{t}_D =  2t_C + t_D + s_A  & \hat{t}_B &= t_B & 
	\hat{s}_A = \hat{s}_B = s_B + t_C && \hat{s}_D &= s_D = s_C \enspace .
	\end{align*}
\end{reptheorem}
\begin{proof}
	Let us construct a $3$-block $n$-fold IP instance which models the given $4$-block IP instance.
	It's matrix $\hat{H}_0$ is a $3$-block $n$-fold matrix composed of blocks $\hat{A}, \hat{B}$ and $\hat{D}$, and the remaining data is $\hat{\veb}, \hat{\vel}, \hat{\veu}$ and $\hat{\vew}$.
	Let the blocks be defined as follows.%\martin{check for a notational collision - $\vealpha$ perhaps used elsewhere?}
	%\lin{Currently alpha, beta, gamma are all used as coefficients in other sections. But this is a new section, so I think it is OK to keep it (or you have another set of variables... but most variables are already used).}
	\begin{align*}
	\hat{D} &= (C~D~\vezero~\vezero) & \hat{A} &= \left(\begin{smallmatrix}-I~ \vezero ~ I ~ \vezero \\ \vezero ~ A ~ \vezero ~ I\end{smallmatrix} \right) & \hat{B} &= \left(\begin{smallmatrix} I \\ B\end{smallmatrix} \right)
	\end{align*}
	We call the four columns of $\hat{A}$ and $\hat{D}$ \emph{subbricks} and index them by greek letters $\alpha, \beta, \gamma$ and $\delta$, i.e., $\vex^{1\alpha}$ is the $\alpha$-subbrick of the first brick.
	%\martin{what are the dimensions of the new blocks?}
	
	Now, we add an extra brick which we call an \emph{aggregation} brick, denoted $\vex^{d}$ where $d=n+1$. 
	%\lin{I changed $a$ to $d$ as it appears similar to alpha, especially in superscripts.}
	The idea is that the $\alpha$ subbrick is non-zero only at the aggregation brick and corresponds to the first-stage variables of the original $4$-block $n$-fold IP.
	We shall ensure that this is true using lower and upper bounds.
	However, to subsequently ``assign'' the aggregated values to the first stage variables, we also need to modify the $B$ block, which, in turn, forces us to introduce new slack variables.
	This is the meaning of the $\gamma$ subbrick (slack variables for bricks $i \neq d$) and $\delta$ subbrick (slack variables for the $a$ brick).
	
	The right hand side $\hat{\veb}$ is simply $\hat{\veb}^0 = \veb^0$ and $\hat{\veb}^i = (\vezero ~ \veb^i)$ for $i \neq d$ and $\hat{\veb}^d = (\vezero ~ \vezero)$.
	We set the new lower and upper bounds $\hat{\vel}, \hat{\veu}$ as follows:
	\begin{description}
		\item[$\alpha$ subbrick] $\hat{\vel}^{i\alpha} = \hat{\veu}^{i\alpha} = \vezero$ for all $i \neq d$, and $\hat{\vel}^{d\alpha} = -\infty$, $\hat{\veu}^{d\alpha} = +\infty$.
		This ensures the $\alpha$ subbrick to be only possibly non-zero in brick $d$.
		\item[$\beta$ subbrick] $\hat{\vel}^{i\beta} = \vel^i$ and $\hat{\veu}^{i\beta} = \veu^i$ for all $i \neq d$ and $\hat{\vel}^{d\beta} = \hat{\veu}^{d\beta} = \vezero$.
		This ensures that the $\beta$ subbrick has the meaning of the original variables $\vex^i$ for all bricks except brick $a$, where we enforce $\hat{\vex}^{d\beta} = \vezero$.
		\item[$\gamma$ subbrick] $\hat{\vel}^{d\gamma} = \hat{\veu}^{d\gamma} = \vezero$ and $\hat{\vel}^{i\gamma} = -\infty$, $\hat{\veu}^{i\gamma} = +\infty$ for $i \neq d$. %\lin{should be $\hat{\vel}^{d\gamma} = \hat{\veu}^{d\gamma} = \vezero$ and $\hat{\vel}^{i\gamma} = -\infty$, $\hat{\veu}^{i\gamma} = +\infty$ ?}
		Without these variables and due to the structure of $\hat{A}$ and $\hat{B}$, we would be enforcing for \emph{each} brick $i \neq d$ that $\hat{\vex}^{0} = \hat{\vex}^{i\alpha}$, and since $\hat{\vex}^{i\alpha} = \vezero$ this would mean $\hat{\vex}^0 = \vezero$. The $\gamma$ subbrick relaxes this to $\hat{\vex}^0 = \hat{\vex}^{i\alpha} + \hat{\vex}^{i\gamma} = \vezero + \hat{\vex}^{i\gamma}$ which is trivially satisfiable considering our setting of the bounds $\hat{\vel}^{d\gamma}$ and $\hat{\veu}^{d\gamma}$.
		\item[$\delta$ subbrick] $\hat{\vel}^{i\delta} = \hat{\veu}^{i\delta} = \vezero$ for all $i \neq a$, and $\hat{\vel}^{d\delta} = -\infty$, $\hat{\veu}^{d\delta} = +\infty$, i.e., the same as for the $\alpha$ subbrick.
		Similarly to the $\gamma$ subbrick, without the $\delta$ subbrick we would be enforcing $B\hat{\vex}^0 + A\hat{\vex}^{d\beta} = \vezero$, however $\hat{\vex}^{d\beta} = \vezero$ so we would be forcing $B\hat{\vex}^0 = \vezero$, which is undesired.
		Thus we relax it to $B\hat{\vex}^0 + A\hat{\vex}^{d\beta} + \hat{\vex}^{d\delta} = B\hat{\vex}^0 + \hat{\vex}^{d\delta} = \vezero$ which is trivially satisfiable.
	\end{description}
	
	To show that the constructed system
	\begin{equation}
	H^0 \hat{\vex} = \hat{\veb}, \, \hat{\vel} \leq \hat{\vex} \leq \hat{\veu}, \, \hat{\vex} \in \Z^{\hat{t}_B + (n+1)\hat{t}_A} \label{eq:3block4block}
	\end{equation}
	is truly an extended formulation of $H\vex = \veb, \, \vel \leq \vex \leq \veu, \, \vex \in \Z^{t_C + nt_A}$, let us define a projection $\pi: \Z^{\hat{t}_B + (n+1)\hat{t}_A} \to \Z^{t_C + n t_A}$ which defines the mapping from the extended formulation to the original instance.
	Specifically, we let
	$$\pi((\hat{x}^0, \hat{x}^{1\alpha}, \hat{x}^{1\beta}, \hat{x}^{1\gamma}, \hat{x}^{1\delta}, \hat{x}^{2\alpha}, \dots, \hat{x}^{n\delta}, \hat{x}^{a\alpha}, \dots, \hat{x}^{a\delta}) = (\hat{x}^0, \hat{x}^{1\beta}, \hat{x}^{2\beta}, \dots, \hat{x}^{n\beta}) \enspace .$$
	By the arguments above we see that $\hat{\vex}^0$ has precisely the meaning of $\vex^0$ and $\hat{\vex}^{i \beta}$ for $i \neq a$ has the meaning of $\vex^i$.
	
	Finally, let us argue that this extended formulation is also feasibly kernel-preserving.
	Consider now a feasible solution $\hat{\vex}$ of~\eqref{eq:3block4block}, and consider any $\hat{\veg}$ in $\ker(H_0)$ such that $\hat{\vex} + \hat{\veg}$ is again feasible.
	We have to show that $H\pi(\hat{\vex}) = \vezero$ and $\vel \leq \vex + \pi(\hat{\vex}) \leq \veu$.
	The latter follows easily from the fact that $\hat{\vel} \leq \hat{\vex} + \hat{\veg} \leq \hat{\veu}$ and that $\pi(\hat{\vel}) = \vel$ and $\pi(\hat{\veu}) = \veu$.
	To see the former, consider separately first the upper row $(C~D~\cdots~D)$ of $H$ and after that the remaining rows.
	We have that
	\begin{equation*}
	C \hat{\vex}^{a\alpha} + D \hat{\vex}^{a \beta} + \vezero \hat{\vex}^{a \gamma} + \vezero \hat{\vex}^{a \delta} + \sum_{i=1}^n C \hat{\vex}^{i\alpha} + D \hat{\vex}^{i\beta} + \vezero \hat{\vex}^{i\gamma} + \vezero \hat{\vex}^{i\delta} = \vezero \enspace .
	\end{equation*}
	Omitting the zero blocks, we obtain
	\begin{equation*}
	C \hat{\vex}^{a\alpha} + D \hat{\vex}^{a \beta} + \sum_{i=1}^n C \hat{\vex}^{i\alpha} + D \hat{\vex}^{i\beta} = \vezero \enspace .
	\end{equation*}
	Recall that our bounds enforce $\hat{\vex}^{a \beta} = \vezero$ and $\hat{\vex}^{i \alpha} = \vezero$ for $i \neq a$, and finally $\hat{\vex}^0 = \hat{\vex}^{a \alpha}$, so plugging these in we obtain
	\begin{equation*}
	C \hat{\vex}^{0} + \sum_{i=1}^n D \hat{\vex}^{i\beta} = \vezero,
	\end{equation*}
	which by the definition of $\pi$ implies that $C \pi(\vex)^0 + \sum_{i=1}^n D \pi(\vex)^i = \vezero$ as desired.
	Now it is left to show that, for each $i \neq a$, $B \pi(\vex)^0 + A \pi(\vex)^i = \vezero$.
	We have that
	\begin{equation*}
	B \hat{\vex}^0 + \vezero \hat{\vex}^{i\alpha} + A \hat{\vex}^{i\beta} + \vezero \hat{\vex}^{i \gamma} + I \hat{\vex}^{i \delta} = \vezero \enspace .
	\end{equation*}
	Omitting the zero blocks and recalling that our bounds enforce $\hat{\vex}^{i \delta} = \vezero$ for each $i \neq a$, we have
	\begin{equation*}
	A \hat{\vex}^0 + A \hat{\vex}^{i \beta} = \vezero,
	\end{equation*}
	which, by definition of $\pi$, is what we wanted to show.
\end{proof}

\section{Proof of Theorem~\ref{thm:3-block-better-lower}}
\begin{reptheorem}{thm:3-block-better-lower}
	There exists an instance of 3-block $n$-fold IP with a matrix $H_0$ %\martin{insert parameters - in particular $t$} 
	where $\hat{s}_A=\hat{s}_B=2t, \hat{s}_D=t-1,\hat{t}_A=\hat{t}_D=4t$, $\hat{t}_B=t$ such that for every feasible solution $\vex$ and for every $\veg \in \ker_{\Z}(H_0)$ which is feasible with respect to $\vex$, it holds that $\|\veg\|_\infty = \Omega(n^{t-1})$, and in particular, $\|\veg^0\|_\infty = \Omega(n^{t-1})$. %\lin{where $\hat{s}_A=\hat{s}_B=2t, \hat{s}_D=t-1,\hat{t}_A=\hat{t}_D=4t$, $\hat{t}_B=t$ by plugging Theorem~\ref{thm:4-block-lower}.}
\end{reptheorem}
	\begin{proof}
		Consider the instance constructed in Theorem~\ref{thm:4-block-lower} with $H$ being the 4-block $n$-fold matrix from the proof. 
		Apply Theorem~\ref{thm:4block-3block} to this instance to obtain its feasible kernel-preserving extended formulation, which is a 3-block $n$-fold IP, and consider any $\hat{\vex}$ which is a feasible solution for it.
		Denote by $\pi$ the projection from the proof of Theorem~\ref{thm:4block-3block}.
		
		Now let $\hat{\veg} \in \ker_{\Z}(H_0) \subseteq \ker(H_0)$ be feasible with respect to $\hat{\vex}$.
		By Definition~\ref{def:fkp-ef}, we have $\veg = \pi(\hat{\veg}) \in \ker_{\Z}(H)$, and by Theorem~\ref{thm:4-block-lower} we have $\|\veg\|_\infty = \Omega(n^{t-1})$ and in particular $\|\veg^0\|_\infty = \Omega(n^{t-1})$.
		By the definition of $\pi$ these lower bounds transfer to $\hat{\veg}$ and $\hat{\veg}^0$.
		%\martininline{Maybe denote $\hat{\veg}$ also in the statement -- consistently use $\hat{\bullet}$ for the 3-block extended formulation? But then it wouldn't be consistent with the other lower bounds.}
	\end{proof}

\section{Proof of Theorem~\ref{thm:alg-3-block} and Theorem~\ref{thm:alg-4-block}}
\begin{reptheorem}{thm:alg-3-block}
	There exists an algorithm for 3-block $n$-fold IP that runs in $\min\{\OFPT(n^{s_{\scalebox{.5}{D}}t_B+3}\log^3 n), \OFPT(n^{(t_A^2+1)t_B+3}\log^3 n)\}$ time.
\end{reptheorem}
\begin{proof}
Using the idea of approximate Graver-best oracle introduced by Altmanová et al.~\cite{altmanova2018evaluating} and implicitly by Eisenbrand et al.~\cite{eisenbrand2018faster}, it suffices for us to solve the following IP for each fixed value $\rho_0=2^0,2^1,2^2,\cdots$:
	\begin{eqnarray*}\label{eq:graver-best-aug-3-block}
		\min\{\vew\cdot \vex: H_0 \vex=0, \vel\le \vex_0+\rho_0\vex\le \veu, \vex\in\Z^m, \|\vex\|_{\infty}\le \min\{\OFPT(n^{s_c}),\OFPT(n^{t_A^2+1})\}\}
	\end{eqnarray*}
	Let $\vex_*$ be the optimal solution. Given that $\|\vex_*\|_{\infty}\le \OFPT(n^{t_A^2+1})$, we can guess $\vex^0_*$ and there are $\OFPT(n^{(t_A^2+1)t_B})$ different possibilities. For each guess, say, $\vex_*^0=\vev$, we solve the following problem:
	\begin{eqnarray*}\label{eq:graver-best-aug-3-block-1}
		\min\{\vew\cdot \vex: H_0 \vex=0, \vel\le \vex_0+\rho_0\vex\le \veu, \vex\in\Z^m, \vex^0=\vev\}
	\end{eqnarray*}
	By fixing $\vex^0$, the above problem becomes exactly an $n$-fold IP, which can be solved efficiently in $\OFPT(n^2\log n^2)$ time~\cite{eisenbrand2018faster}. Notice that $\rho_0$ may take $\OFPT(n\log n)$ distinct values, the overall running time is $\min\{\OFPT(n^{s_{\scalebox{.5}{D}}t_B+3})\log^3 n, \OFPT(n^{(t_A^2+1)t_B+3}\log^3 n)\}$.
\end{proof}
Theorem~\ref{thm:alg-4-block} can be proved by plugging in the upper bound of 4-block $n$-fold IP and proceed with the same argument.

\bibliographystyle{plain}
\bibliography{schedule-tree}

\end{document}